\tikzset{
  cir/.style = {circle,draw,fill,inner sep=.7pt},
  circ/.style = {circle,draw,fill,inner sep=1.3pt},
  circg/.style = {circle,draw=lightgray,fill=lightgray,inner sep=1.3pt},
  circr/.style = {circle,draw=red,fill=red,inner sep=1.3pt},
  invisible/.style = {circle,draw=none,inner sep=0pt,font=\tiny},
  nonedge/.style={decorate,decoration={snake,amplitude=.3mm,segment length=1mm},draw}
}
\newcommand{\vol}{\mathrm{vol}}
\newcommand{\rk}{\mathrm{rk}}
\newcommand{\tw}{\mathrm{tw}}
\newcommand{\tin}{\mathrm{tree}\textnormal{-}\alpha}
\title{Polynomial-Time Approximation Schemes for Independent Packing Problems on Fractionally Tree-Independence-Number-Fragile Graphs} %TODO Please add
\titlerunning{PTASes for Independent Packing Problems on Fractionally $\tin$-Fragile Graphs} %TODO optional, please use if title is longer than one line
\author{Esther {Galby}}{Hamburg University of Technology, Institute for Algorithms and Complexity,
Hamburg, Germany}{esther.galby@tuhh.de}{}{}%TODO mandatory, please use full name; only 1 author per \author macro; first two parameters are mandatory, other parameters can be empty. Please provide at least the name of the affiliation and the country. The full address is optional. Use additional curly braces to indicate the correct name splitting when the last name consists of multiple name parts.
\author{Andrea {Munaro}}{Department of Mathematical, Physical and Computer Sciences, University of Parma, Parma, Italy}{andrea.munaro@unipr.it}{https://orcid.org/0000-0003-1509-8832}{}
\author{Shizhou {Yang}}{School of Mathematics and Physics, Queen’s University Belfast, Belfast, UK}{syang22@qub.ac.uk}{}{}
\authorrunning{E. Galby, A. Munaro, S. Yang} %TODO mandatory. First: Use abbreviated first/middle names. Second (only in severe cases): Use first author plus 'et al.'
\keywords{Independent packings, intersection graphs, polynomial-time approximation schemes, tree-independence number} %TODO mandatory; please add comma-separated list of keywords
\begin{document}

\maketitle

%TODO mandatory: add short abstract of the document
\begin{abstract}
We investigate a relaxation of the notion of treewidth-fragility, namely tree-independence-number-fragility. In particular, we obtain polynomial-time approximation schemes for independent packing problems on fractionally tree-independence-number-fragile graph classes. Our approach unifies and extends several known polynomial-time approximation schemes on seemingly unrelated graph classes, such as classes of intersection graphs of fat objects in a fixed dimension or proper minor-closed classes. We also study the related notion of layered tree-independence number, a relaxation of layered treewidth.
\end{abstract}

\section{Introduction}
\label{sec:introA}

Many optimization problems involving collections of geometric objects in the $d$-dimensional space are known to admit a polynomial-time approximation scheme (PTAS). Arguably the earliest example of such behavior is the problem of finding the maximum number of pairwise non-intersecting disks or squares in a collection of unit disks or unit squares, respectively \cite{HM85}. Such subcollection is usually called an \textit{independent packing}. This result was later extended to collections of arbitrary disks and squares and, more generally, fat objects \cite{Cha03,EJS05}. The reason for the abundance of approximation schemes for geometric problems is that shifting and layering techniques can be used to reduce the problem to small subproblems that can be solved by dynamic programming. In fact, the same phenomenon occurs for graph problems, as evidenced by the seminal work of Baker \cite{Bak94} on approximation schemes for local problems, such as \textsc{Independent Set}, on planar graphs and its generalizations first to apex-minor-free graphs \cite{Epp00} and further to graphs embeddable on a surface of bounded genus with a bounded number of crossings per edge \cite{GB07}. The notion of intersection graph allows to jump from the geometric world to the graph-theoretic one. Given a collection $\mathcal{O}$ of geometric objects in $\mathbb{R}^d$, we can consider its \textit{intersection graph}, the graph whose vertices are the objects in $\mathcal{O}$ and where two vertices $O_i, O_j \in \mathcal{O}$ are adjacent if and only if $O_i \cap O_j \neq \varnothing$. An independent packing in $\mathcal{O}$ is then nothing but an independent set in the corresponding intersection graph. Notice that intersection graphs of unit disks or squares are not minor-closed, as they contain arbitrarily large cliques. Our motivating question is the following:
\begin{center}
\textit{Is there any underlying graph-theoretical reason for the existence of the seemingly unrelated PTASes for \textsc{Independent Set} mentioned above?} 
\end{center}
We provide a positive answer to this question that also allows us to further generalize to a family of independent packing problems. We also remark that the similar question of whether there is a general notion under which PTASes using Baker's technique can be obtained was already asked in \cite{GB07}. 

Baker's layering technique relies on a form of decomposition theorem for planar graphs that can be roughly summarized as follows. Given a planar graph $G$ and $k \in \mathbb{N}$, the vertex set of $G$ can be partitioned into $k+1$ possibly empty sets in such a way that deleting any part induces a graph of treewidth at most $O(k)$ in $G$. Moreover, such a partition together with tree decompositions of width at most $O(k)$ of the respective graphs can be found in polynomial time. A statement of this form is typically referred to as a \textit{Vertex Decomposition Theorem} (VDT) \cite{PSZ19}. VDTs are known to exist in planar graphs \cite{Bak94}, graphs of bounded-genus and apex-minor-free graphs \cite{Epp00}, and $H$-minor-free graphs \cite{DHK05,DDO04}. However, their existence is in general something too strong to ask for, as is the case of intersection graphs of unit disks or squares and hence fat objects in general. There are then two natural ways in which one can try to relax the notion of VDT. First, we can consider an approximate partition of the vertex set, where a vertex can belong to some constant number of sets. Second, we can look for a width parameter less restrictive than treewidth.

Dvo\v{r}\'{a}k \cite{Dvo16} pursued the first direction and introduced the notion of \textit{efficient fractional treewidth-fragility}. We state here an equivalent formulation from \cite{DL21}. A class of graphs $\mathcal{G}$ is \textit{efficiently fractionally treewidth-fragile} if there exists a function $f\colon \mathbb{N} \rightarrow \mathbb{N}$ and an algorithm that, for every $k \in \mathbb{N}$ and $G \in \mathcal{G}$, returns in time $\mathsf{poly}(|V(G)|)$ a collection of subsets $X_1, X_2,\ldots, X_m \subseteq V(G)$ such that each vertex of $G$ belongs to at most $m/k$ of the subsets and moreover, for $i = 1,\ldots, m$, the algorithm also returns a tree decomposition of $G - X_i$ of width at most $f(k)$. Several graph classes are known to be efficiently fractionally treewidth-fragile. In fact, a hereditary class $\mathcal{G}$ is efficiently fractionally treewidth-fragile in each of the following cases (see, e.g., \cite{DL21}): $\mathcal{G}$ has sublinear separators and bounded maximum degree, $\mathcal{G}$ is proper minor-closed, or $\mathcal{G}$ consists of intersection graphs of convex objects with bounded aspect ratio in $\mathbb{R}^d$ (for fixed $d$) and the graphs in $\mathcal{G}$ have bounded clique number. Dvo\v{r}\'{a}k \cite{Dvo16} showed that \textsc{Independent Set} admits a PTAS on every efficiently fractionally treewidth-fragile graph class. This result was later extended \cite{Dvo22,DL21} to a framework of maximization problems including, for example, \textsc{Max Weight Distance-$d$ Independent Set}, \textsc{Max Weight Induced Forest} and \textsc{Max Weight Induced Matching}. However, the notion of fractional treewidth-fragility falls short of capturing classes such as unit disk graphs, as it implies the existence of sublinear separators \cite{Dvo16}.  

One can then try to pursue the second direction mentioned above and further relax the notion of efficient fractional fragility by considering width parameters \textit{more powerful} than treewidth (i.e., bounded on a larger class of graphs) and algorithmically useful. A natural candidate is the recently introduced \textit{tree-independence number} \cite{DaMS22}, a width parameter defined in terms of tree decompositions which is more powerful than treewidth (see \Cref{sec:layeredA}). Several algorithmic applications of boundedness of tree-independence number have been provided, most notably polynomial-time solvability of \textsc{Max Weight Independent Packing} \cite{DaMS22} (see \Cref{sec:ptasesA} for the definition), a common generalization of \textsc{Max Weight Independent Set} and \textsc{Max Weight Induced Matching}, and of its distance-$d$ version, for $d$ even \cite{MR22}. Investigating the notion of efficient fractional tree-independence-number-fragility ($\tin$-fragility for short) was recently suggested in a talk by Dvo\v{r}\'{a}k \cite{GWP22}, where it was stated that, using an argument from \cite{DGLTT22}, it is possible to show that intersection graphs of balls and cubes in $\mathbb{R}^d$ are fractionally $\tin$-fragile.   

A successful notion related to fractional treewidth-fragility is the layered treewidth of a graph \cite{DMW17}. Despite currently lacking any direct algorithmic application, it proved useful especially in the context of coloring problems (we refer to \cite{DEMWW22} for additional references). We just mention that classes of bounded layered treewidth include planar graphs and, more generally, apex-minor-free graphs and graphs embeddable on a surface of bounded genus with a bounded number of crossings per edge, amongst others \cite{DEW17}. It can be shown that bounded layered treewidth implies fractional treewidth-fragility (see \Cref{sec:fragilityA}). Layered treewidth is also related to local treewidth, a notion first introduced by Eppstein \cite{Epp00}, and in fact, on proper minor-closed classes, having bounded layered treewidth coincides with having bounded local treewidth (see, e.g., \cite{DEW17}). 

\subsection{Our results}

In this paper, we investigate the notion of efficient fractional $\tin$-fragility and show that it answers our motivating question in the positive, thus allowing to unify and extend several known results. Our main result can be summarized as follows and will be proved in \Cref{sec:fragilityA} and \Cref{sec:ptasesA}.

\begin{theorem}\label{mainA} \textsc{Max Weight Independent Packing} admits a PTAS on every efficiently fractionally $\tin$-fragile class. Moreover, the class of intersection graphs of fat objects in $\mathbb{R}^d$, for fixed $d$, is efficiently fractionally $\tin$-fragile\footnote{Here we use a definition of fatness slightly generalizing that of Chan \cite{Cha03} (see \Cref{fatA}).}. 
\end{theorem}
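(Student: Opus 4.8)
The two parts are essentially independent, so I would prove them separately. For the PTAS, the plan is the standard Baker-style argument: combine the fragility decomposition with the exact polynomial-time algorithm for \textsc{Max Weight Independent Packing} on graphs of bounded tree-independence number from \cite{DaMS22}. Given an instance $(G,\mathcal{H},w)$ with $G$ in an efficiently fractionally $\tin$-fragile class, where every member of $\mathcal{H}$ has at most $r$ vertices (as in the definition in \Cref{sec:ptasesA}), and given $\varepsilon>0$, set $k:=\lceil r/\varepsilon\rceil$ and run the fragility algorithm to obtain in polynomial time sets $X_1,\dots,X_m\subseteq V(G)$ with every vertex in at most $m/k$ of them, together with tree decompositions of $G-X_i$ of independence number at most $f(k)$. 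For each $i$ we delete from $\mathcal{H}$ all members meeting $X_i$, solve \textsc{Max Weight Independent Packing} optimally on $G-X_i$ via \cite{DaMS22}, and output the best packing found. For the analysis fix an optimal packing $\mathcal{F}^{\ast}$: a member $H\in\mathcal{F}^{\ast}$ is lost in round $i$ only if $V(H)\cap X_i\neq\varnothing$, which happens for at most $|V(H)|\cdot m/k\le rm/k$ values of $i$, so averaging over $i$ shows that some round keeps a sub-packing of $\mathcal{F}^{\ast}$ of weight at least $(1-r/k)\,w(\mathcal{F}^{\ast})\ge(1-\varepsilon)\,w(\mathcal{F}^{\ast})$. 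The running time is polynomial for fixed $\varepsilon$. I do not expect any real difficulty here beyond matching the bookkeeping to the exact definition of the problem.

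For the second part I would use three ingredients. The first is a volume-packing lemma that follows directly from the definition of fatness in \Cref{fatA}: if $O$ is an $\alpha$-fat object and $Q$ an axis-parallel cube of side at most $\mathrm{diam}(O)$ with centre in $O$, then $\vol(O\cap Q)\ge\alpha\,\vol(Q)$; a disjointness-of-volumes argument inside a bounded enlargement of a cube $C$ of side $s$ then bounds the number of pairwise non-intersecting $\alpha$-fat objects of diameter at least $s'$ meeting $C$ by $O_{\alpha,d}((s/s')^{d})$, and in particular by $O_{\alpha,d}(1)$ when $s'\ge s$. The second is a shifting scheme. Put $\sigma(O):=\lfloor\log_2\mathrm{diam}(O)\rfloor$, fix a dyadic hierarchical grid with a single offset $\phi\in\mathbb{R}^{d}$ (so the level-$(\ell+1)$ grid coarsens the level-$\ell$ grid of side $2^{\ell}$), fix $c_0:=\Theta(\log(dk))$, and let $X_\phi$ be the set of objects that do not fit inside a single grid cell at level $\sigma(O)+c_0$; for a random $\phi$, an object of scale $\sigma$ lies in $X_\phi$ with probability $O(d\,2^{-c_0})\le 1/k$, since it crosses a level-$(\sigma+c_0)$ hyperplane with probability at most $\mathrm{diam}(O)/2^{\sigma+c_0}$ per coordinate. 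The third is the tree decomposition of $G-X_\phi$: each surviving object $O$ sits inside a unique cell $C_O$ at some level $\ell(O)\le\sigma(O)+c_0$, these cells together with all their ancestors in the quadtree form a rooted tree, and we take the bag at a cell $C$ of level $L$ to be the surviving objects $O$ with $C\subseteq C_O$ and $O\cap C\neq\varnothing$. The tree-decomposition axioms follow from routine checks, using that two adjacent objects share a point and hence are assigned to nested cells; and every object in such a bag has scale at least $L-c_0$, so the volume-packing lemma (big objects contributing $O_{\alpha,d}(1)$ and each of the $O(c_0)$ remaining scales contributing $O_{\alpha,d}(2^{c_0 d})$) bounds the independence number of every bag by a function $f(k)$ of $k$ alone.

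The step I expect to be the main obstacle is turning the randomized offset $\phi$ into a deterministic, \emph{polynomial-size} family $X_{\phi^{(1)}},\dots,X_{\phi^{(m)}}$ with every object in at most $m/k$ of the sets: the scales present may span a range that is polynomial in $|V(G)|$, and no single bounded progression of offsets is equidistributed modulo every relevant period, so a naive enumeration is exponentially large. The way to rescue this is to observe that — after slightly strengthening the deletion rule so that it is decidable at resolution $2^{\sigma(O)}$ — membership of a scale-$\sigma$ object in $X_\phi$ depends only on $O(\log(dk))$ bits of $\phi$ per coordinate, namely those in positions $\sigma,\dots,\sigma+c_0-1$; consequently it suffices to draw $\phi$ from an $\varepsilon$-almost-$O(d\log(dk))$-wise independent (small-bias) sample space over the $\mathsf{poly}(|V(G)|)$ many relevant bit positions, which has polynomial size and makes every such bit-window of $\phi$ close to uniform, yielding the required fragility bound. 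Finally, passing from balls and cubes to the more general fatness of \Cref{fatA} affects only the volume-packing lemma, where one checks that the enlargement of $C$ used is large enough for the fatness inequality to apply.
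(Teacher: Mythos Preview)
Your proof of the first part (the PTAS) is correct and matches the paper's argument; you use the complementary ``delete $X_i$'' formulation of fragility while the paper uses the ``cover by $C_i$'' formulation, but the averaging/pigeonhole step is the same.

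For the second part your scheme is sound, but the paper takes a cleaner route that entirely avoids the derandomization obstacle you flag. You use a single dyadic shift $\phi\in\mathbb{R}^d$, observe that the deletion event for a scale-$\sigma$ object depends on a window of $O(\log(dk))$ bits of $\phi$, and then reach for an almost $k$-wise independent sample space to get a polynomial-size family. The paper (following \cite{DGLTT22}) instead works with base-$r$ grids and, crucially, uses a \emph{different} offset at each scale: at scale $i$ the offset in coordinate $j$ is $y_j\sum_{k=i}^{k_0+1}(1/r)^k$, with $y\in\{0,\dots,r/2-1\}^d$ a single integer vector. The geometric-sum form guarantees that the level-$i'$ grid is contained in the level-$i$ grid whenever $i'<i$ (so the quadtree structure is preserved), while the change of the $j$-th offset between consecutive values of $y_j$ lies strictly between $(1/r)^i$ and $2(1/r)^i$, so each object of rank $i$ is cut for at most one value of $y_j$ per coordinate. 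Hence the cover has only $(r/2)^d$ members --- a constant depending on $\varepsilon$ and $d$, not merely polynomial in $n$ --- and no pseudorandomness is needed. This buys both simplicity and a smaller cover; your approach buys nothing extra, though it would work.

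Two minor remarks on your Part~2 sketch. First, your ``volume-packing lemma'' presupposes a volume-based fatness notion, but the definition in \Cref{fatA} is already stated as a direct packing bound (at most $c$ pairwise disjoint objects of size $\ge r$ can meet a size-$r$ box), so no volume argument is needed; simply tile the relevant cube by sub-cubes of the right size and apply the definition. Second, your quadtree with ``all ancestors'' may have size depending on the scale range; the paper creates a node only for pairs $(i,m)$ with $A^i(y,m)\neq\varnothing$ and links each such node to its nearest nonempty ancestor, keeping the tree to at most $|V(G)|+1$ nodes.
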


The message of \Cref{mainA} is that a doubly-relaxed version of a VDT suffices for algorithmic applications and is general enough to hold for several interesting graph classes. \Cref{mainA} cannot be improved to guarantee an EPTAS, unless $\mathsf{FPT} = \mathsf{W}[1]$. Indeed, Marx \cite{Mar05} showed that \textsc{Independent Set} remains $\mathsf{W}[1]$-complete on intersection graphs of unit disks and unit squares. The natural trade-off in extending the tractable families with respect to approximation is that fewer problems will admit a PTAS. In our case this is exemplified by the minimization problem \textsc{Feedback Vertex Set}, which admits no PTAS, unless $\mathsf{P} = \mathsf{NP}$, on unit ball graphs in $\mathbb{R}^3$ \cite{FLS12} but admits an EPTAS on disk graphs in $\mathbb{R}^2$ \cite{LPS23}.  

In \Cref{sec:fragilityA}, we also show that fractionally $\tin$-fragile classes have bounded biclique number, where the \textit{biclique number} of a graph $G$ is the maximum $n \in \mathbb{N}$ such that the complete bipartite graph $K_{n,n}$ is an induced subgraph of $G$. This shows in particular that, unsurprisingly, the notion of fractional $\tin$-fragility falls short of capturing intersection graphs of rectangles in the plane. Whether \textsc{Independent Set} admits a PTAS on these graphs remains one of the major open problems in the area (see, e.g., \cite{GKM22}). We also show that the absence of large bicliques is not sufficient for guaranteeing fractional $\tin$-fragility: $n$-dimensional grids of width $n$ are $K_{2,3}$-free but not fractionally $\tin$-fragile. 

We begin our study of fractional $\tin$-fragility by introducing, in \Cref{sec:layeredA}, a subclass of fractionally $\tin$-fragile graphs, namely the class of graphs with bounded layered tree-independence number. We obtain the notion of \textit{layered tree-independence number} by relaxing the successful notion of layered treewidth and show that, besides graphs of bounded layered treewidth, classes of intersection graphs of unit disks in $\mathbb{R}^2$ and of paths with bounded horizontal part on a grid have bounded layered tree-independence number. Moreover, we observe that, for minor-closed classes, having bounded layered tree-independence number is equivalent to having bounded layered treewidth, thus extending a characterization of bounded layered treewidth stated in \cite{DEW17}. 

We then consider the behavior of layered tree-independence number with respect to graph powers. We show that odd powers of graphs of bounded layered tree-independence number have bounded layered tree-independence number and that this does not extend to even powers. Combined with \Cref{mainA}, this gives the following result which applies, for example, to unit disk graphs and cannot be extended to odd $d \in \mathbb{N}$ (see \Cref{sec:distanceA}).

\begin{theorem}\label{distancelayeredA} For a fixed positive even integer $d$, the distance-$d$ version of \textsc{Max Weight Independent Packing} admits a PTAS on every class of bounded layered tree-independence number, provided that a tree decomposition and a layering witnessing small layered tree-independence number can be computed efficiently.
\end{theorem}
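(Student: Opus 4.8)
The plan is to derive \Cref{distancelayeredA} from \Cref{mainA} by showing that, for even $d$, the class of $d$-th powers of graphs of bounded layered tree-independence number is itself efficiently fractionally $\tin$-fragile, and that a suitable decomposition can be computed efficiently given the decomposition and layering of the base graph. The starting observation is that the distance-$d$ version of \textsc{Max Weight Independent Packing} on a graph $G$ is, up to the obvious translation of the ``model'' graphs, an instance of ordinary \textsc{Max Weight Independent Packing} on the power $G^d$ (this is presumably spelled out in \Cref{sec:ptasesA}); so it suffices to run the PTAS of \Cref{mainA} on $G^d$, which in turn requires $G^d$ to belong to an efficiently fractionally $\tin$-fragile class.

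First I would invoke the structural result announced in the excerpt (``odd powers of graphs of bounded layered tree-independence number have bounded layered tree-independence number'', proved in \Cref{sec:distanceA}): writing $d = 2r$ for a positive integer $r$, we have $G^{d} = (G^2)^{r}$ if $r$ is odd, and more generally $G^d = (G^{d/2^s})^{2^s}$; the cleaner route is $G^{2r} = (G^r)^2$. Hmm — the key point the authors will use is rather that $G^d$ for even $d$ has bounded \emph{layered} tree-independence number whenever $G$ does. I would therefore proceed in two steps: (1) from the layering $(V_i)$ of $G$ and the bound on layered tree-independence number, construct explicitly a layering of $G^d$ (a natural candidate: group the layers of $G$ into consecutive blocks of size $d$, i.e.\ put into layer $j$ of $G^d$ all vertices in layers $V_{jd}, \dots, V_{(j+1)d-1}$ of $G$) together with a tree decomposition of $G^d$ of bounded independence number restricted to each block, using that a path of length $\le d$ in $G$ between two vertices of the same $G^d$-block stays within a bounded band of $G$-layers; (2) invoke the implication, established in \Cref{sec:fragilityA}, that bounded layered tree-independence number implies efficient fractional $\tin$-fragility, taking care that the witnessing decomposition/layering we produced in step (1) is computable in polynomial time from the given decomposition/layering of $G$ (this is where the hypothesis ``provided that a tree decomposition and a layering witnessing small layered tree-independence number can be computed efficiently'' is used).

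Having placed $G^d$ in an efficiently fractionally $\tin$-fragile class with a constructive witness, I would then simply apply \Cref{mainA} to $G^d$ with the target ratio $1-\varepsilon$, and translate the resulting independent packing in $G^d$ back to a distance-$d$ independent packing in $G$ of the same weight, observing that the translation preserves both feasibility and weight. The running time is polynomial since $|V(G^d)| = |V(G)|$, powers are computable in polynomial time, and every intermediate object (the new layering, the new tree decomposition) was produced in polynomial time.

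The main obstacle I anticipate is step (1): showing that $G^d$ really has bounded layered tree-independence number with an \emph{efficiently computable} witness, and in particular controlling how the independence number of a bag can blow up when we pass to the power and when we coarsen the layering into blocks of size $d$. One must check that two vertices at $G^d$-distance contributing to the same bag cannot be too far apart in the $G$-layering (the $\le d$ bound on shortest paths is what confines them to $O(d)$ consecutive $G$-layers), so that an independent set in $G^d$ inside such a bag maps to an independent set in a bounded-band subgraph of $G$ whose layered-tree-independence-number bound then applies — up to a multiplicative loss depending only on $d$. The parity hypothesis (even $d$) is essential here and, as the excerpt flags, the analogous bound fails for odd $d$; I would make sure the argument visibly breaks in that case (e.g.\ two nonadjacent vertices of $G^d$ that are nonetheless forced together because their $G$-paths of odd length can only ``meet in the middle'' in a way that does not certify nonadjacency), which is consistent with the stated non-extendability.
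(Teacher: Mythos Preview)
Your reduction to powers is off by one, and this is not a harmless shift: it is precisely what makes the parity argument go through. Two subgraphs of $G$ are at distance at least $d$ in $G$ if and only if they form an independent packing in $G^{d-1}$, not in $G^{d}$ (in $G^{d}$ the corresponding condition is distance at least $d+1$). The paper states this explicitly in the proof of \Cref{PTASltA}, citing \cite{MR22}. Consequently, for even $d$ one needs to control $G^{d-1}$, an \emph{odd} power, and this is exactly what \Cref{layeredtreealgoA} provides: if $G$ has layered tree-independence number $k$, then $G^{1+2d'}$ has layered tree-independence number at most $(1+4d')k$, with an efficiently computable witnessing decomposition and layering. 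The paper then pipes this through \Cref{layeredtofragileA} (layered $\tin$ $\Rightarrow$ efficient fractional $\tin$-fragility) and \Cref{PTASffA} (the PTAS for fractionally $\tin$-fragile classes).

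Your plan instead tries to establish that $G^{d}$ for even $d$ has bounded layered tree-independence number. The paper proves this is false: the lemma immediately after \Cref{layeredtreealgoA} exhibits chordal graphs $G$ (layered tree-independence number $1$) whose even powers contain arbitrarily large $K_{n,n}$ and hence have unbounded layered tree-independence number by \Cref{layeredKnnA}. So the ``main obstacle'' you anticipate in step~(1) is not merely delicate; the claimed bound is simply wrong for even powers, and any argument you write there will break. Your intuition about why the theorem should fail for odd $d$ is therefore inverted: the structural power result holds for odd exponents and fails for even ones, and it is the $d-1$ in the reduction that aligns even $d$ with the good (odd-power) case. Once you correct the exponent, the rest of your outline matches the paper's proof.
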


Finally, we show that the approach to PTASes through tree-independence number is competitive in terms of running time for some classes of intersection graphs. Specifically, in \Cref{sec:improvedA}, we obtain PTASes for \textsc{Max Weight Independent Set} for intersection graphs of families of unit disks, unit-height rectangles, and paths with bounded horizontal part on a grid, which improve results from \cite{Mat00,Cha04,BBCGP20}, respectively, mentioned in the next section.

We believe that the notion of fractional $\tin$-fragility can find further applications in the design of PTASes. In fact, it would be interesting to obtain an algorithmic meta-theorem similar to those for fractionally treewidth-fragile classes \cite{DL21,Dvo22} and classes of bounded tree-independence number \cite{MR22}. Although our interest is in approximation schemes, we notice en passant that the observations from \Cref{sec:layeredA} lead to a subexponential-time algorithm for the distance-$d$ version of \textsc{Max Weight Independent Packing}, for $d$ even, on unit disk graphs. We finally remark that all our PTASes for intersection graphs of geometric objects are not robust i.e., they require a geometric realization to be part of the input. 

\subsection{Other related work}

\hspace{\parindent}\textbf{Disk graphs.} Very recently, Lokshtanov et al. \cite{LPS23} established a framework for designing EPTASes for a broad class of minimization problems (specifically, vertex-deletion problems) on disk graphs including, among others, \textsc{Feedback Vertex Set} and $d$-\textsc{Bounded Degree Vertex Deletion}. Previous sporadic PTASes on this class were known only for \textsc{Vertex Cover} \cite{EJS05,vLee06}, \textsc{Dominating Set} \cite{GB10}, \textsc{Independent Set} \cite{Cha03,EJS05} and \textsc{Max Clique} \cite{BBB21}. \Cref{mainA} adds several maximization problems to this list (see \Cref{sec:ptasesA}).

\textbf{Unit disk graphs.} Unit disk graphs are arguably one of the most well-studied graph classes in computational geometry, as they naturally model several real-world problems. Great attention has been devoted to approximation algorithms for \textsc{Max Weight Independent Set} on this class (see, e.g., \cite{HMR98,NHK04,vLee05}). To the best of our knowledge, the fastest known PTAS is a $(1-1/k)$-approximation algorithm with running time $O(k n^{4\lceil \frac{2(k-1)}{\sqrt{3}}\rceil})$ \cite{Mat00}. We also remark that a special type of Decomposition Theorem was recently shown to hold for the class of unit disk graphs. A Contraction Decomposition Theorem (CDT) is a statement of the following form: given a graph $G$, for any $p \in \mathbb{N}$, one can partition the edge set of $G$ into $E_1,\ldots, E_p$ such that contracting the edges in each $E_i$ in $G$ yields a graph of treewidth at most $f(p)$, for some function $f\colon \mathbb{N} \rightarrow \mathbb{N}$. CDTs are useful in designing efficient approximation and parameterized algorithms and are known to hold for classes such as graphs of bounded genus \cite{DHM10} and unit disk graphs \cite{BLLSX22}. Since these classes are efficiently fractionally $\tin$-fragile, our results can be seen as providing a different type of relaxed decomposition theorems for them.

\textbf{Intersection graphs of unit-height rectangles.} As observed by Agarwal et al. \cite{AKS98}, this class of graphs arises naturally as a model for the problem of labeling maps with labels of the same font size. Improving on \cite{HM85}, they obtained a $(1 - 1/k)$-approximation algorithm for \textsc{Max Weight Independent Set} on this class with running time $O(n^{2k-1})$. Chan \cite{Cha04} provided a $(1-1/k)$-approximation algorithm with running time $O(n^k)$.

\textbf{Intersection graphs of paths on a grid.} Asinowski et al. \cite{asinowski} introduced the class of \textit{Vertex intersection graphs of Paths on a Grid} (\textit{VPG graphs} for short). A graph $G$ is a \textit{VPG graph} if there exists a collection $\mathcal{P}$ of paths on a grid $\mathcal{G}$ such that $\mathcal{P}$ is in one-to-one correspondence with $V(G)$ and two vertices are adjacent in $G$ if and only if the corresponding paths intersect. It is not difficult to see that this class coincides with the well-known class of string graphs. If every path in $\mathcal{P}$ has at most $k$ \textit{bends} i.e., $90$ degrees turns at a grid-point, the graph is a \textit{$B_k$-VPG graph}. Golumbic et al. \cite{GLS09} introduced the class of \textit{Edge intersection graphs of Paths on a Grid} (\textit{EPG graphs} for short) which is defined similarly to VPG, except that two vertices are adjacent if and only if the corresponding paths share a grid-edge. It turns out that every graph is EPG \cite{GLS09} and $B_{k}$-EPG graphs have been defined similarly to $B_{k}$-VPG graphs. Approximation algorithms for \textsc{Independent Set} on VPG and EPG graphs have been deeply investigated, especially when the number of bends is a small constant (see, e.g., \cite{BD17,FP11,LMS15,Meh17}). It is an open problem whether \textsc{Independent Set} admits a PTAS on $B_1$-VPG graphs \cite{Meh17}. Concerning EPG graphs, Bessy et al. \cite{BBCGP20} showed that the problem admits no PTAS on $B_{1}$-EPG graphs, unless $\mathsf{P} = \mathsf{NP}$, even if each path has its vertical segment or its horizontal segment of length at most $1$. On the other hand, they provided a PTAS for \textsc{Independent Set} on $B_1$-EPG graphs where the length of the horizontal part\footnote{The \textit{horizontal part} of a path is the interval corresponding to the projection of the path onto the $x$-axis.} of each path is at most a constant $c$ with running time $O^{*}(n^{\frac{3c}{\varepsilon}})$.

%%%%%%%%%%%%%%%%%%%%%%%%%%%%%%%%%%%%%%%%%%%%%%%%%%%%%%

\section{Preliminaries}
\label{sec:prelimA}

We consider only finite simple graphs. If $G'$ is a subgraph of $G$ and $G'$ contains all the edges of $G$ with both endpoints in $V(G')$, then $G'$ is an \textit{induced subgraph} of $G$ and we write $G' = G[V(G')]$. For a vertex $v \in V(G)$ and $r \in \mathbb{N}$, the $r$-\textit{closed neighborhood} $N^{r}_{G}[v]$ is the set of vertices at distance at most $r$ from $v$ in $G$. The \textit{degree} $d_{G}(v)$ of a vertex $v \in V(G)$ is the number of edges incident to $v$ in $G$. The \textit{maximum degree} $\Delta(G)$ of $G$ is the quantity $\max\left\{d_{G}(v): v \in V\right\}$. Given a graph $G = (V, E)$ and $V' \subseteq V$, the operation of \textit{deleting the set of vertices $V'$} from $G$ results in the graph $G - V' = G[V\setminus V']$. A graph is \textit{$Z$-free} if it does not contain induced subgraphs isomorphic to graphs in a set $Z$. The complete bipartite graph with parts of sizes $r$ and $s$ is denoted by $K_{r,s}$. An \textit{independent set} of a graph is a set of pairwise non-adjacent vertices.  The maximum size of an independent set of $G$ is denoted by $\alpha(G)$. A \textit{clique} of a graph is a set of pairwise adjacent vertices. A \textit{matching} of a graph is a set of pairwise non-incident edges. An \textit{induced matching} in a graph is a matching $M$ such that no two vertices belonging to different edges in $M$ are adjacent in the graph.

\textbf{Intersection graphs of unit disks and rectangles.} We now explain how the geometric realizations of these intersection graphs are encoded. A collection of unit disks with a common radius $c \in \mathbb{R}$ is encoded by a collection of points in $\mathbb{R}^2$ representing the centers of the disks. Unless otherwise stated, when we refer to a rectangle we mean an axis-aligned closed rectangle in $\mathbb{R}^2$. As is typically done for intersection graphs of rectangles, we assume that the vertices of the rectangles are on an integer grid $\mathcal{G}$ and each rectangle is encoded by the coordinates of its vertices. Given an intersection graph $G$ of a family $\mathcal{R}$ of rectangles, a \textit{grid representation} of $G$ is a pair $(\mathcal{G}, \mathcal{R})$ as above.   
 
\textbf{VPG and EPG graphs.} Given a rectangular grid $\mathcal{G}$, its horizontal lines are referred to as \textit{rows} and its vertical lines as \textit{columns}. For a VPG (EPG) graph $G$, the pair $\mathcal{R} = (\mathcal{G},\mathcal{P})$ is a \textit{VPG representation} (\textit{EPG representation}) of $G$. More generally, a \emph{grid representation} of a graph $G$ is a triple $\mathcal{R} = (\mathcal{G},\mathcal{P},x)$ where $x \in \{e,v\}$, such that $(\mathcal{G},\mathcal{P})$ is an EPG representation of $G$ if $x = e$, and $(\mathcal{G},\mathcal{P})$ is a VPG representation of $G$ if $x = v$. Note that, irrespective of whether $x=e$ (that is, $G$ is an EPG graph) or $x=v$ (that is, $G$ is a VPG graph), if two vertices $u,v \in V(G)$ are adjacent in $G$ then $P_u$ and $P_v$ share at least one grid-point. A \textit{bend-point} of a path $P\in\mathcal{P}$ is a grid-point corresponding to a bend of $P$ and a \textit{segment} of $P$ is either a vertical or horizontal line segment in the polygonal curve constituting $P$. Paths in $\mathcal{P}$ are encoded as follows. For each $P \in \mathcal{P}$, we have one sequence $s(P)$ of points in $\mathbb{R}^2$: $s(P) = (x_1,y_1), (x_2,y_2), \dots, (x_{\ell_P},y_{\ell_P})$ consists of the endpoints $(x_1,y_1)$ and $(x_{\ell_P},y_{\ell_P})$ of $P$ and all the bend-points of $P$ in their order of appearance when traversing $P$ from $(x_1,y_1)$ to $(x_{\ell_P},y_{\ell_P})$. If each path in $\mathcal{P}$ has a number of bends polynomial in $|V(G)|$, then the size of this data structure is polynomial in $|V(G)|$. Given $s(P)$, we can easily determine the horizontal part $h(P)$ of the path $P$ (i.e., the projection of $P$ onto the $x$-axis). Note that our results for VPG and EPG graphs (\Cref{layeredVPGA,indepPTASA}), although stated for constant number of bends, still hold for polynomial (in $|V(G)|$) number of bends, with a worse polynomial running time.  

\textbf{PTAS.} A PTAS for a maximization problem is an algorithm which takes an instance $I$ of the problem and a parameter $\varepsilon > 0$ and produces a solution within a factor $1 - \varepsilon$ of the optimal in time $n^{O(f(1/\varepsilon))}$. A PTAS with running time $f(1/\varepsilon)\cdot n^{O(1)}$ is called an efficient PTAS (EPTAS for short).

%%%%%%%%%%%%%%%%%%%%%%%%%%%%%%%%%%%%%%%%%%%%%%%%%%%%%%

\section{Layered and local tree-independence number}\label{sec:layeredA}

The key definitions of this section are those of tree-independence number and layering, which we now recall. A \textit{tree decomposition} of a graph $G$ is a pair $\mathcal{T} = (T, \{X_t\}_{t\in V(T)})$, where $T$ is a tree whose every node $t$ is assigned a vertex subset $X_t \subseteq V(G)$, called a \textit{bag}, such that the following conditions are satisfied: 
\begin{itemize}
 \setlength{\itemindent}{1em}
\item[(T1)] Every vertex of $G$ belongs to at least one bag; 
\item[(T2)] For every $uv \in E(G)$, there exists a bag containing both $u$ and $v$; 
\item[(T3)] For every $u \in V(G)$, the subgraph $T_u$ of $T$ induced by $\{t \in V(T) : u \in X_t\}$ is connected. 
\end{itemize}
The \textit{width} of $\mathcal{T} = (T, \{X_t\}_{t\in V(T)})$ is the maximum value of $|X_t| - 1$ over all $t \in V(T)$. The \textit{treewidth} of a graph $G$, denoted $\tw(G)$, is the minimum width of a tree decomposition of $G$. The \textit{independence number} of $\mathcal{T}$, denoted $\alpha(\mathcal{T}$), is the quantity $\max_{t\in V(T)} \alpha(G[X_t])$. The \textit{tree-independence number} of a graph $G$, denoted $\tin(G)$, is the minimum independence number of a tree decomposition of $G$. Clearly, $\tin(G) \leq \tw(G)+1$, for any graph $G$. On the other hand, tree-independence number is a width parameter more powerful than treewidth, as there exist classes with bounded tree-independence number and unbounded treewidth (for example, chordal graphs have tree-independence number $1$ \cite{DaMS22}).

A \textit{layering} of a graph $G$ is a partition $(V_0, V_1,\ldots, V_t)$ of $V(G)$ such that, for every edge $vw \in E(G)$, if $v \in V_i$ and $w \in V_j$, then $|i - j| \leq 1$. Each set $V_i$ is a \textit{layer}. 
%For a vertex $r$ of a connected graph $G$, if $V_i$ is the set of vertices at distance $i$ from $r$, then $(V_0, V_1,\ldots)$ is a layering of $G$, called the \textit{BFS layering} of $G$ with respect to $r$. 
The \textit{layered width} of a tree decomposition $\mathcal{T} = (T, \{X_t\}_{t\in V(T)})$ of a graph $G$ is the minimum integer $\ell$ such that, for some layering $(V_0, V_1,\ldots)$ of $G$, and for each bag $X_t$ and layer $V_i$, we have $|X_t \cap V_i| \leq \ell$. The \textit{layered treewidth} of a graph $G$ is the minimum layered width of a tree decomposition of $G$. Layerings with one layer show that the layered treewidth of $G$ is at most $\tw(G)+1$. We now introduce the analogue of layered treewidth for the width parameter tree-independence number.

\begin{definition} The \textit{layered independence number} of a tree decomposition $\mathcal{T} = (T, \{X_t\}_{t\in V(T)})$ of a graph $G$ is the minimum integer $\ell$ such that, for some layering $(V_0, V_1,\ldots)$ of $G$, and for each bag $X_t$ and layer $V_i$, we have $\alpha(G[X_t \cap V_i]) \leq \ell$. The \textit{layered tree-independence number} of a graph $G$ is the minimum layered independence number of a tree decomposition of $G$. 
\end{definition}

Layerings with one layer show that the layered tree-independence number of $G$ is at most $\tin(G)$. Moreover, the layered tree-independence number of a graph is clearly at most its layered treewidth. The proof of \cite[Lemma~10]{DMW17} shows, mutatis mutandis, that graphs of bounded layered tree-independence number have $O(\sqrt{n})$ tree-independence number:  

\begin{lemma}\label{sqrttreealphaA} Every $n$-vertex graph with layered tree-independence number $k$ has tree-independence number at most $2\sqrt{kn}$.
\end{lemma}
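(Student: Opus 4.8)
The plan is to adapt the proof of \cite[Lemma~10]{DMW17} to tree-independence number, replacing the quantity $|X_t \cap V_i|$ by $\alpha(G[X_t \cap V_i])$ throughout. Fix a tree decomposition $\mathcal{T} = (T, \{X_t\}_{t \in V(T)})$ of $G$ together with a layering $(V_0, \ldots, V_m)$ witnessing that the layered tree-independence number of $G$ is $k$, so $\alpha(G[X_t \cap V_i]) \le k$ for every node $t$ and every $i$. Set $h := \lceil \sqrt{n/k}\, \rceil$. We may assume $m + 1 > \sqrt{n/k}$: otherwise every bag $X_t$ is contained in $V_0 \cup \cdots \cup V_m$, hence $\alpha(G[X_t]) \le \sum_{i=0}^{m} \alpha(G[X_t \cap V_i]) \le (m+1)k \le \sqrt{kn} \le 2\sqrt{kn}$, and $\mathcal{T}$ itself already certifies the bound. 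This assumption guarantees $h \le m+1$.

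The idea is to delete a cheap periodic family of layers, breaking $G$ into pieces that each span few layers, and then reassemble a tree decomposition. Since the $h$ sets $\bigcup_{i \equiv r \pmod h} V_i$ (for $r = 0, \ldots, h-1$) partition $V(G)$, averaging yields an $r$ with $|C| \le n/h \le \sqrt{kn}$, where $C := \bigcup_{i \equiv r \pmod h} V_i$. By the layering property, every edge of $G$ joins equal or consecutive layers, so every edge of $G - C$ lies within a maximal block of consecutive non-deleted layers; thus $G - C$ is the disjoint union of induced subgraphs $G[S_1], \ldots, G[S_p]$, where each $S_\ell$ is the union of at most $h-1$ consecutive layers. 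Restricting $\mathcal{T}$ to each piece gives a tree decomposition $\mathcal{T}_\ell := (T, \{X_t \cap S_\ell\}_{t \in V(T)})$ of $G[S_\ell]$, and since $X_t \cap S_\ell$ meets at most $h-1$ layers we get $\alpha(G[X_t \cap S_\ell]) \le (h-1)k < \sqrt{kn}$ for every $t$, using the strict inequality $h - 1 < \sqrt{n/k}$.

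Now assemble $\mathcal{T}'$: take the disjoint union of $\mathcal{T}_1, \ldots, \mathcal{T}_p$, add the set $C$ to every bag, introduce one new node $z$ with bag $C$, and join $z$ by an edge to an arbitrary node of each $\mathcal{T}_\ell$. Conditions (T1) and (T2) are immediate: every vertex of $C$ lies in the bag of $z$; every vertex of $S_\ell$ lies in some bag of $\mathcal{T}_\ell$; every edge inside a piece is covered by the corresponding $\mathcal{T}_\ell$; every edge inside $C$ is covered by $z$; and every edge between $C$ and a piece $S_\ell$ is covered by any bag of $\mathcal{T}_\ell$ containing its endpoint in $S_\ell$, since $C$ was added to all such bags. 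For (T3), a vertex $v \in S_\ell$ appears precisely in the bags indexed by $\{t : v \in X_t\}$ within the subtree $\mathcal{T}_\ell$, a connected set; a vertex $v \in C$ appears in every bag, so its index set is all of the (connected) tree $\mathcal{T}'$. Finally, each bag of $\mathcal{T}'$ is either $C$, with $\alpha(G[C]) \le |C| \le \sqrt{kn}$, or of the form $(X_t \cap S_\ell) \cup C$, with $\alpha(G[(X_t \cap S_\ell) \cup C]) \le \alpha(G[X_t \cap S_\ell]) + \alpha(G[C]) \le (h-1)k + |C| < \sqrt{kn} + \sqrt{kn} = 2\sqrt{kn}$. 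Hence $\tin(G) < 2\sqrt{kn}$, which is even slightly stronger than claimed.

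There is no genuine obstacle here; the two points deserving care are the arithmetic around the ceiling --- it is precisely the strict bound $h - 1 < \sqrt{n/k}$, i.e.\ $(h-1)k < \sqrt{kn}$, together with $n/h \le \sqrt{kn}$, that makes the two contributions add up to exactly $2\sqrt{kn}$ --- and the routine but slightly fiddly verification that $\mathcal{T}'$ is a legitimate tree decomposition, in particular (T3) for the vertices of $C$, which is where inserting $C$ into \emph{every} bag (rather than a single bag adjacent to $z$) is needed.
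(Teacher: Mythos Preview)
Your proof is correct and is precisely the adaptation of \cite[Lemma~10]{DMW17} that the paper invokes: choose a residue class of layers of small total size by averaging, bound the independence number of each bag restricted to a run of $h-1$ consecutive layers by $(h-1)k$, and reassemble by adding the deleted set $C$ to every bag. The arithmetic with $h=\lceil\sqrt{n/k}\,\rceil$ (so that $(h-1)k<\sqrt{kn}$ and $n/h\le\sqrt{kn}$) and the verification of (T1)--(T3) are exactly as intended.
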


Given a width parameter $p$, a graph class $\mathcal{G}$ has \textit{bounded local $p$} if there is a function $f\colon \mathbb{N} \rightarrow \mathbb{N}$ such that for every integer $r \in \mathbb{N}$, graph $G \in \mathcal{G}$, and vertex $v \in V(G)$, the subgraph $G[N^{r}[v]]$ has $p$-width at most $f(r)$. In \cite{DMW17}, it is shown that if every graph in a class $\mathcal{G}$ has layered treewidth at most $\ell$, then $\mathcal{G}$ has bounded local treewidth with $f(r) = \ell(2r + 1) - 1$.

\begin{lemma}\label{blayeredblocalA} If every graph in a class $\mathcal{G}$ has layered tree-independence number at most $\ell$, then $\mathcal{G}$ has bounded local tree-independence number with $f(r) = \ell(2r + 1)$.
\end{lemma}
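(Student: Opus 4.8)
The plan is to mimic the proof of the analogous statement for layered treewidth from \cite{DMW17}, replacing the bound on $|X_t \cap V_i|$ by the bound on $\alpha(G[X_t \cap V_i])$ throughout. Fix a graph $G \in \mathcal{G}$, an integer $r \in \mathbb{N}$, and a vertex $v \in V(G)$, and let $H = G[N^r[v]]$. Since $G$ has layered tree-independence number at most $\ell$, there is a tree decomposition $\mathcal{T} = (T, \{X_t\}_{t \in V(T)})$ of $G$ and a layering $(V_0, V_1, \ldots)$ of $G$ with $\alpha(G[X_t \cap V_i]) \leq \ell$ for every bag $X_t$ and layer $V_i$. The first step is to restrict $\mathcal{T}$ to $H$: set $X_t' = X_t \cap N^r[v]$; it is standard that $(T, \{X_t'\}_{t \in V(T)})$ is a tree decomposition of $H$, since (T1)--(T3) are inherited under taking induced subgraphs.

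The second step is to observe that $N^r[v]$ meets only a bounded number of consecutive layers. Indeed, let $V_j$ be the layer containing $v$. A standard property of layerings is that any path of length $s$ starting at $v$ stays within layers $V_{j-s}, \ldots, V_{j+s}$; hence every vertex of $N^r[v]$ lies in one of the $2r+1$ layers $V_{j-r}, \ldots, V_{j+r}$ (with the convention that layers with negative index are empty). The third step bounds the independence number of each restricted bag: for any $t$, $X_t' = X_t \cap N^r[v]$ is contained in $\bigcup_{i=j-r}^{j+r}(X_t \cap V_i)$, and an independent set of $G[X_t']$ is a disjoint union of its intersections with these $2r+1$ sets, each of which is an independent set of $G[X_t \cap V_i]$ and hence has size at most $\ell$; therefore $\alpha(G[X_t']) = \alpha(H[X_t']) \leq \ell(2r+1)$. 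Taking the maximum over $t$ shows $\tin(H) \leq \ell(2r+1)$, as desired.

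I do not expect any serious obstacle here; the argument is a direct transcription of \cite[Lemma~11 (or the corresponding statement)]{DMW17}. The only point requiring a moment's care is the additivity of the independence number across layers used in the third step: this works precisely because an independent set in an induced subgraph, intersected with any vertex subset, remains independent in the corresponding induced subgraph, so $\alpha(G[A \cup B]) \le \alpha(G[A]) + \alpha(G[B])$ for disjoint $A, B$; iterating over the $2r+1$ relevant layers gives the claimed bound. (Note that, unlike in the treewidth case where one gets $\ell(2r+1)-1$ from the ``$-1$'' in the definition of width, the tree-independence number carries no such correction, which is why the bound is $\ell(2r+1)$ rather than $\ell(2r+1)-1$.)
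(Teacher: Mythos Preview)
Your proposal is correct and follows essentially the same approach as the paper's proof: restrict the given tree decomposition to $N^r[v]$, observe that this ball is contained in at most $2r+1$ consecutive layers, and bound the independence number of each restricted bag by summing the layerwise bounds. The paper's proof is just a terser version of what you wrote.
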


\begin{proof} Let $r \in \mathbb{N}$, $G \in \mathcal{G}$ and $v \in V(G)$. Let $G' = G[N^{r}[v]]$. By assumption, $G$ has a tree decomposition $\mathcal{T} = (T, \{X_t\}_{t\in V(T)})$ of layered independence number $\ell$ with respect to some layering $(V_0, V_1,\ldots)$. Suppose that $v \in V_i$. Then, $V(G') \subseteq V_{i-r} \cup \cdots \cup V_{i+r}$ and so, for each bag $X_t$, we have that $\alpha(G'[X_t]) \leq \sum_{j=-r}^{r}\alpha(G[X_t \cap V_{i-j}]) \leq \ell(2r + 1)$. This implies that $\tin(G') \leq \ell(2r + 1)$.  
\end{proof}

\begin{corollary}\label{layeredKnnA} The layered tree-independence number of $K_{n,n}$ is at least $n/5$.
\end{corollary}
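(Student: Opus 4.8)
The plan is to deduce this from the lower bound $\tin(K_{n,n}) \ge n$ together with \Cref{blayeredblocalA}. The key point is that $K_{n,n}$ has diameter at most $2$: for every vertex $v \in V(K_{n,n})$ we have $N^{2}[v] = V(K_{n,n})$, so $G[N^{2}[v]] = K_{n,n}$ itself. Hence, writing $\ell$ for the layered tree-independence number of $K_{n,n}$, \Cref{blayeredblocalA} (applied to the one-element class $\{K_{n,n}\}$, with $r = 2$) gives $\tin(K_{n,n}) = \tin\bigl(G[N^{2}[v]]\bigr) \le \ell\,(2\cdot 2 + 1) = 5\ell$. So as soon as $\tin(K_{n,n}) \ge n$ is established, we obtain $\ell \ge n/5$, which is the claim.

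It thus remains to prove $\tin(K_{n,n}) \ge n$. Let $A, B$ be the two sides and let $\mathcal{T} = (T, \{X_t\}_{t \in V(T)})$ be an arbitrary tree decomposition; for $v \in V(K_{n,n})$ let $T_v$ be the subtree of $T$ induced by the nodes whose bag contains $v$ (nonempty and connected by (T1) and (T3)). If the subtrees $\{T_a : a \in A\}$ share a common node $t$, then $A \subseteq X_t$, and since $A$ is independent the bag $X_t$ has independence number at least $n$. Otherwise, by the Helly property of subtrees of a tree, there are $a, a' \in A$ with $T_a \cap T_{a'} = \varnothing$; two disjoint subtrees of a tree are separated by a single edge $e$ of $T$, with $T_a$ inside one component of $T - e$ and $T_{a'}$ inside the other — let $t_1$ be the endpoint of $e$ on the $T_a$ side. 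For every $b \in B$, the edges $ab$ and $a'b$ together with (T2) and (T3) force $T_b$ to meet both $T_a$ and $T_{a'}$, hence both components of $T - e$; being connected, $T_b$ then contains $e$, so $b \in X_{t_1}$. Thus $B \subseteq X_{t_1}$, and as $B$ is independent this bag again has independence number at least $n$. In every case $\alpha(\mathcal{T}) \ge n$, so $\tin(K_{n,n}) \ge n$.

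The only steps requiring a little care are the two standard structural facts about subtrees of a tree used in the second case (the Helly property, and that disjoint subtrees are separated by one edge with a connected subtree meeting both sides forced to use it); everything else is immediate. I also note that one could instead combine $\tin(K_{n,n}) \ge n$ with \Cref{sqrttreealphaA} and the fact that $K_{n,n}$ has $2n$ vertices, but this only yields $\ell \ge n/8$, which is weaker than the stated $n/5$ — routing through \Cref{blayeredblocalA} and the bounded diameter of $K_{n,n}$ is what produces the correct constant.
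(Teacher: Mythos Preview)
Your proof is correct and follows essentially the same approach as the paper: combine $\tin(K_{n,n}) \ge n$ with \Cref{blayeredblocalA} applied at $r = 2$ via the diameter-$2$ observation. The only difference is that you supply a self-contained Helly-type argument for $\tin(K_{n,n}) \ge n$, whereas the paper simply cites this fact from \cite{DaMS22}.
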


\begin{proof} Suppose, to the contrary, that the layered tree-independence number of $K_{n,n}$ is less than $n/5$. Since the diameter of $K_{n,n}$ is $2$, \Cref{blayeredblocalA} implies that $\tin(K_{n,n}) < n$, contradicting the fact that $\tin(K_{n,n}) = n$ \cite{DaMS22}.  
\end{proof}

\begin{figure}[h!]
\centering
\begin{subfigure}{.45\linewidth}
\centering
\begin{tikzpicture}[scale=.5,rotate=90]

\draw[very thick] (0,0) -- (0,1) -- (1,1) -- (1,2) -- (2,2) -- (2,3) -- (3,3) -- (3,4);
\draw[very thick] (0,2) -- (0,3) -- (1,3) -- (1,4) -- (2,4) -- (2,5) -- (3,5) -- (3,6);
\draw[very thick] (0,4) -- (0,5) -- (1,5) -- (1,6) -- (2,6) -- (2,7) -- (3,7) -- (3,8);
\draw[very thick] (0,6) -- (0,7) -- (1,7) -- (1,8) -- (2,8) -- (2,9) -- (3,9) -- (3,10);

\draw[very thick] (.15,0) -- (.15,7);
\draw[very thick] (1.15,1) -- (1.15,8);
\draw[very thick] (2.15,2) -- (2.15,9);
\draw[very thick] (3.15,3) -- (3.15,10);

\end{tikzpicture}
\end{subfigure}
\begin{subfigure}{.45\linewidth}
\centering
\begin{tikzpicture}[scale=.4]

\draw[thick] (0,0) rectangle (7,1);
\draw[thick] (0,2) rectangle (7,3);
\draw[thick] (0,4) rectangle (7,5);
\draw[thick] (0,6) rectangle (7,7);

\draw[thick] (.15,.15) rectangle (1.05,6.85);
\draw[thick] (2,.15) rectangle (3,6.85);
\draw[thick] (4,.15) rectangle (5,6.85);
\draw[thick] (5.95,.15) rectangle (6.85,6.85);
\end{tikzpicture}
\end{subfigure}
\caption{Examples showing that VPG/EPG graphs and intersection graphs of rectangles have unbounded layered tree-independence number: VPG/EPG representation (left) and representation by intersection of rectangles (right) of $K_{4,4}$.}
\end{figure}
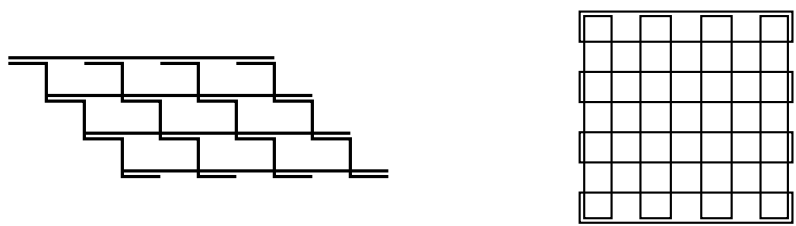

\begin{theorem}\label{equivlayeredA} The following are equivalent for a minor-closed class $\mathcal{G}$:
\begin{enumerate}
\item Some apex\footnote{An apex graph is a graph that can be made planar by deleting a single vertex.} graph is not in $\mathcal{G}$;
\item $\mathcal{G}$ has bounded local tree-independence number;
\item $\mathcal{G}$ has linear local tree-independence number (i.e., $f(r)$ is linear in $r$);
\item $\mathcal{G}$ has bounded layered tree-independence number.
\end{enumerate}
\end{theorem}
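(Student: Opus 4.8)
The plan is to prove the cycle of implications $(1)\Rightarrow(4)\Rightarrow(3)\Rightarrow(2)\Rightarrow(1)$, which yields the equivalence of all four statements. For $(1)\Rightarrow(4)$, I would invoke the corresponding characterization for treewidth: a minor-closed class from which some apex graph is excluded has bounded layered treewidth \cite{DMW17,DEW17}. Since the layered tree-independence number of any graph is at most its layered treewidth (as noted right after the definition), such a class has bounded layered tree-independence number. The implication $(4)\Rightarrow(3)$ is exactly \Cref{blayeredblocalA}, whose bound $f(r)=\ell(2r+1)$ is linear in $r$; and $(3)\Rightarrow(2)$ is immediate, since for each fixed $r$ a linear function $\mathbb{N}\to\mathbb{N}$ provides a uniform finite bound valid for all $G\in\mathcal{G}$ and all $v\in V(G)$.

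The remaining implication $(2)\Rightarrow(1)$ is the one I would verify directly, by contraposition. Assume $\mathcal{G}$ contains every apex graph. For each $k\in\mathbb{N}$ let $\Gamma_k$ be obtained from the $k\times k$ grid $G_k$ by adding one vertex $a_k$ adjacent to all of $V(G_k)$; since $\Gamma_k-a_k=G_k$ is planar, $\Gamma_k$ is apex and hence $\Gamma_k\in\mathcal{G}$. Because $N^{1}_{\Gamma_k}[a_k]=V(\Gamma_k)$, it suffices to show $\tin(\Gamma_k)\to\infty$, which already defeats bounded local tree-independence number at $r=1$. Tree-independence number is monotone under induced subgraphs (intersect each bag of a tree decomposition with the vertex subset), so $\tin(\Gamma_k)\ge\tin(G_k)$. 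Finally, in any tree decomposition of $G_k$ of independence number $\ell$, each bag $X_t$ induces a subgraph of maximum degree at most $4$, so a greedy argument gives $|X_t|\le 5\,\alpha(G_k[X_t])\le 5\ell$; hence $G_k$ has a tree decomposition of width at most $5\ell-1$, and since $\tw(G_k)=k$ we obtain $\ell\ge(k+1)/5$. Thus $\tin(G_k)\to\infty$, as required.

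I expect the only genuine obstacle to be the first implication $(1)\Rightarrow(4)$, and only in the sense that it rests on the nontrivial structural theorem that apex-minor-free minor-closed classes have bounded layered treewidth; everything else is elementary. The conceptual point worth isolating is that on bounded-degeneracy graphs such as grids a large treewidth forces a large tree-independence number (via the greedy bound $|X|\le(\Delta+1)\,\alpha(G[X])$), which is exactly what allows tree-independence number to substitute for treewidth throughout the Eppstein-style diameter argument.
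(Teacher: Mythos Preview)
Your proposal is correct and follows essentially the same cycle $(1)\Rightarrow(4)\Rightarrow(3)\Rightarrow(2)\Rightarrow(1)$ as the paper, including the same apex-over-grid construction for the contrapositive of $(2)\Rightarrow(1)$. The only difference is cosmetic: the paper appeals to \cite[Lemma~3.2]{DaMS22} (grids are not $(\tw,\omega)$-bounded) to conclude that $\tin(G_k)\to\infty$, whereas you give the direct greedy bound $|X_t|\le(\Delta+1)\alpha(G_k[X_t])\le 5\ell$, which is a perfectly good self-contained substitute.
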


\begin{proof} (1) $\Longrightarrow$ (4): By \cite{DMW17}, if $\mathcal{G}$ excludes some apex graph as a minor, then $\mathcal{G}$ has bounded layered treewidth, hence bounded layered tree-independence number as well. 

\noindent (4) $\Longrightarrow$ (3): It follows from \Cref{blayeredblocalA}.

\noindent (3) $\Longrightarrow$ (2): It follows by definition.

\noindent (2) $\Longrightarrow$ (1): Let $G_n$ be the graph obtained from the $n\times n$-grid graph (the Cartesian product of two $n$-vertex paths) by adding a dominating vertex $v_n$. Observe that the class $\{G_n : n\in \mathbb{N}\}$ has unbounded local tree-independence number, as $v_n$ is dominating and the class of grids has unbounded tree-independence number (since it is not $(\tw,\omega)$-bounded, see \cite[Lemma~3.2]{DaMS22}). Hence, if $\mathcal{G}$ contains all apex graphs, then in particular it contains $\{G_n : n\in \mathbb{N}\}$ and so has unbounded local tree-independence number.
\end{proof}

\Cref{equivlayeredA} implies the following result from \cite{DMiS22}.

\begin{corollary} A minor-closed class has bounded tree-independence number if and only if some planar graph is not in the class.
\end{corollary}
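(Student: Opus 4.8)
The plan is to reduce the corollary to the Excluded Grid Theorem of Robertson and Seymour, used together with the elementary bound $\tin(G)\le \tw(G)+1$ and the fact — already invoked in the proof of \Cref{equivlayeredA} — that the class of square grids has unbounded tree-independence number. Throughout, let $\Gamma_t$ denote the $t\times t$ grid.

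For the ``if'' direction, I would assume that some planar graph $H$ is not in $\mathcal{G}$ and first recall the classical fact that every planar graph is a minor of a sufficiently large square grid; fix $t$ so that $H$ is a minor of $\Gamma_t$. Since $\mathcal{G}$ is minor-closed and $H\notin\mathcal{G}$, the grid $\Gamma_t$ cannot be in $\mathcal{G}$ either, and moreover no $G\in\mathcal{G}$ can contain $\Gamma_t$ as a minor; hence every member of $\mathcal{G}$ is $\Gamma_t$-minor-free. By the Excluded Grid Theorem there is a constant $w=w(t)$ such that every $\Gamma_t$-minor-free graph has treewidth at most $w$, and therefore $\tin(G)\le w+1$ for every $G\in\mathcal{G}$, as desired.

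For the ``only if'' direction I would argue by contraposition: if $\mathcal{G}$ is minor-closed and contains every planar graph, then $\Gamma_t\in\mathcal{G}$ for every $t\in\mathbb{N}$, and since $\{\Gamma_t : t\in\mathbb{N}\}$ is not $(\tw,\omega)$-bounded (see \cite[Lemma~3.2]{DaMS22}) it has unbounded tree-independence number; hence so does $\mathcal{G}$.

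I do not expect a genuine obstacle here: the mathematical content lies entirely in the two external facts cited above. The only points deserving a little care are the reduction ``planar graph $\Rightarrow$ minor of a grid'' and the observation that it is minor-closedness of $\mathcal{G}$, rather than mere closure under induced subgraphs, that allows us to pass from the excluded planar graph $H$ to an excluded grid $\Gamma_t$. It is worth remarking that, in contrast to the four conditions of \Cref{equivlayeredA}, this corollary is not a purely formal consequence of that theorem: for minor-closed classes it amounts to the equivalence of bounded tree-independence number and bounded treewidth, and establishing that equivalence is precisely where the Grid Theorem enters.
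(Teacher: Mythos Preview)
Your proof is correct, but the paper takes a different route for the ``if'' direction. You invoke the Excluded Grid Theorem to conclude that a class excluding a planar minor has bounded treewidth, and hence bounded tree-independence number via $\tin\le\tw+1$. The paper instead derives the bound from \Cref{equivlayeredA} through an apex trick: given the excluded planar graph $H$, form the apex graph $H^{+}$ by adding a dominating vertex; for each $G\in\mathcal{G}$ the graph $G^{+}$ obtained similarly is $H^{+}$-minor-free (a fact from \cite{DMW17}), so by \Cref{equivlayeredA} it has layered tree-independence number at most some $\ell$; since $G^{+}$ has radius $1$, any layering uses at most three layers, whence $\tin(G)\le\tin(G^{+})\le 3\ell$. (For the ``only if'' direction the paper uses walls rather than grids, which is immaterial.)

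Your approach is more self-contained and appeals to a single well-known black box; the paper's approach, on the other hand, does obtain the corollary essentially from \Cref{equivlayeredA}, which runs somewhat counter to your closing remark. Of course, the implication $(1)\Rightarrow(4)$ of \Cref{equivlayeredA} relies on the bounded-layered-treewidth result for apex-minor-free classes from \cite{DMW17}, which itself rests on deep graph-minor machinery, so the structural input is not avoided, only relocated.
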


\begin{proof} Let $\mathcal{G}$ be a minor-closed class. If $\mathcal{G}$ has bounded tree-independence number then, since the class of walls has unbounded tree-independence number \cite{DMS21a,DaMS22}, $\mathcal{G}$ does not contain a planar graph as a minor. 

Conversely, we claim that, for every planar graph $H$ there is an integer $c$ such that every $H$-minor-free graph $G$ has tree-independence number at most $c$. Let $H^{+}$ be the apex
graph obtained from $H$ by adding a dominating vertex $v$ and let $G^{+}$ be the graph obtained from $G$ by adding a dominating vertex $x$. It is shown in \cite{DMW17} that $G^{+}$ is $H^{+}$-minor-free. By \Cref{equivlayeredA}, $G^{+}$ has layered tree-independence number at most $\ell$, for some fixed integer $\ell$. Since $G^{+}$ has radius $1$, at most three layers are used. Thus $G^{+}$, and hence $G$, have tree-independence number at most $3\ell$.
\end{proof}

For $p \in \mathbb{N}$, the $p$-th power of a graph $G$ is the graph $G^p$ with vertex set $V(G^p) = V(G)$, where $uv \in E(G^p)$ if and only if $u$ and $v$ are at distance at most $p$ in $G$. Bonomo-Braberman and Gonzalez \cite{BG22} showed that fixed powers of bounded treewidth and bounded degree graphs are of bounded treewidth. More specifically, for any graph $G$ and $p \geq 2$, $\tw(G^p) \leq (\tw(G) + 1)(\Delta(G) + 1)^{\lceil\frac{p}{2}\rceil}-1$. It follows from the work of Dujmovi\'c et al. \cite{DMW22} that powers of graphs of bounded layered treewidth and bounded maximum degree have bounded layered treewidth. The upper bound was later improved by Dujmovi\'c et al. \cite{DEMWW22}, who showed that if $G$ has layered treewidth $k$, then $G^p$ has layered treewidth less than $2pk\Delta(G)^{\lfloor\frac{p}{2}\rfloor}$. We show that odd powers of bounded layered tree-independence number graphs have bounded layered tree-independence number and that this result does not extend to even powers. Before doing so, we need a definition and a result from \cite{MR22}. Given a graph $G$ and a family $\mathcal{H} = \{H_j\}_{j\in J}$ of subgraphs of $G$, we denote by $G(\mathcal{H})$ the graph with vertex set $J$, in which two distinct elements $i, j \in J$ are adjacent if and only if $H_i$ and $H_j$ either have a vertex in common or there is an edge in $G$ connecting them.

\begin{lemma}[Milani\v{c} and Rz\c{a}\.{z}ewski \cite{MR22}]\label{sametreeA} Let $G$ be a graph and let $k$ and $d$ be positive integers. For $v \in V(G)$, let $H_v$ be the subgraph of $G$ induced by the vertices at distance at most $d$ from $v$, and let $\mathcal{H} = \{H_v\}_{v\in V(G)}$. Then $G^{k+2d} = G^k(\mathcal{H})$. 
\end{lemma}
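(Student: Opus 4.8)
The plan is to prove the equality by comparing the two graphs edge by edge, after noting that both have vertex set $V(G)$ (here $J=V(G)$). So I would fix distinct $u,v\in V(G)$ and set $B_u=N^{d}_{G}[u]$, $B_v=N^{d}_{G}[v]$, which are exactly the vertex sets of $H_u$ and $H_v$. Unwinding the definition of $G^{k}(\mathcal H)$ — applied with $G^{k}$ as ambient graph; note that each $H_v$, being a subgraph of $G$, is in particular a subgraph of $G^{k}$ since $E(G)\subseteq E(G^{k})$ — the pair $uv$ is an edge of $G^{k}(\mathcal H)$ if and only if $B_u\cap B_v\neq\varnothing$, or some edge of $G^{k}$ has one endpoint in $B_u$ and the other in $B_v$. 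An edge of $G^{k}$ joins two distinct vertices at distance between $1$ and $k$ in $G$, while a vertex shared by $B_u$ and $B_v$ yields a pair at distance $0$; hence these two alternatives merge into the single condition $\mu(u,v)\le k$, where $\mu(u,v):=\min\{\mathrm{dist}_{G}(a,b):a\in B_u,\ b\in B_v\}$.

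The core of the argument is the identity
\[
\mu(u,v)=\max\{0,\ \mathrm{dist}_{G}(u,v)-2d\},
\]
which also holds when $u$ and $v$ lie in distinct components of $G$ (both sides being $+\infty$). For the inequality ``$\ge$'', the triangle inequality gives $\mathrm{dist}_{G}(u,v)\le d+\mathrm{dist}_{G}(a,b)+d$ for all $a\in B_u$, $b\in B_v$, and trivially $\mathrm{dist}_{G}(a,b)\ge 0$. For ``$\le$'', I would take a shortest path $P$ from $u$ to $v$ in $G$, of length $\ell=\mathrm{dist}_{G}(u,v)$, and let $w_i$ be its $i$-th vertex counting from $u$ (with $w_0=u$, $w_\ell=v$), so that $\mathrm{dist}_{G}(u,w_i)=i$ and $\mathrm{dist}_{G}(w_i,v)=\ell-i$. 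If $\ell\le 2d$, then $i:=\max\{0,\ell-d\}$ satisfies $i\le d$ and $\ell-i\le d$, so $w_i\in B_u\cap B_v$ and $\mu(u,v)=0$. If $\ell>2d$, then $w_d\in B_u$ and $w_{\ell-d}\in B_v$ are distinct, and the subpath of $P$ between them witnesses $\mathrm{dist}_{G}(w_d,w_{\ell-d})\le\ell-2d$.

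Combining the two, for distinct $u,v$ one obtains, using $k\ge 1$,
\[
uv\in E(G^{k}(\mathcal H))\iff\max\{0,\mathrm{dist}_{G}(u,v)-2d\}\le k\iff\mathrm{dist}_{G}(u,v)\le k+2d\iff uv\in E(G^{k+2d}),
\]
which is exactly the claim. I do not anticipate a serious obstacle; the only points needing care are correctly merging the two alternatives in the definition of $G(\mathcal H)$ into the single condition $\mu(u,v)\le k$, and handling the threshold cases $\ell\le 2d$ and $\ell>2d$ separately when exhibiting the optimal pair on the shortest path.
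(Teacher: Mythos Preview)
Your proof is correct. The paper does not actually supply a proof of this lemma; it is quoted verbatim from Milani\v{c} and Rz\c{a}\.{z}ewski~\cite{MR22}, so there is no in-paper argument to compare against. Your reduction of the adjacency condition in $G^{k}(\mathcal H)$ to the inequality $\mu(u,v)\le k$, followed by the identity $\mu(u,v)=\max\{0,\mathrm{dist}_G(u,v)-2d\}$ established via the triangle inequality and a shortest-path witness, is clean and complete; the edge cases (shared component, $\ell\le 2d$ versus $\ell>2d$) are all handled.
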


\begin{theorem}\label{layeredtreealgoA} Let $G$ be a graph and let $d$ be a positive integer. Given a tree decomposition $\mathcal{T} = (T,\{X_t\}_{t\in V(T)})$ of $G$ and a layering $(V_1,\ldots,V_{m})$ of $G$ such that, for each bag $X_t$ and layer $V_i$, $\alpha(G[X_t \cap V_i]) \leq k$, it is possible to compute in $O(|V(T)| \cdot (|V(G)| + |E(G)|))$ time a tree decomposition $\mathcal{T'} = (T,\{X'_t\}_{t\in V(T)})$ of $G^{1+2d}$ and a layering $(V'_1,\ldots,V'_{\lceil \frac{m}{1+2d} \rceil})$ of $G^{1+2d}$ such that, for each bag $X'_t$ and layer $V'_i$, $\alpha(G^{1+2d}[X'_t \cap V'_i]) \leq (1+4d)k$. In particular, if $G$ has layered tree-independence number $k$, then $G^{1+2d}$ has layered tree-independence number at most $(1+4d)k$.
\end{theorem}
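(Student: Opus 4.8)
The plan is to build the new tree decomposition and layering of $G^{1+2d}$ on the same tree $T$, keeping each bag $X_t$ unchanged (so $X'_t = X_t$) and only coarsening the layering. Concretely, set $X'_t = X_t$ for every $t \in V(T)$, and partition the old layers into consecutive blocks of size $1+2d$: for $i = 1, \ldots, \lceil m/(1+2d)\rceil$, let $V'_i = \bigcup_{j} V_j$ where $j$ ranges over the $i$-th block $\{(i-1)(1+2d)+1, \ldots, \min(i(1+2d), m)\}$. The key observation is that if $uv \in E(G^{1+2d})$, then $u$ and $v$ are at distance at most $1+2d$ in $G$, so any path witnessing this passes through at most $1+2d$ consecutive old layers; hence the old-layer indices of $u$ and $v$ differ by at most $1+2d$, which means the block indices differ by at most $1$. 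This shows $(V'_1, \ldots, V'_{\lceil m/(1+2d)\rceil})$ is a valid layering of $G^{1+2d}$.

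The two nontrivial points are that $\mathcal{T}' = (T, \{X'_t\})$ is a tree decomposition of $G^{1+2d}$ and that the layered independence bound holds. For the first, I would invoke \Cref{sametreeA} with $k=1$: taking $H_v = G[N^d_G[v]]$ and $\mathcal{H} = \{H_v\}_{v \in V(G)}$, we have $G^{1+2d} = G(\mathcal{H})$. There is a standard fact (the one underlying the polynomial-time solvability of \textsc{Max Weight Independent Packing} via tree-independence number, and used in \cite{MR22}) that a tree decomposition $(T, \{X_t\})$ of $G$ yields a tree decomposition of $G(\mathcal{H})$ on the same tree $T$, where the bag at $t$ is $\{v : V(H_v) \cap X_t \neq \varnothing\}$; however, to keep bags equal to $X_t$ I instead use the simpler route that $\mathcal{T}$ is already a tree decomposition of $G^{1+2d}$ because adding edges of $G^{1+2d}$ to $G$ only requires condition (T2) to be checked for the new edges, and for $uv \in E(G^{1+2d})$ a shortest $u$–$v$ path of length at most $1+2d$ in $G$ forces, via (T3) applied along the path, the existence of a common bag containing $u$ and $v$ — this is the classical observation that a width-$w$ tree decomposition of $G$ is a width-$\le (w+1)^{?}$... no: more carefully, it is \emph{not} generally true that the same bags work for $G^p$. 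So the correct route is indeed the $G(\mathcal{H})$ construction: apply \Cref{sametreeA}, then use the construction of a tree decomposition of $G(\mathcal{H})$ from one of $G$ with bags $X'_t = \{v \in V(G) : V(H_v) \cap X_t \neq \varnothing\}$, and observe $X_t \subseteq X'_t$.

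For the independence bound, fix a bag $X'_t$ and a new layer $V'_i$, and let $S$ be an independent set of $G^{1+2d}[X'_t \cap V'_i]$; I must show $|S| \le (1+4d)k$. Each $u \in S$ lies in $X'_t$, so there is a vertex $w_u \in V(H_u) \cap X_t$, i.e.\ $w_u \in X_t$ with $\mathrm{dist}_G(u, w_u) \le d$; and $u \in V'_i$ means the old-layer index of $u$ is in the $i$-th block, so the old-layer index of $w_u$ lies within $d$ of that, hence in one of at most $1+4d$ old layers (the $1+2d$ layers of the block, plus $d$ on each side). Group the elements of $S$ by which old layer $V_\ell$ contains $w_u$; there are at most $1+4d$ nonempty groups. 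Within one group, all the $w_u$ lie in $X_t \cap V_\ell$; the main thing to check is that distinct $u, u' \in S$ give distinct $w_u, w_{u'}$ and moreover that $\{w_u : u \text{ in this group}\}$ is independent in $G[X_t \cap V_\ell]$ — if $w_u = w_{u'}$ or $w_u w_{u'} \in E(G)$ then $\mathrm{dist}_G(u,u') \le d + 1 + d = 1+2d$, contradicting that $uu'$ is a non-edge of $G^{1+2d}$. Hence each group has size at most $\alpha(G[X_t \cap V_\ell]) \le k$, giving $|S| \le (1+4d)k$. The running time is immediate: computing all $N^d_G[v]$ and the bags $X'_t$ costs $O(|V(T)| \cdot (|V(G)| + |E(G)|))$ by a BFS from each vertex. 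The main obstacle is getting the bookkeeping in the last paragraph exactly right — in particular justifying the independence of the surrogate set $\{w_u\}$ within each old layer and the "$1+4d$ layers" count — rather than anything structurally deep.
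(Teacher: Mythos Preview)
Your proposal, after you correctly abandon the attempt to keep $X'_t = X_t$ and switch to the $G(\mathcal{H})$ construction with $X'_t = \{v : N^d_G[v] \cap X_t \neq \varnothing\}$, is essentially the paper's proof: the same use of \Cref{sametreeA}, the same block-coarsening of the layering into groups of $1+2d$ old layers, and the same surrogate-vertex argument (pick $w_u \in X_t$ with $\mathrm{dist}_G(u,w_u)\le d$, observe the $w_u$'s are distinct and pairwise non-adjacent in $G$, and count that they fall into at most $1+4d$ old layers). One small point on the running time: the bound $O(|V(T)|\cdot(|V(G)|+|E(G)|))$ is obtained by a single multi-source BFS from each bag $X_t$ to find all vertices within distance $d$ of it, not by a BFS from each vertex as you wrote; the latter would cost $O(|V(G)|\cdot(|V(G)|+|E(G)|))$ and still leave the assembly of the $X'_t$'s to be accounted for.
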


\begin{proof} Let $\mathcal{T} = (T,\{X_t\}_{t\in V(T)})$ and $(V_1,\ldots,V_{m})$ be the given tree decomposition and layering of $G$, respectively. For each vertex $u \in V(G)$, let $l(u)$ be the unique index $i$ such that $u \in V_i$. For each $v \in V(G)$, let $H_v$ be the subgraph of $G$ induced by the vertices at distance at most $d$ from $v$, and let $\mathcal{H} = \{H_v\}_{v\in V(G)}$. Let $\mathcal{T'} = (T,\{X'_t\}_{t\in V(T)})$, with $X'_t = \{v \in V(G) : V(H_v) \cap X_t \neq \varnothing\}$ for each $t \in V(T)$. By \cite[Lemma~6.1]{DaMS22}, $\mathcal{T}'$ is a tree decomposition of $G(\mathcal{H})$ and hence, by \Cref{sametreeA}, of $G^{1+2d}$ as well. Moreover, for each $v \in V(G)$, $V(H_v) \cap X_t \neq \varnothing$ if and only if $v$ is at distance at most $d$ from $X_t$ in $G$ and the set of all such vertices $v$ can be computed using BFS in $O(|V(G)| + |E(G)|)$ time. Therefore, $\mathcal{T}'$ can be computed in $O(|V(T)| \cdot (|V(G)| + |E(G)|)$ time. For each $1 \leq i \leq \lceil \frac{m}{1+2d} \rceil$, let now $V'_i = \bigcup_{(1+2d)(i-1) < j \leq (1+2d)i}V_j$. We claim that $(V'_1,\ldots,V'_{\lceil \frac{m}{1+2d} \rceil})$ is a layering of $G^{1+2d}$. Clearly, these sets partition $V(G^{1+2d})$. Moreover, for each edge $uv \in E(G^{1+2d})$, we have that $d_{G}(u,v) \leq 1+2d$ and so $|l(i)-l(j)| \leq 1+2d$. Consequently, $u$ and $v$ belong to either the same $V'_{i}$ or to consecutive $V'_{i}$'s. 

We now show that, for each $1 \leq i \leq \lceil \frac{m}{1+2d} \rceil$ and $t\in V(T)$, $\alpha(G^{1+2d}[V'_i \cap   X'_t]) \leq (1+4d)k$. Suppose, to the contrary, that $\alpha(G^{1+2d}[V'_i \cap X'_t]) > (1+4d)k$ for some $i$ and $t$ as above. Then, there exists an independent set $U = \{{u_1},\ldots,{u_{(1+4d)k+1}}\}$ of $G^{1+2d}$ contained in $V'_i \cap X'_t$. By construction, $V(H_{u_p}) \cap X_t \neq \varnothing$ for each $1 \leq p \leq (1+4d)k+1$ and, for each such $p$, we pick an arbitrary vertex in $V (H_{u_p}) \cap X_t$ and denote it by $r(u_p,t)$. Note that, for $p \neq q$, $r(u_p,t)$ is distinct from and non-adjacent to $r(u_q,t)$ in $G$, for otherwise either $H_{u_p}$ and $H_{u_q}$ share a vertex or there is an edge connecting them in $G$, from which $u_pu_q \in G(\mathcal{H}) = G^{1+2d}$, contradicting the fact that $U$ is an independent set of $G^{1+2d}$. Now, by construction, each $r(u_p,t)$ belongs to $V(H_{u_p})$ and so is at distance at most $d$ in $G$ from $u_p$. Moreover, $u_p$ belongs to the layer $V'_i$ of $G^{1+2d}$, for each $1\leq p \leq (1+4d)k + 1$. Therefore, $(1+2d)(i-1) < l(u_p) \leq (1+2d)i$ and $(1+2d)(i-1) - d < l(r(u_p,t)) \leq (1+2d)i + d$, for each $1\leq p \leq (1+4d)k + 1$. That is, each $r(u_p,t)$ belongs to one of the $1+4d$ consecutive layers $V_{(1+2d)(i-1) - d + 1},\ldots, V_{(1+2d)i + d}$ of $G$. Since $|U| > (1+4d)k$, at least one such layer $V_r$ contains $k+1$ vertices of the form $r(u_p,t)$. Therefore, $\alpha(G[V_r \cap X_t]) > k$, a contradiction.
\end{proof}

\begin{lemma} Fix an even $k \in \mathbb{N}$. There exist graphs $G$ with layered tree-independence number $1$ and such that the layered tree-independence number of $G^k$ is arbitrarily large. 
\end{lemma}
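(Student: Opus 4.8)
The plan is to produce, for every $n\in\mathbb{N}$, a graph $G_n$ (depending also on the fixed even number $k=2m$) whose layered tree-independence number is exactly $1$ but whose $k$-th power contains $K_{n,n}$ as an induced subgraph. This suffices: layered tree-independence number is monotone under induced subgraphs — restricting an optimal tree decomposition and layering of $G_n^k$ to $V(K_{n,n})$ gives a tree decomposition and a layering of $K_{n,n}$, and the independence number of each bag-times-layer intersection can only decrease — so the layered tree-independence number of $G_n^k$ is at least that of $K_{n,n}$, which is at least $n/5$ by \Cref{layeredKnnA}; letting $n\to\infty$ proves the lemma.

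For the construction, write $k=2m$ with $m\ge 1$. The vertex set of $G_n$ is $A\cup B\cup C\cup\bigcup_{l=1}^{m-1}(A_l\cup B_l)$, where (writing $[n]=\{1,\dots,n\}$) $A=\{a_1,\dots,a_n\}$, $B=\{b_1,\dots,b_n\}$, $A_l=\{a_i^l:i\in[n]\}$, $B_l=\{b_j^l:j\in[n]\}$, and $C=\{c_{ij}:i,j\in[n]\}$, so that $|C|=n^2$. Setting $a_i^0:=a_i$ and $b_j^0:=b_j$, the edges of $G_n$ are: all pairs inside $C$ (so $C$ is a clique); $a_i^{m-1}c_{ij}$ and $b_j^{m-1}c_{ij}$ for all $i,j$; and the \emph{arm} edges $a_i^l a_i^{l+1}$, $b_j^l b_j^{l+1}$ for $0\le l\le m-2$. (When $m=1$ there are no arm edges, and the edges incident to $C$ are $a_ic_{ij}$ and $b_jc_{ij}$.) Thus each $a_i$ hangs via an induced path of length $m-1$ (its \emph{arm}) off a vertex $a_i^{m-1}$ adjacent to the whole row $\{c_{ij}:j\in[n]\}$ of $C$, and symmetrically each $b_j$ hangs off a vertex adjacent to the whole column $\{c_{ij}:i\in[n]\}$. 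Since the only vertex of an arm with a neighbour outside that arm is its deep endpoint, a direct computation of shortest paths gives $d_{G_n}(a_i,b_j)=(m-1)+1+1+(m-1)=2m=k$ for all $i,j$, and $d_{G_n}(a_i,a_j)=d_{G_n}(b_i,b_j)=(m-1)+1+1+1+(m-1)=k+1$ for $i\ne j$, the extra ``$+1$'' in the last formula being one edge of the clique $C$, which is forced when crossing from one $A$-arm (or $B$-arm) to another. Consequently, in $G_n^k$ the sets $A$ and $B$ are independent and each $a_i$ is adjacent to each $b_j$; that is, $G_n^k[A\cup B]=K_{n,n}$.

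To certify that $G_n$ has layered tree-independence number $1$, I would use the layering whose layers, in order, are $A=A_0,A_1,\dots,A_{m-1},C,B_{m-1},\dots,B_1,B_0=B$: every edge joins consecutive layers, each $A_l$ and $B_l$ is independent, and $C$ is a clique. As a tree decomposition I would take the star with central bag $C$, attaching to it, for each $i$, the bag $\{a_i^{m-1}\}\cup C$ and then the chain of two-element bags $\{a_i^{m-2},a_i^{m-1}\},\{a_i^{m-3},a_i^{m-2}\},\dots,\{a_i^0,a_i^1\}$ (each hung off the previous one), and symmetrically for $B$. This is readily checked to be a valid tree decomposition, and every bag is a subset of $C$ together with at most one vertex from each other layer; hence $\alpha(G_n[X_t\cap V_l])\le 1$ for every bag $X_t$ and layer $V_l$. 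So the layered tree-independence number of $G_n$ is at most $1$, and since $G_n$ has an edge it equals $1$.

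The main difficulty I anticipate is finding the right $G_n$, not verifying its properties. The naive attempt — letting $G_n$ be a subdivision of $K_{n,n}$, so that the branch vertices become pairwise far while staying within distance $k$ of one another — fails: a subdivision of $K_{n,n}$ has $K_{n,n}$ as a minor, hence treewidth at least $n$, and being triangle-free with bounded diameter it therefore has \emph{unbounded} layered tree-independence number, exactly what must be avoided. Adding the clique $C$ (and routing every arm through vertices indexed by a single coordinate) is what pulls the treewidth of $G_n$ down far enough for the layering above to witness layered tree-independence number $1$; the delicate point — and what makes the even case genuinely different from the odd case treated in \Cref{layeredtreealgoA} — is that this extra structure must be added without bringing any two vertices of $A$, or any two of $B$, within distance $k$: the computation above shows these distances land at exactly $k+1$, so the induced $K_{n,n}$ inside $G_n^k$ survives.
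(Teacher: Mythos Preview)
Your proof is correct. The route is close in spirit to the paper's but more explicit: the paper simply invokes \cite[Proposition~3.7]{MR22}, which guarantees, for every graph $H$ and every even $k$, a chordal graph $G$ with $H$ an induced subgraph of $G^k$; taking $H=K_{5n,5n}$ and noting that chordal graphs have tree-independence number $1$ (hence layered tree-independence number $1$, via the trivial one-layer layering) finishes immediately. You instead build $G_n$ by hand and exhibit a tree decomposition and a multi-layer layering directly. In fact your $G_n$ is itself chordal --- deleting the pendant arms leaves a split graph (clique $C$, independent set $A_{m-1}\cup B_{m-1}$), and re-attaching the arms preserves chordality --- so you could have shortcut the verification by observing $\tin(G_n)=1$ and using the single-layer layering, exactly as the paper does. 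The payoff of your version is self-containedness: no external citation is needed, and the reader sees concretely why even $k$ is the obstruction (your distance computation lands $d(a_i,a_j)$ at exactly $k+1$). The paper's version buys brevity and makes transparent that chordality alone is the source of the examples.
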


\begin{proof} By the proof of \cite[Proposition~3.7]{MR22}, for every graph $H$, there exists a chordal graph $G$ such that $G^k$ contains an induced subgraph isomorphic to $H$. Take $H = K_{5n,5n}$ and one such $G$. By \Cref{layeredKnnA}, the layered tree-independence number of $G^k$ is at least $n$, whereas $\tin(G) = 1$ and hence $G$ has layered tree-independence number $1$.
\end{proof}

%%%%%%%%%%%%%%%%%%%%%%%%%%%%%%%%%%%%%%%%%%%%%%%%%%%%%%

\subsection{Intersection graphs of bounded layered tree-independence number}\label{sec:layeredlemmas}

In this section, we show that classes of intersection graphs of unit disks in $\mathbb{R}^2$ and of paths with bounded horizontal part on a grid both have bounded layered tree-independence number. As it will appear from the proofs, our tree decompositions witnessing this are in fact path decompositions. 

\begin{theorem}\label{unitdisklayeredA} Let $G$ be the intersection graph of a family $\mathcal{D}$ of $n$ unit disks. It is possible to compute, in $O(n)$ time, a tree decomposition $\mathcal{T} = (T,\{X_t\}_{t\in V(T)}\})$ and a layering $(V_1, V_2, \ldots)$ of $G$ such that $|V(T)| = O(n)$ and, for each bag $X_t$ and layer $V_i$, $\alpha(G[X_t \cap V_i]) \leq 8$. In particular, $G$ has layered tree-independence number at most $8$. 
\end{theorem}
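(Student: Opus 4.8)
**Proof proposal for Theorem \ref{unitdisklayeredA} (intersection graphs of unit disks have layered tree-independence number at most 8).**

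The plan is to build a path decomposition of $G$ guided by a grid of vertical strips, and use the same strips (grouped appropriately) to define the layering. Assume without loss of generality that the unit disks have radius $1/2$, so that two disks intersect if and only if their centers are at Euclidean distance at most $1$. Partition the plane into vertical strips $S_1, S_2, \ldots$ of width $1$, say $S_j = \{(x,y) : j-1 \le x < j\}$ (after translating so that all centers have positive $x$-coordinate), and let $C_j \subseteq \mathcal{D}$ be the set of disks whose center lies in $S_j$. First I would note the key geometric fact: a disk with center in $S_j$ can only intersect disks with centers in $S_{j-1}$, $S_j$, or $S_{j+1}$, since a horizontal displacement exceeding $2$ already forces distance greater than $1$ — wait, more carefully, center-to-center distance at most $1$ forces the $x$-coordinates to differ by at most $1$, hence the centers lie in strips whose indices differ by at most $1$ (two consecutive strips, or with a careful choice of half-open strips, at most consecutive). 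So each $C_j$ sees only $C_{j-1}$ and $C_{j+1}$.

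Next I would define the decomposition. Take $T$ to be a path with one node $t_j$ per nonempty strip, in order of $j$, and set the bag $X_{t_j} = C_{j-1} \cup C_j \cup C_{j+1}$ (dropping empty strips and reindexing). Conditions (T1)–(T3) are immediate: every disk appears in a contiguous run of at most three consecutive bags, and every edge $uv$ of $G$ has both endpoints in some $C_{j-1}\cup C_j \cup C_{j+1}$ by the geometric fact above. For the layering, set $V_i = C_i$; since edges only connect disks in consecutive strips, $(V_1, V_2, \ldots)$ is a valid layering of $G$. (If $\mathcal{D}$ is empty there is nothing to prove; otherwise the reindexing of nonempty strips is cosmetic and does not affect the layering argument since emptiness of a strip only removes potential edges.)

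It remains to bound $\alpha(G[X_{t}\cap V_i])$. A disk of $V_i = C_i$ that also lies in bag $X_{t_j}$ forces $|i-j|\le 1$, and $X_{t_j}\cap C_i = C_i$ in that case, so I must show $\alpha(G[C_i]) \le 8$ for every $i$ — i.e., among disks whose centers lie in a single vertical strip of width $1$, at most $8$ are pairwise non-intersecting. This is the heart of the argument. An independent set in $G[C_i]$ is a set of points in the strip (the centers) that are pairwise at distance greater than $1$. I would subdivide the strip into horizontal blocks of height $1$, giving unit squares (well, $1\times 1$ squares) of diameter $\sqrt2$; by a standard packing/pigeonhole argument each such square contains at most one center of the independent set (two points in a $1\times1$ square are at distance at most $\sqrt2$, but that is larger than $1$, so this naive bound fails) — so instead I would use blocks of height $1/\sqrt2$ or partition each block more finely, or invoke a direct area/packing bound: the disks of radius $1/2$ around an independent set are interior-disjoint and their centers lie in the strip, but they are unbounded vertically, so one pins down the count by the horizontal constraint. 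Concretely, within any horizontal window of height $h$ inside the width-$1$ strip, the number of pairwise-distance-$>1$ points is bounded by a constant depending on $h$; summing over a sliding argument yields the claim that the number per unit of vertical length is bounded, but to get the clean constant $8$ one argues that consecutive independent centers in the strip, sorted by $y$-coordinate, must have $y$-gaps bounded below, combined with a sphere-packing estimate in a $1 \times 2$ rectangle. I expect this combinatorial–geometric counting — getting the exact constant $8$ rather than some larger absolute constant — to be the main obstacle; the decomposition and layering parts are routine. The $O(n)$ running time is clear: bucket the $n$ centers into strips by $\lfloor x \rfloor$ in linear time, discard empty strips, and output the bags and layers directly.
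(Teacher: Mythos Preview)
There is a genuine gap, and it is exactly at the point where you sense trouble. By taking both the bags and the layers from the \emph{same} family of vertical strips, you reduce the task to bounding $\alpha(G[C_i])$, the independence number among all disks with centers in a single width-$1$ vertical strip. But this quantity is not bounded by any constant: place $n$ disks with centers at $(1/2, 2), (1/2, 4), \ldots, (1/2, 2n)$; all centers lie in $S_1$ and all pairwise distances exceed $1$, so $\alpha(G[C_1]) = n$. Your ``sliding argument'' and ``$y$-gaps bounded below'' remarks cannot rescue this, because nothing in your construction limits the vertical extent of $C_i$.

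The paper avoids this by using \emph{orthogonal} strip families: vertical strips for the bags and horizontal strips for the layers. Concretely, $X_{t_i}$ collects the disks intersecting the $i$-th vertical unit strip, while $V_j$ collects the disks whose intersection with the $j$-th horizontal unit strip has area at least $\pi/8$. Then any disk in $X_{t_i}\cap V_j$ meets the $1\times 3$ rectangle $R_j\cap(C_{i-1}\cup C_i\cup C_{i+1})$ in a region of area at least $\pi/8$, and since this rectangle has area $3$, at most $\lfloor 3/(\pi/8)\rfloor = 7 < 8$ pairwise disjoint disks can do so. The fix to your argument is thus a one-line change: keep your vertical-strip path decomposition, but define the layering via horizontal strips instead; the area bound then replaces the packing argument you were looking for.
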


\begin{proof} Without loss of generality, the common radius of the disks is $1/2$, the collection $\mathcal{D}$ is contained in the positive quadrant, and $G$ is connected. Therefore, $\mathcal{D}$ is contained in a $(m-1) \times (m-1)$ square with $m = O(n)$. For each $i \in \mathbb{N}$, let $C_i = \{(x,y) \in \mathbb{R}^2 : i-1 \leq x \leq i\}$ and $R_i = \{(x,y) \in \mathbb{R}^2 : i-1 \leq y \leq i\}$ be the \textit{$i$-th vertical strip} and \textit{$i$-th horizontal strip}, respectively. For each $v \in V(G)$, the disk $D_v$ intersects at most two horizontal strips. Fix an arbitrary $j$ such that the horizontal strip $R_j$ intersects $D_v$ in a region with area at least $\pi/8$ and let $r(v) = j$. We first construct a tree decomposition of $G$. Consider a path $T$ with $m+1$ vertices $\{t_1,\ldots,t_{m+1}\}$. Let $X_{t_i} = \{v \in V(G) : D_v \cap C_i \neq \varnothing\}$. Clearly, for each $v\in V(G)$, there exists $i$ with $v \in X_{t_i}$. Let now $uv \in E(G)$. Then, there exists a point $(x, y) \in \mathbb{R}^2$ contained in both $D_u$ and $D_v$ and so $\{u, v\} \subseteq X_{t_{\lceil x \rceil}}$. Finally, for each $v \in V(G)$, the vertical strips intersecting $D_v$ are consecutive, hence the set of nodes of $T$ whose bag contains $v$ is a subpath. This shows that $\mathcal{T} = (T,\{X_{t_i}\}_{1 \leq i \leq m})$ is a tree decomposition of $G$. It is easy to see that this tree decomposition can be computed in linear time. We now construct a layering of $G$ as follows. For each $j \in \mathbb{N}$, let $V_j =\{v \in V(G) : r(v) = j\}$. Clearly, $(V_1,\ldots,V_{m+1})$ is a partition of $V(G)$. Observe that, for $i$ and $j$ with $|i-j| \geq 2$, if $u\in V_i$ and $v\in V_j$, then $D_u \cap D_v = \varnothing$ and so $uv \not\in E(G)$. Clearly, the layering $(V_1,\ldots,V_{m+1})$ can be computed in linear time. 

Consider an arbitrary bag $X_{t_i}$ and layer $V_j$ as defined above. For any $v \in X_{t_i} \cap V_j$, we have that $r(v)=j$ and $D_v \cap C_i \neq \varnothing$. Then, the horizontal strip $R_j$ intersects $D_v$ in a region with area at least $\pi/8$ and $D_v$ intersects at most one of $C_{i-1}$ and $C_{i+1}$. Therefore, $D_v$ intersects $R_j \cap (C_{i-1} \cup C_i \cup C_{i+1})$ in a region with area at least $\pi/8$. Now, if two vertices in $X_{t_i} \cap V_j$ are non-adjacent in $G$, then the corresponding unit disks are disjoint. Since the area of $R_j \cap (C_{i-1} \cup C_i \cup C_{i+1})$ is $3$, there are at most $\frac{3}{\pi/8} \leq 8$ pairwise non-adjacent vertices in $G[X_{t_i} \cap V_j]$, as claimed.
\end{proof}

\begin{theorem}\label{layeredVPGA}
Let $G$ be a graph on $n$ vertices together with a grid representation $\mathcal{R} = (\mathcal{G}, \mathcal{P},x)$ such that each path in $\mathcal{P}$ has horizontal part of length at most $\ell-1$, for some fixed $\ell \geq 1$, and number of bends constant. It is possible to compute, in $O(n^2)$ time, a tree decomposition $\mathcal{T} = (T, \{X_t\}_{t\in V(T)})$ and a layering $(V_1,V_2\ldots)$ of $G$ such that $|V(T)| = O(n^2)$ and, for each bag $X_t$ and layer $V_i$, $\alpha(G[X_t \cap V_i]) \leq 4\ell-1$. In particular, $G$ has layered tree-independence number at most $4\ell-1$. 
\end{theorem}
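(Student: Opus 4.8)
The plan is to transpose the strategy of \Cref{unitdisklayeredA}: since here it is the \emph{horizontal} extent of the paths that is bounded, we shall layer along the columns and let the (path) decomposition sweep along horizontal strips, and, importantly, the decomposition itself will work for an arbitrary grid representation, the horizontal bound entering only in the layering and in the final independence count. Concretely, let $R_b = \{(x,y)\in\mathbb{R}^2 : b-1\le y\le b\}$ be the $b$-th horizontal strip, let $T$ be the path whose consecutive nodes $t_b$ are the (relevant) strips, and put $X_{t_b} = \{v\in V(G) : P_v\cap R_b\neq\varnothing\}$. Condition (T1) is clear; (T2) holds because an edge $uv$ forces $P_u$ and $P_v$ to share a grid-point, which lies in some strip $R_b$; and (T3) holds because the strips met by a single path are consecutive, its vertical projection being an interval. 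So $\mathcal{T}=(T,\{X_{t_b}\})$ is a path decomposition of $G$, and, after restricting to the polynomially many relevant strips, it has $|V(T)|=O(n^2)$ and is computable in $O(n^2)$ time. We record one fact: if $P_v$ meets $R_b$, then $P_v$ contains a grid-point whose $y$-coordinate is $b-1$ or $b$, since any vertical segment entering the open strip crosses it fully (its endpoints, being bends or path-endpoints, are grid-points) and any horizontal segment inside $R_b$ lies on row $b-1$ or $b$.

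For the layering, group the columns into consecutive blocks $G_k$ of $\ell$ columns and place $v$ in the layer $V_k$ for which the leftmost column met by $P_v$ belongs to $G_k$; this partitions $V(G)$ and is computable in linear time. If $uv\in E(G)$, then $P_u$ and $P_v$ share a grid-point on some column $c_m$, and since each has horizontal part of length at most $\ell-1$ it meets at most $\ell$ consecutive columns, so its leftmost column lies in $\{c_{m-\ell+1},\ldots,c_m\}$. The leftmost columns of $P_u$ and $P_v$ therefore lie within a window of $\ell$ consecutive columns, hence in the same block or in two consecutive blocks, so $|i-j|\le 1$ whenever $u\in V_i$ and $v\in V_j$; thus $(V_1,V_2,\ldots)$ is a layering of $G$.

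For the independence estimate, fix a bag $X_{t_b}$ and a layer $V_k$ and let $v\in X_{t_b}\cap V_k$. As $v\in V_k$, the leftmost column of $P_v$ is in $G_k$, so $P_v$ meets only the $2\ell-1$ columns $W_k:=\{c_{(k-1)\ell+1},\ldots,c_{(k+1)\ell-1}\}$; as $v\in X_{t_b}$, the recorded fact yields a grid-point of $P_v$ in $W_k\times\{b-1,b\}$, a set of at most $2(2\ell-1)=4\ell-2$ points. Selecting one such grid-point for each $v$ defines, on any independent set of $G[X_{t_b}\cap V_k]$, an injection into $W_k\times\{b-1,b\}$, because two vertices sent to the same point would have paths sharing it and hence be adjacent; this gives $\alpha(G[X_{t_b}\cap V_k])\le 4\ell-2\le 4\ell-1$ when $x=v$. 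When $x=e$ the argument is the same with grid-\emph{edges} in place of grid-points: a path meeting $R_b$ and confined to $W_k$ contains a grid-edge incident to $W_k\times\{b-1,b\}$, non-adjacency (no shared grid-edge) makes the assignment injective, and a short case analysis according to how $P_v$ meets $R_b$ --- crossing the open strip, running along row $b-1$ or $b$, or touching $R_b$ in a single boundary grid-point and then leaving it (charged to the grid-edge immediately outside $R_b$) --- confines the targets to a family of size at most $4\ell-1$. The ``in particular'' statement follows at once, $\mathcal{T}$ together with $(V_1,V_2,\ldots)$ witnessing layered independence number at most $4\ell-1$.

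The conceptual crux, and the main pitfall, is the choice of layering: the naive instinct (as for unit disks) is to layer along the ``thin'' direction, but here the objects are thin horizontally while a path may be arbitrarily tall, so layering by rows fails outright --- two tall paths can be adjacent yet lie in far-apart rows --- and the correct move is to layer by columns and sweep the decomposition horizontally. The one remaining piece of real work is the $x=v$ versus $x=e$ bookkeeping in the independence estimate, where the behaviour of EPG paths that merely graze the strip boundary must be tracked; note also that the bend count plays no role in the combinatorial bound $4\ell-1$ and enters only through the size of the representation, which is why the statement still holds with a polynomial number of bends, at the cost of a worse polynomial running time.
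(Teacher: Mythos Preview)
Your overall approach is the same as the paper's --- a path decomposition indexed by $y$-coordinates and a layering by column blocks --- with only cosmetic differences in the specifics. The paper takes as nodes of $T$ the distinct projections $p_{uv}$ of shared grid-points onto the $y$-axis, sets $X'_t=\{u:p_{uv}=t\text{ for some }v\}$, and then fills each vertex into all bags between $t_v^{\min}$ and $t_v^{\max}$; it layers with vertical strips $C_i$ of width $2\ell$ and assigns $v$ to any $V_j$ with $P_v\cap C_j\neq\varnothing$. You instead use unit horizontal strips $R_b$ with $X_{t_b}=\{v:P_v\cap R_b\neq\varnothing\}$ and width-$\ell$ column blocks with the leftmost-column rule. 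The net effect on the independence count is that the paper pins each $v\in X_t$ to a grid-point on a \emph{single} row $t$ within $4\ell-1$ columns, whereas your ``recorded fact'' pins $v\in X_{t_b}$ to a grid-point on one of \emph{two} rows $\{b-1,b\}$ within $2\ell-1$ columns. For $x=v$ this gives you $2(2\ell-1)=4\ell-2\le 4\ell-1$, so that part is correct (and in fact slightly sharper than the paper).

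Your $x=e$ argument, however, does not deliver the claimed $4\ell-1$. Summing the grid-edges arising in your three cases --- the vertical edges between rows $b-1$ and $b$ ($2\ell-1$ of them), the horizontal edges on rows $b-1,b$ inside $W_k$ ($4\ell-4$ of them), and the ``exit'' edges just outside $R_b$ ($4\ell-2$ of them) --- yields $10\ell-7$, so the assertion that the targets are ``confined to a family of size at most $4\ell-1$'' is unsubstantiated. In fairness, the paper's own proof writes only the VPG sentence (``each grid-point on the row indexed by $t$ belongs to at most one $P_v$ with $v\in I$'') and does not separately treat $x=e$ either; applying the $3\ell'-1$ edge count from \Cref{pathA} to the paper's single row with $\ell'=4\ell-1$ columns would also give only $12\ell-4$. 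So for $x=v$ your proof is as complete as the paper's; for $x=e$ you should either drop the explicit $4\ell-1$ claim or supply a genuinely new argument, since the sketched case analysis does not achieve it.
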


\begin{proof} We assume without loss of generality that $G$ is connected. Denote by $P_v$ the path corresponding to the vertex $v \in V(G)$. For each $u \in V(G)$ and $v \in N_G(u)$, we find an arbitrary grid-point that belongs to both $P_u$ and $P_v$ in $O(1)$ time (since $uv \in E(G)$, such a grid-point exists). Let $p_{uv}$ be the projection of this grid-point onto the $y$-axis. Let $T$ be the vertical path whose nodes are the grid-points of the form $p_{uv}$. Clearly, $|V(T)| = O(n^2)$. Order the nodes of $T$ by decreasing $y$-coordinates of the corresponding grid-points. For each $t \in V(T)$, let $X_{t}' = \{u \in V(G) : p_{uv} = t \ \mbox{for some}\ v \in V(G)\}$. We finally update $\{X_t'\}_{t\in V(T)}$ as follows. For each $v \in V(G)$, let $t_{v}^{\min}$ and $t_{v}^{\max}$ be the smallest and largest node of $T$ such that the corresponding bag contains $v$, respectively. We then add $v$ to all the bags corresponding to nodes between $t_{v}^{\min}$ and $t_{v}^{\max}$. Let $\{X_t\}_{t\in V(T)}$ be the family of bags thus obtained and let $\mathcal{T} = (T, \{X_t\}_{t\in V(T)})$. It is easy to see that $\mathcal{T}$ can be constructed in $O(n^2)$ time. We show that $\mathcal{T}$ is a tree decomposition of $G$. Since $G$ has no isolated vertex, for each $u \in V(G)$, there exists $t \in V(T)$ such that $t = p_{uv}$ for some $v \in V(G)$. Hence, $u \in X_t$ and (T1) holds. If $uv \in E(G)$, then $P_u$ and $P_v$ share at least one grid-point of $\mathcal{G}$ and so there exists $t \in V(T)$ such that $t = p_{uv}$ and (T2) holds. Finally, by construction, the set $\{t \in V(T) : u \in X_t\}$ induces a subpath of $T$ and (T3) holds. 
Let $C_i = \{(x,y) \in \mathbb{R}^2 : 2(i-1)\ell \leq x \leq 2i\ell\}$ be the \textit{$i$-th vertical strip}. For each $v \in V(G)$ fix an arbitrary $j$ such that the vertical strip $C_j$ intersects $P_v$ and let $r(v) = j$. 
 We now construct a layering of $G$ as follows. For each $i \in \mathbb{N}$, let $V_i = \{v \in V(G): r(v) = i\}$. Clearly, $(V_1,V_2, \ldots)$ is a partition of $V(G)$. Observe that, for $i$ and $j$ with $|i-j| \geq 2$, if $u\in V_i$ and $v\in V_j$, then $P_u \cap P_v = \varnothing$ and so $uv \not\in E(G)$. The layering can be clearly computed in linear time given the representation of $G$. 

Consider an arbitrary bag $X_t$ and layer $V_i$ as defined above. We now show that $\alpha(G[X_t]\cap V_i) \leq 4\ell - 1$. Let $I$ be an independent set of $G[X_t \cap V_i]$. Observe that, by construction, for each $v \in I \subseteq X_t \cap V_i$, the path $P_v$ contains a grid-point at the intersection of the row of $\mathcal{G}$ indexed by $t$ and a column of $\mathcal{G}$ indexed by $p$, for some $ 2(i-1)\ell - (\ell - 1) \leq p \leq 2i\ell + (\ell - 1)$. Since $I$ is an independent set, each grid-point on the row indexed by $t$ belongs to at most one $P_v$ with $v \in I$; and since there are $4\ell -1$ columns of $\mathcal{G}$ indexed by some $2(i-1)\ell - (\ell-1) \leq p \leq 2i\ell + (\ell-1)$, we conclude that $|I| \leq 4\ell - 1$.
\end{proof}

%%%%%%%%%%%%%%%%%%%%%%%%%%%%%%%%%%%%%%%%%%%%%%%%%%%%%%

\section{Fractional $\tin$-fragility}\label{sec:fragilityA}

Let $p$ be a width parameter in $\{\tw, \tin\}$. Fractional $\tw$-fragility was first defined in \cite{Dvo16}. We provide here an equivalent definition from \cite{Dvo22}, which was explicitly extended to the case $p = \tin$ in \cite{GWP22}.

\begin{definition}
For $\beta \leq 1$, a $\beta$-general cover of a graph $G$ is a multiset $\mathcal{C}$ of subsets of $V(G)$ such that each vertex belongs to at least $\beta|\mathcal{C}|$ elements of the cover. The $p$-width of the cover is $\max_{C \in \mathcal{C}}p(G[C])$.

For a parameter $p$, a graph class $\mathcal{G}$ is fractionally $p$-fragile if there exists a function $f\colon\mathbb{N}\rightarrow\mathbb{N}$ such that, for every $r\in\mathbb{N}$, every $G \in \mathcal{G}$ has a $(1 - 1/r)$-general cover with $p$-width at most $f(r)$. 

A fractionally $p$-fragile class $\mathcal{G}$ is efficiently fractionally $p$-fragile if there exists an algorithm that, for every $r\in\mathbb{N}$ and $G \in \mathcal{G}$, returns in $\mathsf{poly}(|V(G)|)$ time a $(1 - 1/r)$-general cover $\mathcal{C}$ of $G$ and, for each $C \in \mathcal{C}$, a tree decomposition of $G[C]$ of width (if $p = \tw$) or independence number (if $p = \tin$) at most $f(r)$, for some function $f\colon\mathbb{N}\rightarrow\mathbb{N}$.
\end{definition}

Note that classes of bounded tree-independence number are efficiently fractionally $\tin$-fragile thanks to \cite{DFGK22}. Hence, the family of efficiently fractionally $\tin$-fragile classes contains the two incomparable families of bounded tree-independence number classes and efficiently fractionally $\tw$-fragile classes (to see that they are incomparable, consider chordal graphs and planar graphs). We now identify one more subfamily:

\begin{lemma}\label{layeredtofragileA} Let $\ell \in \mathbb{N}$ and let $G$ be a graph. For each $r \in \mathbb{N}$, given a tree decomposition $\mathcal{T} = (T,\{X_t\}_{t\in V(T)})$ of $G$ and a layering $(V_0,V_1,\ldots)$ of $G$ such that, for each bag $X_t$ and layer $V_i$, $\alpha(G[X_t \cap V_i]) \leq \ell$, it is possible to compute in $O(|V(G)|)$ time a $(1 - 1/r)$-general cover $\mathcal{C}$ of $G$ and, for each $C \in \mathcal{C}$, a tree decomposition of $G[C]$ with tree-independence number at most $\ell (r-1)$. In particular, if every graph in a class $\mathcal{G}$ has layered tree-independence number at most $\ell$, then $\mathcal{G}$ is fractionally $\tin$-fragile with $f(r) = \ell (r-1)$.
\end{lemma}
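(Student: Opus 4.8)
The plan is to use a standard "shifting" argument on the layers: pick a residue class of indices modulo $r$ to delete, which by averaging can be done so that the deleted layers contain only a $1/r$ fraction of the weight at each vertex (in fact, each vertex lives in at most two consecutive layers, so deleting every $r$-th layer removes each vertex from at most... well, we just take $r$ shifts and each vertex is deleted in at most one of them). Concretely, for $s \in \{0, 1, \ldots, r-1\}$ let $D_s = \bigcup_{i \equiv s \pmod r} V_i$ and set $C_s = V(G) \setminus D_s$. Since the $D_s$ partition $V(G)$, each vertex belongs to exactly $r-1$ of the sets $C_0, \ldots, C_{r-1}$, so $\mathcal{C} = \{C_0, \ldots, C_{r-1}\}$ is a $(1 - 1/r)$-general cover. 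Computing this cover from the given layering is clearly $O(|V(G)|)$ time.

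The second step is to exhibit, for each $s$, a tree decomposition of $G[C_s]$ with independence number at most $\ell(r-1)$. The natural candidate is to restrict the given decomposition: set $\mathcal{T}_s = (T, \{X_t \cap C_s\}_{t \in V(T)})$. Properties (T1)–(T3) are inherited immediately by restriction to an induced subgraph, so $\mathcal{T}_s$ is a tree decomposition of $G[C_s]$. For the independence number bound, fix a bag $X_t$ and note that $X_t \cap C_s$ meets only those layers $V_i$ with $i \not\equiv s \pmod r$; crucially, a single bag $X_t$ need not be confined to few layers, but since $C_s$ omits every $r$-th layer, consecutive runs of surviving layers have length at most $r-1$. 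Here is the subtlety I need to handle: $\alpha(G[X_t \cap C_s])$ is at most the sum of $\alpha(G[X_t \cap V_i])$ over the surviving layers $i$, and there could be many surviving layers in a single bag. So this crude bound gives $\ell \cdot (\text{number of surviving layers touching } X_t)$, which is not obviously $\ell(r-1)$.

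The fix — and this is the point I expect to be the main obstacle, or rather the one subtle observation that makes it work — is that for an \emph{independent} set $U$ in $G[X_t \cap C_s]$, any two layers $V_i$, $V_j$ both containing a vertex of $U$ must satisfy $|i - j| \le 1$: indeed if $|i-j| \ge 2$ then every vertex of $V_i$ is non-adjacent to every vertex of $V_j$ by the layering property, which does not give a contradiction by itself — so instead I should argue that $U$ spans at most two consecutive layers is \emph{false}, and the correct statement is that $U$ lies within the part of $C_s$ inside $X_t$, and since $C_s$ has no layer $\equiv s$, the layers touched by $X_t \cap C_s$ form blocks of at most $r-1$ consecutive indices, but an independent set can still use many blocks. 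Let me reconsider: the clean argument is that $\alpha(G[X_t \cap C_s]) \le \sum_{i} \alpha(G[X_t \cap V_i])$ where the sum is over \emph{all} layers, and we must bound the number of nonempty terms. Actually the honest route is: partition the surviving layer indices touching $X_t$ into maximal runs of consecutive integers; each run has length $\le r-1$; within a run the independence number contributes at most $\ell \cdot (r-1)$; and across two different runs, the gap is a deleted layer, but vertices in different runs can be far apart hence non-adjacent, so independent sets do combine across runs. Hence the bound $\ell(r-1)$ requires that $X_t$ touches only one run — which need not hold. I therefore expect the real argument to instead delete a \emph{window} rather than a residue class: for $s \in \{0,\ldots,r-1\}$ let $C_s = \bigcup\{V_i : i \not\equiv s \pmod r\}$ but then further observe that $G[C_s]$ is the disjoint union over blocks $B$ of $r-1$ consecutive layers of the graphs $G[\bigcup_{i \in B} V_i]$ — because there are no edges between distinct blocks (they are separated by a deleted layer, and the layering property forbids edges between layers at distance $\ge 2$). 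So $G[C_s]$ is a disjoint union of graphs each living in $r-1$ consecutive layers. For each such block graph, restricting $\mathcal{T}$ gives a tree decomposition in which every bag meets at most $r-1$ layers, hence has independence number at most $\ell(r-1)$ by subadditivity of $\alpha$ over the layer intersections; taking the disjoint union of these tree decompositions (joined arbitrarily into one tree) yields the required tree decomposition of $G[C_s]$ of tree-independence number $\le \ell(r-1)$. That disjointness-across-blocks observation is the crux, and once it is in place the running time, the $(1-1/r)$-general cover property, and the passage to the class-level statement "$\mathcal{G}$ fractionally $\tin$-fragile with $f(r) = \ell(r-1)$" are all routine.
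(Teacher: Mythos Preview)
Your proposal is correct and, after working through the false starts, arrives at essentially the same argument as the paper: define $C_s$ by removing the layers with index $\equiv s \pmod r$, observe that $G[C_s]$ breaks into pieces each contained in at most $r-1$ consecutive layers (the paper phrases this via connected components, you via layer-blocks, which is equivalent since distinct blocks are separated by a deleted layer), and restrict $\mathcal{T}$ to each piece to get independence number at most $\ell(r-1)$. The only cosmetic difference is that the paper's write-up omits the exploratory dead ends and goes straight to the component observation.
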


\begin{proof} Fix $r \in \mathbb{N}$. Let $\mathcal{T} = (T, \{X_t\}_{t\in V(T)})$ and $(V_0,V_1,\ldots)$ be the given tree decomposition and layering of $G$, respectively. For each $m \in \{0,\ldots,r-1\}$, let $C_m = \bigcup_{i \not\equiv m \pmod r}V_i$. We claim that $\mathcal{C} = \{C_m : 0 \leq m \leq r-1\}$ is a $(1-1/r)$-general cover of $G$ with tree-independence number at most $\ell (r-1)$. Observe first that each $v \in V(G)$ is not covered by exactly one element of $\mathcal{C}$ and so it belongs to $r-1 = (1-1/r)|\mathcal{C}|$ elements of $\mathcal{C}$. Let now $C \in \mathcal{C}$. Each component $K$ of $G[C]$ is contained in at most $r-1$ (consecutive) layers and so, since $\alpha(G[X_t \cap V_i]) \leq \ell$ for each bag $X_t$ and layer $V_i$, restricting the bags in $\mathcal{T}$ to $V(K)$, gives a tree decomposition of $K$ with tree-independence number at most $\ell (r-1)$. We then merge the tree decompositions of the components of $G[C]$ into a tree decomposition of $G[C]$ with tree-independence number at most $\ell (r-1)$ in linear time.    
\end{proof}

Note that the same argument of \Cref{layeredtofragileA} shows that, if every graph in a class $\mathcal{G}$ has bounded layered treewidth, then $\mathcal{G}$ is fractionally $\tw$-fragile. The following result implies that, if a class is fractionally $\tin$-fragile, then it has bounded biclique number.

\begin{theorem}\label{bicliqueA}
For any function $f\colon \mathbb{N} \rightarrow \mathbb{N}$ and integer $r >2$, there exists $n$ such that no $(1-1/r)$-general cover of $K_{n,n}$ has tree-independence number less than $f(r)$. Hence, the class $\{K_{n,n}: n \in \mathbb{N}\}$ is not fractionally $\tin$-fragile.
\end{theorem}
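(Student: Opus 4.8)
The plan is to show that any $(1-1/r)$-general cover $\mathcal{C}$ of $K_{n,n}$ must contain an element $C$ that induces a large biclique, and then invoke the fact that $\tin(K_{m,m}) = m$ \cite{DaMS22} to conclude that this $C$ forces the tree-independence number of the cover to be large. Write $A \cup B$ for the bipartition of $K_{n,n}$, with $|A| = |B| = n$. Fix a $(1-1/r)$-general cover $\mathcal{C} = \{C_1,\ldots,C_s\}$, so each vertex lies in at least $(1-1/r)s$ of the sets, equivalently in at most $s/r$ of them. The key observation is a double-counting / averaging argument on the $s \times (2n)$ incidence: the number of pairs $(C_j, a)$ with $a \in A \setminus C_j$ is at most $n \cdot s/r$, so the average over $j$ of $|A \setminus C_j|$ is at most $n/r$; hence at least one index $j$ has $|A \cap C_j| \geq n(1 - 1/r) \geq n/2$ (using $r > 2$, indeed $r \geq 3$). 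This single averaging step does not suffice, though, because the same $C_j$ might contain very little of $B$.

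To handle both sides simultaneously, the cleaner route is as follows. For each $j$, let $a_j = |A \cap C_j|$ and $b_j = |B \cap C_j|$. We have $\sum_j a_j \geq n s (1-1/r)$ and $\sum_j b_j \geq n s(1-1/r)$, hence $\sum_j (a_j + b_j) \geq 2ns(1-1/r)$, so the average of $a_j + b_j$ is at least $2n(1-1/r)$, and in particular some $C_j$ has $a_j + b_j \geq 2n(1-1/r)$; since trivially $a_j, b_j \le n$, this forces both $a_j \ge n(1 - 2/r)$ and $b_j \ge n(1-2/r)$. Since $r > 2$, set $t = r/(r-2) > 0$; more carefully, for $r \geq 3$ we get $a_j, b_j \geq n/3 > 0$ once $n \geq 3$, but to beat an arbitrary $f(r)$ we simply take $n$ large: choose $n$ with $n(1 - 2/r) \geq f(r)$, i.e. $n \geq r f(r)/(r-2)$. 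Then $G[C_j] \supseteq G[(A\cup B) \cap C_j]$ contains $K_{m,m}$ as an induced subgraph with $m = \lfloor n(1-2/r)\rfloor \geq f(r)$, so $\tin(G[C_j]) \geq \tin(K_{m,m}) = m \geq f(r)$ by \cite{DaMS22} and monotonicity of tree-independence number under induced subgraphs. Hence the $\tin$-width of $\mathcal{C}$ is at least $f(r)$, proving the first statement.

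For the final sentence: suppose $\{K_{n,n} : n \in \mathbb{N}\}$ were fractionally $\tin$-fragile, witnessed by some $f$. Apply the definition with any fixed $r \geq 3$: every $K_{n,n}$ would admit a $(1-1/r)$-general cover of $\tin$-width at most $f(r)$, contradicting what was just shown for $n$ large enough. The main (and only) obstacle is making sure the averaging genuinely pins down a \emph{single} cover element rich on both sides of the bipartition rather than two different elements each rich on one side; this is exactly why I count $a_j + b_j$ together and use the bound $a_j, b_j \leq n$, and it is where the hypothesis $r > 2$ is used — it guarantees $n(1-2/r) > 0$ so that the forced biclique is non-trivial and can be scaled up past $f(r)$.
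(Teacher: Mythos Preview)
Your proof is correct and follows essentially the same approach as the paper: an averaging argument shows some $C_j$ has $|C_j| = a_j + b_j \geq 2n(1-1/r)$, and since $a_j, b_j \leq n$ this forces $a_j, b_j \geq n(1-2/r) > f(r)$, whence $G[C_j]$ contains an induced $K_{f(r),f(r)}$ and $\tin(G[C_j]) \geq f(r)$ by \cite{DaMS22}. The paper's write-up is slightly terser (it jumps straight to a single $C$ of size at least $2n(1-1/r)$ and uses $|A\cap C| \geq |C| - |B|$), but the argument is the same.
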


\begin{proof}
Fix arbitrary $f\colon \mathbb{N} \rightarrow \mathbb{N}$ and $r>2$. Consider a copy $G$ of $K_{n,n}$, with $n > f(r)/(1 - 2/r)$. Let $\mathcal{C}$ be a $(1-1/r)$-general cover of $G$. Then, every vertex of $G$ belongs to at least $(1-1/r)|\mathcal{C}|$ elements of $\mathcal{C}$ and so there exists $C \in \mathcal{C}$ of size at least $2n(1-1/r)$. Let $A$ and $B$ be the two bipartition classes of $G$. Then, $|A\cap C| \geq |C| - |B| \geq 2n(1-1/r) - n = n(1-2/r) > f(r)$ and, similarly, $|B\cap C| > f(r)$. Therefore, $G[C]$ contains $K_{f(r),f(r)}$ as an induced subgraph and since $\tin(K_{f(r),f(r)}) = f(r)$ \cite{DaMS22}, $\tin(G[C]) \geq f(r)$. 
\end{proof}

However, the following result shows that small biclique number does not guarantee fractional $\tin$-fragility.

\begin{theorem}
The class of $K_{2,3}$-free graphs is not fractionally $\tin$-fragile.
\end{theorem}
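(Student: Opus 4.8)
The plan is to exhibit, for a suitable function $f$ and integer $r$, a $K_{2,3}$-free graph that admits no $(1-1/r)$-general cover of bounded tree-independence number. The natural candidate, mentioned in the introduction, is a high-dimensional grid: let $G$ be the $n$-th power of the Cartesian product of $n$ paths, each on $n$ vertices — i.e., the $n$-dimensional grid of side length $n$ — which is $K_{2,3}$-free because any two vertices have at most one common neighbour in a grid (two coordinates can differ by $1$ in at most one way that is shared, essentially), and more importantly has the property that its small separators are large. The key structural fact I would use is a \emph{separator-type lower bound}: the $n$-dimensional grid of side $n$ has the property that every subset $C$ of at least a constant fraction of the vertices, when restricted, still contains a large ``sub-grid-like'' obstruction forcing large tree-independence number. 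Concretely, tree-independence number at least $w$ implies (by the standard balanced-separator argument for tree decompositions) that $G[C]$ has no balanced separator $S$ with $\alpha(G[S]) < w$; so it suffices to show that for $C$ occupying, say, a $(1-1/r)$ fraction of the $n^n$ vertices, every balanced separator of $G[C]$ induces a subgraph with large independence number.

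The key steps, in order, would be: (1) verify $K_{2,3}$-freeness of $G$ directly from the grid structure — in a Cartesian product of paths, two vertices at distance $2$ have exactly two common neighbours unless they differ in one coordinate by $2$ (one common neighbour), and never three, which kills any induced $K_{2,3}$; one must double-check the power does not reintroduce $K_{2,3}$, which may force taking $G$ to be the grid itself rather than a power, or choosing parameters carefully. (2) Recall, or cite from the tree-independence-number literature (the balanced-separator lemma analogous to the treewidth one in \cite{DaMS22}), that if $\tin(H) < w$ then $H$ has a set $S$ with $\alpha(H[S]) < w$ whose removal leaves components each of size at most $|V(H)|/2$. (3) Given a $(1-1/r)$-general cover $\mathcal{C}$, pick by averaging a part $C \in \mathcal{C}$ with $|C| \geq (1-1/r)n^n$, so that $G[C]$ misses at most an $n^n/r$ fraction of vertices. (4) Show the isoperimetry of the $n$-cube of side $n$: any way of splitting (most of) its vertices into two roughly equal halves must cut through an ``interface'' that contains an independent set of size growing with $n$ — this follows from a discrete isoperimetric inequality (vertex-isoperimetry in the grid graph), giving an interface of size at least polynomial in $n$, and since the interface is itself essentially a lower-dimensional grid it contains a large independent set. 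Combining (2)–(4): $\tin(G[C]) \geq$ (size of independent set forced in any balanced separator) $\to \infty$ with $n$, so no fixed $f(r)$ works.

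I would present this as: fix $f$ and $r > 2$; choose $n$ large depending on $f(r)$ and $r$; let $G$ be the $n$-dimensional grid graph of side $n$ (or a small fixed power thereof if needed for the independence-set-in-separator bound, at the cost of re-checking $K_{2,3}$-freeness); argue as above. An alternative, possibly cleaner route avoiding isoperimetry is to use the \emph{local} obstruction directly: since deleting a $1/r$-fraction of vertices from the side-$n$ grid still leaves, by a counting/pigeonhole argument on axis-aligned sub-cubes, an induced subgraph that \emph{contains a large grid minor} or at least contains the addition of a dominating-type vertex to a large grid — mirroring the proof of \Cref{equivlayeredA}, part (2)$\Rightarrow$(1) — and large grids have unbounded tree-independence number by \cite[Lemma~3.2]{DaMS22}; one then only needs that the surviving part contains an induced copy of (grid-plus-dominating-vertex) or similar, which a dimension-counting argument should supply.

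The main obstacle I anticipate is step (4)/the alternative's surviving-obstruction claim: one must show robustly that deleting any $1/r$-fraction of the vertices of the $n$-dimensional side-$n$ grid leaves a subgraph whose tree-independence number still grows with $n$. A fraction-$1/r$ deletion is adversarial and could in principle be arranged to slice the cube into many pieces; the work is in proving that it cannot do so without leaving a large ``grid-like'' induced subgraph in some piece. I would handle this either via a clean vertex-isoperimetric inequality for $[n]^n$ (so that balanced separators are unavoidably large, and large separators in a grid contain large independent sets), or via a pigeonhole over the $n$ axis directions and $n$ coordinate slices showing some axis-aligned two-dimensional slice survives almost intact and hence contributes a large grid; pinning down the exact quantitative bound is the only genuinely delicate part, and the choice of $n$ as a function of $f(r)$ and $r$ must be made after that bound is in hand.
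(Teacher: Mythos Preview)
Your choice of graph family and steps (1)--(3) match the paper exactly: it uses the $n$-dimensional grid $G_n$ of side $n$ (the Cartesian product itself, not any graph power --- your opening sentence conflates the two, but you correct course immediately). The divergence is at step (4), where the paper takes a much shorter route that sidesteps isoperimetry entirely. Since $G_n$ has minimum degree $n$ and maximum degree $2n$, deleting at most $n^n/r$ vertices removes at most $2n \cdot n^n/r$ edges; a one-line count then shows that $G[C]$ has average degree at least $n(1-4/r)$. The paper now invokes the elementary fact (e.g.\ \cite[Corollary~1]{CS05}) that treewidth is at least half the average degree, so every tree decomposition of $G[C]$ has a bag of size at least $\frac{r-4}{2r}n + 1$. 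Finally, since $G_n$ is triangle-free, choosing $n$ so that this exceeds the Ramsey number $R(3, f(r))$ forces every such bag to contain an independent set of size $f(r)$, giving $\tin(G[C]) \geq f(r)$.

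This buys exactly what your isoperimetric route was reaching for --- a lower bound on bag size that survives the adversarial deletion --- but via edge-counting and a treewidth/degree inequality rather than any structural separator analysis of $G[C]$. Your balanced-separator plan is not wrong in spirit, but the obstacle you flagged (vertex-isoperimetry for the \emph{deleted} grid $G[C]$ rather than $G_n$ itself) is real and would require substantial extra work; the degree-based bound neatly absorbs the deletion into a constant-factor loss in average degree and avoids the issue altogether. The Ramsey step at the end also replaces your ``large separators in a grid contain large independent sets'' with the simpler observation that any large set of vertices in a triangle-free graph already does.
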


\begin{proof} Let $G_n$ be the $n$-dimensional grid graph of width $n$, i.e., the graph with vertex set $V(G_n) = [n]^n = \{(a_1,\ldots,a_n) : 1 \leq a_1,\ldots,a_n \leq n\} $, where two vertices $(a_1,\ldots,a_n)$ and $(b_1,\ldots,b_n)$ are adjacent if and only if $\sum_{1 \leq i \leq n} |a_i-b_i|=1$. It is not difficult to see that $G_n$ is $K_{2,3}$-free, for each $n \in \mathbb{N}$. We show that the class $\{G_n : n\in\mathbb{N}\}$ is not fractionally $\tin$-fragile. 

Fix arbitrary $f\colon \mathbb{N}\rightarrow\mathbb{N}$ and $r>2$. For such a choice, fix $n \in \mathbb{N}$ such that $\frac{r-4}{2r}n + 1 \geq R(3, f(r))$, where $R(3,s)$ denotes the smallest integer $m$ for which every graph on $m$ vertices either contains a clique of size $3$ or an independent set of size $s$. We now show that every $(1-1/r)$-general cover of $G_n$ has tree-independence at least $f(r)$. Let $\mathcal{C}$ be a $(1-1/r)$-general cover of $G_n$. Then, every vertex of $G_n$ belongs to at least $(1-1/r)|\mathcal{C}|$ elements of $\mathcal{C}$ and so there exists $C \in \mathcal{C}$ containing at least $(1-1/r)|V(G_n)| = (1-1/r)n^n$ vertices of $G_n$. Fix such a $C$ and let $G$ be the subgraph of $G_n$ induced by $C$. We claim that $\tin(G) \geq f(r)$.

Observe first that, for each $v \in V(G_n)$, $n \leq d_{G_n}(v) \leq 2n$. Hence, $2|E(G_n)| = \sum_{v \in V(G_n)}d_{G_n}(v) \geq n\cdot n^n$. Consider now the graph $G'$ obtained from $G_n$ by deleting the vertex set $C$. Clearly, $G'$ has at most $n^n/r$ vertices. Since deleting a vertex from $G_n$ decreases the number of edges of the resulting graph by at most $2n$, we have that $|E(G)|\geq |E(G_n)| - 2n|V(G')|$, from which $\sum_{v \in V(G)}d_{G}(v) \geq n\cdot n^n - 2\cdot 2n\cdot n^n/r = n\cdot n^n(1 - 4/r)$. Therefore, the average degree of $G$ is at least $n(1-4/r)$ and so $\tw(G) \geq \frac{r-4}{2r}n$, for example by \cite[Corollary~1]{CS05}. This implies that every tree decomposition of $G$ has a bag of size at least $\frac{r-4}{2r}n + 1 \geq R(3, f(r))$ and, since $G$ is triangle-free, it follows that $\tin(G) \geq f(r)$.
\end{proof}

%%%%%%%%%%%%%%%%%%%%%%%%%%%%%%%%%%%%%%%%%%%%%%%%%%%%%%

\subsection{Intersection graphs of fat objects}\label{fatA}

In this section we show that the class of intersection graphs of fat objects in $\mathbb{R}^d$ is efficiently fractionally $\tin$-fragile. Let $d \geq 2$ be a fixed integer. A \textit{box of size $r$} is an axis-aligned hypercube in $\mathbb{R}^d$ of side length $r$. The \textit{size} of an object $O$ in $\mathbb{R}^d$, denoted $s(O)$, is the side length of its smallest enclosing hypercube. Unless otherwise stated, all boxes considered in the following are axis-aligned. 

Chan \cite{Cha03} considered the following definition of fatness: A collection of objects in $\mathbb{R}^d$ is $\textit{fat}$ if, for any $r$ and size-$r$ box $R$, we can choose $c$ points in $\mathbb{R}^d$ such that every object that intersects $R$ and has size at least $r$ contains at least one of the chosen points. Chan also stated that a collection of balls or boxes with bounded aspect ratios are fat (recall that the aspect ratio of a box is the ratio of its largest side length over its smallest side length). We slightly generalize this fatness definition as follows.

\begin{definition} A collection of objects in $\mathbb{R}^d$ is $c$-$\textit{fat}$ if, for any $r$ and any size-$r$ closed box $R$, for every sub-collection $\mathcal{P}$ of pairwise non-intersecting objects which intersect $R$ and are of size at least $r$, we can choose $c$ points in $\mathbb{R}^d$ such that every object in $\mathcal{P}$ contains at least one of the chosen points.
\end{definition}

\begin{remark} When working with a $c$-fat collection of objects, we assume that some reasonable operations can be done in constant time: determining the center and size of an object, deciding if two objects intersect and constructing the geometric realization of the collection. 
\end{remark}

For completeness, we now observe that collections of balls and boxes with bounded aspect ratios are $c$-fat. The following lemma comes in handy. For a measurable set $A \subseteq \mathbb{R}^d$, we denote by $\vol(A)$ the Lebesgue measure of $A$.

\begin{lemma}\label{cfatA} Let $\mathcal{O}$ be a finite collection of measurable objects in $\mathbb{R}^d$. Let $r_0 = \min_{O \in \mathcal{O}}s(O)$. Suppose that there exists $0 < a \leq 1$ such that, for every object $O \in \mathcal{O}$, for every point $x \in O$ and for every $r$ with $r_0 \leq r \leq s(O)$, the set $O' \subseteq O$ of points within distance $r$ from $x$ has $\vol(O') \geq ar^d$. Then, $\mathcal{O}$ is $\frac{3^d}{a}$-fat.
\end{lemma}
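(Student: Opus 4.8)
The plan is to show that, for any size-$r$ box $R$ and any sub-collection $\mathcal{P}$ of pairwise non-intersecting objects of $\mathcal{O}$, each of size at least $r$ and each intersecting $R$, a bounded number of points suffices to pierce all of $\mathcal{P}$. The natural choice of piercing points is a regular grid: subdivide the box $R'$ obtained by enlarging $R$ by $r$ on every side (so $R'$ is a box of side length $3r$) into a grid of cells of side length $r/\sqrt{d}$, or more simply work directly with $R'$ and count. Concretely, the idea is that every object $O \in \mathcal{P}$ intersects $R$, has size at least $r$, hence contains a point $x_O \in R$; the portion $O'$ of $O$ consisting of points within distance $r$ of $x_O$ lies entirely inside $R'$ (since any such point is within distance $r$ of $R$, hence in the $r$-neighbourhood of $R$, which is contained in $R'$) and, by hypothesis with this value of $r$ (note $r_0 \le r$ is needed, which holds because $s(O) \ge r$ and every object has size at least $r_0$, but we actually need $r \ge r_0$ — I will address this below), satisfies $\vol(O') \ge a r^d$.

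The key step is then a volume/packing argument. Since the objects in $\mathcal{P}$ are pairwise non-intersecting, the sets $\{O' : O \in \mathcal{P}\}$ are pairwise disjoint, all contained in $R'$, and each has volume at least $a r^d$. As $\vol(R') = (3r)^d = 3^d r^d$, we get $|\mathcal{P}| \le 3^d r^d / (a r^d) = 3^d/a$. Hence $\mathcal{P}$ has at most $c := 3^d/a$ objects in it, and we may simply pick one point from each object of $\mathcal{P}$: these $c$ points (or fewer) pierce every object of $\mathcal{P}$, establishing that $\mathcal{O}$ is $\frac{3^d}{a}$-fat. (If $|\mathcal{P}| < 3^d/a$ we pad the point set arbitrarily, or just note the definition only requires "at most $c$" piercing points — one per object of $\mathcal{P}$ — which is automatically at most $c$.)

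The main obstacle, and the only subtlety, is the hypothesis range $r_0 \le r \le s(O)$ versus the value of $r$ appearing in the definition of $c$-fatness, which is an arbitrary positive real. The definition quantifies over all size-$r$ boxes $R$ and all $r$; but the only objects that matter are those of size at least $r$ and intersecting $R$. If $r < r_0$ then there are no objects of size $\ge r$ that are "too small" to apply the hypothesis — indeed for every relevant object $O$ we have $s(O) \ge \max(r, r_0) = r_0$, but we want to bound $\vol(O')$ for $O'$ the set of points of $O$ within distance $r$ of $x_O$. When $r < r_0$, we instead apply the hypothesis with $r_0$ in place of $r$ to get a subset of volume $\ge a r_0^d \ge a r^d$ within distance $r_0$ of $x_O$; this larger set need not be contained in $R'$, so instead one restricts to points within distance $r$ and uses that, by rescaling the same hypothesis is only guaranteed for radii $\ge r_0$. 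The clean fix is to observe that we only ever need the inequality $\vol(O') \ge a r^d$ for the specific radius $r$ equal to the box size, and that this radius satisfies $r \le s(O)$ for every relevant $O$; for the lower constraint, if $r < r_0$ then $R$ itself has side length less than $r_0 \le s(O)$ so $R \subseteq O$ is impossible to exploit directly — instead one simply notes $\mathcal{P}$ objects each contain a full ball-like chunk of volume $\ge a r_0^d > a r^d$ intersected with the $r$-ball, which still forces a packing bound with the same constant since the disjoint chunks, intersected with $R'$, each have volume $\ge a r^d$ after clipping; I would spell this out by applying the hypothesis at radius $\max(r, r_0)$ and then intersecting with the distance-$r$ ball, which for $r \ge r_0$ changes nothing and for $r < r_0$ still leaves volume $\ge a r^d$ by a homothety/monotonicity argument. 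In writing the final proof I would state the grid argument for the generic case $r \ge r_0$ and dispatch $r < r_0$ in one sentence.
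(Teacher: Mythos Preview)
Your approach is essentially identical to the paper's: enlarge $R$ to a concentric box $R'$ of side $3r$, observe that each object in $\mathcal{P}$ contributes a disjoint chunk of volume at least $ar^d$ inside $R'$, and conclude $|\mathcal{P}|\le 3^d/a$, so one point per object suffices. The paper's proof is three sentences and simply asserts that any relevant object ``must intersect $R'$ in a set with volume at least $ar^d$'', without separating out the case $r<r_0$ at all.

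Your worry about $r<r_0$ is legitimate and in fact more careful than the paper, but the resolution is simpler than the homothety/monotonicity route you sketch: if $r<r_0$, enlarge $R$ to any box $\tilde R\supseteq R$ of size $r_0$; every object of size $\ge r$ intersecting $R$ also has size $\ge r_0$ and intersects $\tilde R$, so the same $\mathcal{P}$ is a valid sub-collection for $(\tilde R,r_0)$, and the generic case gives $|\mathcal{P}|\le 3^d/a$. One sentence disposes of it.
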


\begin{proof} Fix $r$ and a size-$r$ closed box $R$. Consider the box $R'$ of side length $3r$ and with the same center as $R$. Any object in $\mathcal{O}$ of size at least $r$ intersecting $R$ must intersect $R'$ in a set with volume at least $ar^d$. However, $\vol(R') = (3r)^d$, and so there are at most $\frac{(3r)^d}{ar^d} = \frac{3^d}{a}$ pairwise non-intersecting objects from $\mathcal{O}$ which intersect $R$. 
\end{proof}

Since any size-$r$ ball has volume at least $r^d/d!$, a collection of balls in $\mathbb{R}^d$ is $3^d d!$-fat. Moreover, since the length of a main diagonal of a $d$-dimensional box of side length $l$ is $l\sqrt{d}$, a size-$r$ box with aspect ratio at most $t$ has volume at least $(\frac{r}{t\sqrt{d}})^d$, and so a collection of boxes (not necessarily axis-aligned) in $\mathbb{R}^d$ with aspect ratio at most $t$ is $(3t\sqrt{d})^d$-fat. 

\begin{theorem}\label{treealphafatA} Let $\mathcal{O}$ be a $c$-fat collection of objects in $\mathbb{R}^d$ and let $G$ be its intersection graph. For each $r_0 > 1$, let $f(r_0) = 2\Big\lceil \frac{1}{1-\big(1-\frac{1}{r_0}\big)^{\frac{1}{d}}}\Big\rceil$. Then, we can compute in linear time a $(1-1/r_0)$-general cover $\mathcal{C}$ of $G$ of size at most $(f(r_0)/2-1)^d$. Moreover, for each $C \in \mathcal{C}$, we can compute in linear time a tree decomposition $\mathcal{T} = (T,\{X_t\}_{t\in V(T)})$ of $G[C]$, with $|V(T)| \leq |V(G)|+1$, such that $\alpha(\mathcal{T}) \leq cf(r_0)^{2d}$. 
\end{theorem}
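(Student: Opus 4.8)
The plan is to mimic the grid-based construction used in \Cref{unitdisklayeredA} and \Cref{layeredVPGA}, but in $d$ dimensions and using the $c$-fatness property instead of an area argument. First I would normalize: without loss of generality $G$ is connected, the smallest object has size $r_0 > 1$, and after scaling we may assume all objects are contained in the positive orthant, so the whole collection lies inside a box of side $N = O(|V(G)|)$. For a parameter $w := f(r_0)/2 - 1$ (an integer since $f(r_0)$ is even by construction), I would tile $\mathbb{R}^d$ by axis-aligned boxes of side $w$; more precisely, for each $i \in \mathbb{N}$ let $C_i = \{x \in \mathbb{R}^d : (i-1)w \le x_1 \le iw\}$ be the $i$-th slab in the first coordinate, and use these slabs to build the tree (really path) decomposition exactly as before: $T$ is a path on $\{t_1, \ldots, t_{N/w + 1}\}$, and $X_{t_i} = \{v \in V(G) : O_v \cap C_i \ne \varnothing\}$. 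Conditions (T1)–(T3) follow verbatim from the arguments in \Cref{unitdisklayeredA}: every object meets some slab, every edge gives a common point hence a common slab, and the slabs met by a fixed object are consecutive because objects are connected. This is computable in linear time, and $|V(T)| \le |V(G)| + 1$ after discarding empty bags.

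Next I would construct the cover $\mathcal{C}$. For each of the remaining $d-1$ coordinates $\ell \in \{2, \ldots, d\}$ and each residue $m_\ell \in \{0, \ldots, w-1\}$... actually, cleaner: index objects by which cell of the side-$w$ grid their "reference corner" (say the corner of the smallest enclosing box minimizing all coordinates) lies in, and for a tuple $\mathbf{m} = (m_1, \ldots, m_d) \in \{0,\ldots,w-1\}^d$ let $C_{\mathbf{m}}$ be the set of objects whose reference cell has $j$-th coordinate $\not\equiv m_j \pmod{w}$ for all $j$ — no wait, that over-removes. The right thing, matching $f(r_0)$, is a one-dimensional shifting in a single auxiliary direction: define the reference value $r(v) \in \mathbb{N}$ to be (an index such that $O_v$ lies within, say, $w$ consecutive side-$1$ strips in coordinate $1$, chosen so that $O_v$ is "deep" in strip $r(v)$); then the cover has $w$ members $C_m = \bigcup_{i \not\equiv m \pmod w} \{v : r(v) = i\}$, and one checks each vertex misses exactly one $C_m$, so each lies in $w - 1 = (1 - 1/w)|\mathcal{C}| = (1 - 1/r_0)$-fraction after choosing $w = \lceil 1/(1 - (1-1/r_0)^{1/d}) \rceil$ appropriately — the exponent $1/d$ and the $d$-th power in the bound $(f(r_0)/2 - 1)^d$ signal that the shifting must actually be done independently in all $d$ coordinates, giving $w^d$ cover elements, with the product bound on the covering fraction being $\prod (1 - 1/w) = (1-1/w)^d \ge 1 - 1/r_0$ by the choice of $w$. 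So I would take $\mathcal{C} = \{C_{\mathbf{m}} : \mathbf{m} \in \{0,\ldots,w-1\}^d\}$ where $C_{\mathbf{m}}$ removes, for each coordinate $j$, the objects whose coordinate-$j$ reference index is $\equiv m_j \pmod w$; then $v$ survives in $C_{\mathbf{m}}$ iff $\mathbf{m}$ avoids $v$'s reference tuple coordinatewise, so $v$ lies in $(w-1)^d = (1-1/w)^d |\mathcal{C}| \ge (1-1/r_0)|\mathcal{C}|$ elements.

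The main obstacle — and the only place $c$-fatness is used — is bounding $\alpha(G[C \cap X_t])$. Fix $C = C_{\mathbf{m}}$ and a bag $X_{t_i}$. Every object $v$ in $C \cap X_{t_i}$ meets slab $C_i$ (width $w$ in coordinate $1$) and, because it survived the shift, is "deep" inside a grid cell in each coordinate, so in fact $O_v$ is contained in a bounded region: combining "meets $C_i$" with the depth conditions forces $O_v$ to lie inside a single side-$O(w)$ box $R$ (roughly $C_i$ expanded by $w$ in each of the other coordinates, which has side at most $3w = 3(f(r_0)/2 - 1) \le f(r_0)$, hence fits in a box of size $f(r_0)$). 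Now here is the subtlety: $c$-fatness as defined controls only objects of size $\ge r$ meeting a size-$r$ box; small objects are not directly bounded. But $r_0 > 1$ and the slabs have integer width... hmm — actually the clean fix is that objects of size at least the box-size $R' := f(r_0)$ must, if pairwise non-intersecting and all meeting $R$, be stabbed by $c$ points, giving at most $c$ of them; while objects of size between $r_0$ and $f(r_0)$ that all lie inside the fixed box $R$ of bounded side and are pairwise disjoint can be bounded by a volume/packing argument as in \Cref{cfatA}, yielding $O(f(r_0)^d)$ many. Either way one gets $\alpha(G[C \cap X_t]) = O(c \cdot f(r_0)^{2d})$; the factor $f(r_0)^{2d}$ versus $f(r_0)^d$ reflects applying the stabbing bound $c$ once for each of the $O(f(r_0)^d)$ size-$r_0$ boxes tiling $R$, since $c$-fatness is stated for a single box. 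Each $G[C]$-component lies in $\le w$ consecutive slabs so the path decomposition restricts correctly, and the whole thing is linear time. The hard part is getting the case analysis for "large" versus "small" objects to go through cleanly so that a single uniform bound $cf(r_0)^{2d}$ covers both; I would handle it by covering $R$ with $\le (f(r_0))^d$ unit-size (size-$r_0$, since $r_0 > 1$) sub-boxes, noting every object of $C \cap X_t$ has size $\ge r_0$ and meets at least one sub-box, and applying $c$-fatness to each sub-box to stab the disjoint objects meeting it, for a total of at most $c f(r_0)^d \cdot f(r_0)^d$ stabbing points, whence $\alpha \le c f(r_0)^{2d}$.
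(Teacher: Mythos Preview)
Your proposal has a genuine gap: it only works when all objects have comparable size, but the theorem is stated for an arbitrary $c$-fat collection, which may contain objects of arbitrarily different scales. Two concrete points of failure:

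First, you write ``the smallest object has size $r_0 > 1$''. But $r_0$ is the cover parameter (we want a $(1-1/r_0)$-general cover), not a lower bound on object size; nothing in the hypotheses bounds object sizes from below or above. Your final stabbing argument (``every object of $C\cap X_t$ has size $\ge r_0$'') therefore uses an assumption that is simply not there.

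Second, and more fundamentally, a single-scale grid cannot work. Your path decomposition uses width-$w$ slabs in coordinate $1$, and your cover removes objects according to congruence classes of a single ``reference index'' per coordinate. Neither operation localises objects in coordinates $2,\ldots,d$: an object that meets slab $C_i$ and survives the shift $\mathbf{m}$ can still sit anywhere along the remaining axes, so there is no bounded box $R$ containing $X_{t_i}\cap C_{\mathbf{m}}$ and the independence bound collapses. Dually, if you make the grid fine enough that small objects are localised, large objects will hit every shift and fail the covering requirement. The paper addresses exactly this tension with a multi-scale (quadtree-like) construction: each object $O_v$ is assigned a rank $\mathrm{rk}(O_v)=\lfloor\log_{1/r}s(O_v)\rfloor$, a separate hyperplane arrangement $\mathcal{H}^i(y)$ is used for each rank $i$ (with spacing $(1/r)^{i-1}$, so commensurate with the size of rank-$i$ objects), and the tree is not a path but a hierarchy in which the bag of a rank-$i$ box contains all objects of rank $\le i$ meeting it. The $c$-fatness is then applied at the correct scale for each bag. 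This rank-based hierarchy is the key idea you are missing; without it the argument does not go through for, say, a disk graph containing both unit disks and disks of radius $n$.
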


Before proving \Cref{treealphafatA}, we outline the idea in the case of disk graphs in $\mathbb{R}^2$. Suppose first that we are trying to find a $(1-1/r_0)$-general cover $\mathcal{C}$ of bounded tree-independence number of a unit disk graph. We can build each element of the cover starting from an appropriate grid in the plane as follows. Suppose that $\mathcal{H}(y)$ is a grid in $\mathbb{R}^2$, indexed by some $y \in \mathbb{R}^2$, splitting the plane into a collection $\mathcal{B}$ of squares of side length $2r_0$. We first discard all disks intersecting $\mathcal{H}(y)$. The vertices corresponding to the remaining disks will form the element $C(y)$ of the cover. We can obtain a tree decomposition for the subgraph induced by $C(y)$ as follows. We add a node for each square $B \in \mathcal{B}$ and associate to this node a bag containing precisely the vertices whose corresponding disks lie in $B$. We then connect the nodes appropriately to obtain a tree. The resulting tree decomposition will have small tree-independence number since, inside each square $B$, there are at most $4r_{0}^2$ pairwise non-intersecting disks. Shifting the grid $\mathcal{H}(y)$ around the plane via the vector $y$ and proceeding as above will ensure that every vertex of the unit disk graph is covered by sufficiently many elements. 

The situation is more challenging if disks have different radiuses. When both large and small disks occur, if the grids are too dense (i.e., they divide the plane into very small squares), then large disks will not belong to most elements of the cover, whereas if the grids are too sparse (i.e., they divide the plane into very large squares), then there might be too many pairwise non-intersecting small disks inside each square. To resolve this, we use an idea from \cite[Theorem~4]{DGLTT22}. Specifically, we sort disks into different ranks according to their radius, so that the larger the radius the smaller the rank. Large disks will be ``covered'' by sparse grids, whereas small disks will be ``covered'' by dense grids. For each possible value $i$ of the rank, we will consider grids of rank $i$ arising in a quadtree-like manner from a fixed rank-$0$ grid (a sparesest grid), and we will discard rank-$i$ disks intersecting rank-$i$ grids. The vertices corresponding to the remaining disks will form an element $C(y)$ of the cover. We will then add a node for each square $B_i$ induced by the rank-$i$ grid, and associate to this node a bag containing precisely the vertices whose corresponding disks intersect $B_i$ and have rank at most $i$. Finally, for each node $t_i$ corresponding to a rank-$i$ square $B_i$, we will add the edge $t_it_j$ if $t_j$ correspond to the rank-$j$ square $B_j$, with $j > i$, such that $B_j$ is contained in $B_i$. As before, we will shift the grids around the plane via $y$ to ensure that we obtain indeed a general cover. 

\begin{proof}[Proof of \Cref{treealphafatA}] In this proof, $[n]$ denotes the set $\{0, 1, \ldots, n\}$. Fix an arbitrary $r_0 > 1$. In the following, for ease of notation, we simply let $r:= f(r_0) = 2\Big\lceil \frac{1}{1-\big(1-\frac{1}{r_0}\big)^{\frac{1}{d}}}\Big\rceil$, and denote by $O_v$ the object corresponding to the vertex $v \in V(G)$. By possibly rescaling, we may assume that each object in the collection $\mathcal{O}$ has size at most $1$. For each $v \in V(G)$, define the \textit{rank} of the object $O_v$ as the quantity $\rk(O_v) = \lfloor \log_{\frac{1}{r}}s(O_v)\rfloor$. Let $k_0 = \max_{v \in V(G)}\rk(O_v)$. For each $0 \leq i \leq k_0$, $1 \leq j \leq d$ and $y = (y_1,\ldots,y_d) \in [\frac{r}{2}-1]^d$, let $\mathcal{H}^i_j(y)$ be the set of points in $\mathbb{R}^d$ whose $j$-th coordinate is equal to $m_j(\frac{1}{r})^{i-1}+y_j\sum_{k={i}}^{k_0+1}(\frac{1}{r})^{k}$, for some $m_j \in \mathbb{Z}$, and let $\mathcal{H}^i(y) = \bigcup_{1\leq j \leq d}\mathcal{H}^i_j(y)$. Moreover, let $V^i = \{v \in V(G) : \rk(O_v)=i\}$, $C^i(y) = \{v \in V^i: O_v \cap \mathcal{H}^i(y) = \varnothing\}$, and $C(y) = \bigcup_{0 \leq i \leq k_0}C^i(y)$. 

\begin{claim}
$\mathcal{C} = \{C(y) : y \in [\frac{r}{2}-1]^d\}$ is a $(1-1/r_0)$-general cover of $G$ of size $(f(r_0)/2-1)^d$.
\end{claim}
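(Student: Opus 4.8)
The plan is to verify the two assertions in the claim separately: first that $\mathcal{C}$ has the announced size $(f(r_0)/2-1)^d = (r/2-1)^d$, and then that it is a $(1-1/r_0)$-general cover, i.e.\ that each vertex $v$ lies in at least $(1-1/r_0)|\mathcal{C}|$ of the sets $C(y)$.

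The size claim should be immediate: the index set is $[\frac{r}{2}-1]^d = \{0,1,\ldots,\frac{r}{2}-1\}^d$, which has exactly $(\frac{r}{2})^d$ elements, so there is a small bookkeeping point about whether the index range is $\{0,\ldots,\frac r2 -1\}$ (size $(r/2)^d$) or $\{1,\ldots,\frac r2 -1\}$ (size $(r/2-1)^d$); I would simply adopt the convention that makes $|\mathcal{C}| = (r/2-1)^d = (f(r_0)/2-1)^d$ consistent with the statement, noting that different $y$ give distinct sets as multiset elements even if they happen to coincide as vertex sets. The real work is the covering bound. Fix a vertex $v$ with $i := \rk(O_v)$, so $(\frac1r)^{i+1} < s(O_v) \le (\frac1r)^{i}$ (up to the floor in the definition of rank). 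The key observation is that whether $v \in C(y)$ depends \emph{only} on the coordinate $y_i$ of $y$ used at rank $i$ — more precisely, on whether $O_v$ avoids the rank-$i$ grid $\mathcal{H}^i(y)$, which in each coordinate $j$ is a union of hyperplanes spaced $(\frac1r)^{i-1}$ apart, shifted by $y_j\sum_{k=i}^{k_0+1}(\frac1r)^k$. I would argue coordinate-by-coordinate: the projection of $O_v$ onto the $j$-th axis is an interval of length at most $s(O_v) \le (\frac1r)^i$, while as $y_j$ ranges over $\{0,1,\ldots,\frac r2 -1\}$ the shift $y_j\sum_{k=i}^{k_0+1}(\frac1r)^k$ sweeps through $\frac r2$ equally spaced positions within one grid period $(\frac1r)^{i-1}$ (the geometric sum is close to $(\frac1r)^i\cdot\frac{1}{1-1/r}$, which times $\frac r2$ is on the order of $(\frac1r)^{i-1}\cdot\frac12$, so the positions stay in a period up to the usual quadtree-style estimate). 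Hence the number of "bad" shifts $y_j$ for which some hyperplane of $\mathcal{H}^i_j(y)$ meets $O_v$ is at most $\lceil s(O_v)/(\text{spacing of sampled shifts})\rceil$, which by the choice $r = 2\lceil \frac{1}{1-(1-1/r_0)^{1/d}}\rceil$ is at most a $\bigl(1-(1-1/r_0)^{1/d}\bigr)$ fraction of the $\frac r2$ choices. Taking the product over the $d$ coordinates, the fraction of $y\in\{0,\ldots,\frac r2-1\}^d$ for which $v\notin C(y)$ is at most $1-(1-1/r_0) = 1/r_0$, so $v$ is covered by at least $(1-1/r_0)|\mathcal{C}|$ elements, as required.

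Concretely the step order is: (i) settle the indexing convention and read off $|\mathcal{C}|$; (ii) fix $v$, let $i=\rk(O_v)$, and observe that membership $v\in C(y)$ is governed solely by $y$'s rank-$i$ coordinate-shifts and is a product condition over the $d$ axes; (iii) in each axis, bound the length of $O_v$'s projection against the spacing of the $\frac r2$ sampled grid-translates within one period, and conclude the fraction of bad $y_j$ is at most $1-(1-1/r_0)^{1/d}$; (iv) multiply over the $d$ coordinates to get a bad fraction at most $1/r_0$; (v) conclude the general-cover property.

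I expect the main obstacle to be step (iii): making the geometric-sum shift $y_j\sum_{k=i}^{k_0+1}(\frac1r)^k$ behave like a clean uniform sampling of $\frac r2$ points inside a single period of length $(\frac1r)^{i-1}$, since the sum is not exactly $(\frac1r)^{i-1}/(\text{something nice})$ and one must control the dependence on $k_0$ and the floor in $\rk$. This is exactly the quadtree trick from \cite[Theorem~4]{DGLTT22}, so I would follow that estimate: the partial geometric sum differs from its limit by a term of order $(\frac1r)^{k_0+1}$, negligible compared to $(\frac1r)^i$, and the consecutive sampled shifts differ by exactly $\sum_{k=i}^{k_0+1}(\frac1r)^k \ge (\frac1r)^i$, so an interval of length $\le (\frac1r)^i = s(O_v)/\text{(at most 1)}$ meets at most $\lceil 1 + \text{(small)}\rceil$ grid-translates — and the ceiling in the definition of $r$ is precisely calibrated so that this count, divided by $\frac r2$, is $\le 1-(1-1/r_0)^{1/d}$. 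Once this per-coordinate estimate is nailed down, the rest is the routine product argument.
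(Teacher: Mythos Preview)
Your approach is the same as the paper's: fix $v$, set $i=\rk(O_v)$, use that membership in $C(y)$ factorises over coordinates via the hyperplane families $\mathcal{H}^i_j(y)$, bound the bad shifts per coordinate, and take the product. Two points are worth sharpening.

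First, the index set. The paper fixes the convention $[n]=\{0,1,\ldots,n\}$ at the start of the proof, so $|[\tfrac{r}{2}-1]^d|=(\tfrac{r}{2})^d$, and this is the value actually used in the covering estimate (the stated size $(f(r_0)/2-1)^d$ in the claim is a slip). You should not ``adopt a convention'' to make the size come out to $(r/2-1)^d$; the argument genuinely needs $|\mathcal{C}|=(r/2)^d$ so that the final ratio is $(r/2-1)^d/(r/2)^d=(1-2/r)^d\ge 1-1/r_0$.

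Second, and more importantly, your step~(iii) is overcomplicated and, as written, would not close. You estimate the number of bad shifts per coordinate as ``$\lceil 1+\text{(small)}\rceil$'', which could be~$2$; with $2$ bad shifts per coordinate you would get good fraction $(1-4/r)^d$, and the definition $r=2\lceil (1-(1-1/r_0)^{1/d})^{-1}\rceil$ is \emph{not} calibrated for that---the bound would fail by a factor of~$2$. The paper's observation is cleaner: the consecutive shift increment $\delta=\sum_{k=i}^{k_0+1}(\tfrac{1}{r})^k$ satisfies $(\tfrac{1}{r})^i<\delta<2(\tfrac{1}{r})^i$ (since $r\ge 2$ and the sum has at least two terms), while the projection of $O_v$ onto the $j$-th axis has length at most $s(O_v)\le(\tfrac{1}{r})^i<\delta$. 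Because the spacing of the sampled hyperplane positions strictly exceeds the projection's length, \emph{at most one} value of $y_j\in\{0,\ldots,\tfrac{r}{2}-1\}$ can be bad (and the upper bound $\delta<2(\tfrac{1}{r})^i$ ensures the $\tfrac{r}{2}$ shifts stay inside a single period $(\tfrac{1}{r})^{i-1}$, so no wrap-around issue). This gives at least $(\tfrac{r}{2}-1)^d$ good $y$'s out of $(\tfrac{r}{2})^d$, and then $(1-2/r)^d\ge 1-1/r_0$ by the choice of $r$. No ceiling estimates or appeals to the quadtree argument of \cite{DGLTT22} are needed.
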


\begin{claimproof} For each $1\leq j \leq d$, let $e_j$ be the unit vector in $\mathbb{R}^d$ whose $j$-th coordinate is $1$. Let $v\in V(G)$ and let $i = \rk(O_v)$. Then, $v \in C(y)$ for some $y \in [\frac{r}{2}-1]^d$, if and only if $O_v$ does not intersect $\mathcal{H}^i(y)$ and the latter happens if and only if $O_v$ does not intersect $\mathcal{H}^i_j(y)$ for any $1 \leq j \leq d$. Note that $\mathcal{H}^i_j(y)$ is a collection of hyperplanes in $\mathbb{R}^d$, which are orthogonal to the $j$-th axis and at pairwise distance at least $(\frac{1}{r})^{i-1} = r(\frac{1}{r})^{i}$. Moreover, for any $1\leq j\leq d-1$, $\mathcal{H}^i_{j}(y + e_j)$ can be obtained by shifting $\mathcal{H}^i_j(y)$ along the $j$-th axis of a quantity $\sum_{k={i}}^{k_0+1}(\frac{1}{r})^{k}$, and it is easy to see that, since $r \geq 2$, we have $(\frac{1}{r})^{i} < \sum_{k={i}}^{k_0+1}(\frac{1}{r})^{k} < 2(\frac{1}{r})^{i}$. On the other hand, since $\rk(O_v)=i$, it follows from the definition of rank that $O_v$ can be enclosed in a box of size $(\frac{1}{r})^{i}$. We now count the number of points $y \in [\frac{r}{2}-1]^d$ such that $O_v$ does not intersect $\mathcal{H}^i(y)$. For fixed $1 \leq j \leq d$, the previous observations imply that there is at most one value of $y_j$ for which a point $y \in [\frac{r}{2}-1]^d$ is such that $\mathcal{H}^i_j(y) \cap O_v \neq \varnothing$, so at least $\frac{r}{2} - 1$ values of $y_j$ for which $y$ is such that $\mathcal{H}^i_j(y) \cap O_v = \varnothing$. Therefore, there are at least $(\frac{r}{2}-1)^d$ points $y \in [\frac{r}{2}-1]^d$ such that $\mathcal{H}^i(y)$ does not intersect $O_v$. Since the set $[\frac{r}{2}-1]^d$ has size $(\frac{r}{2})^d$, the proportion of elements of $\mathcal{C}$ containing $v$ is at least $\frac{(\frac{r}{2}-1)^d}{(\frac{r}{2})^d} = (1-\frac{2}{r})^d \geq ((1-\frac{1}{r_0})^{\frac{1}{d}})^d = 1-\frac{1}{r_0}$, as claimed.
\end{claimproof}

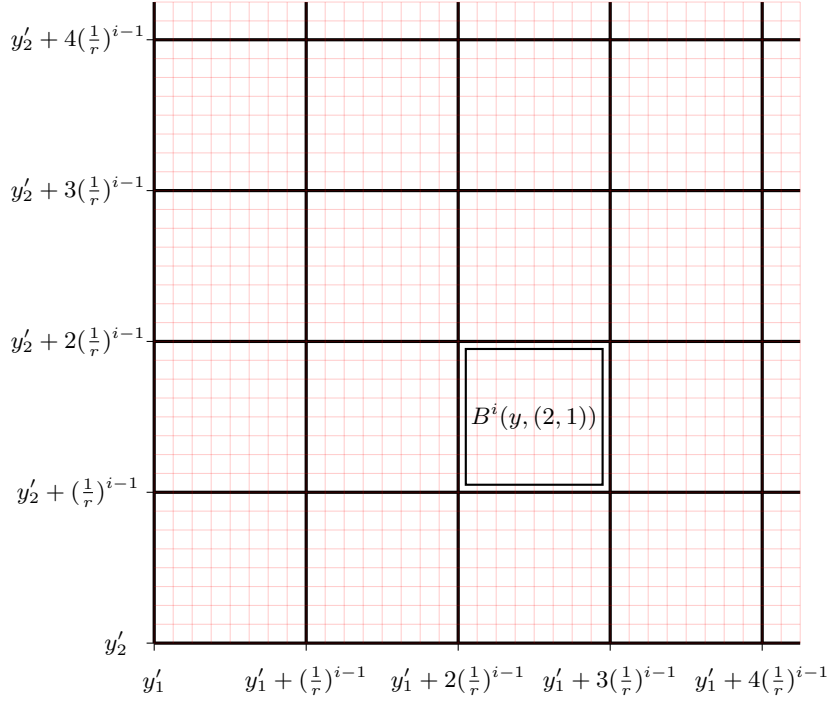
\begin{figure}
\centering
\begin{tikzpicture}
\draw (0,0) -- (0,-.1);
\draw (2,0) -- (2,-.1);
\draw (4,0) -- (4,-.1);
\draw (6,0) -- (6,-.1);
\draw (8,0) -- (8,-.1);

\draw (0,0) -- (-.1,0);
\draw (0,2) -- (-.1,2);
\draw (0,4) -- (-.1,4);
\draw (0,6) -- (-.1,6);
\draw (0,8) -- (-.1,8);

\draw[very thick] (0,0) -- (0,8.5);
\draw[very thick] (2,0) -- (2,8.5);
\draw[very thick] (4,0) -- (4,8.5);
\draw[very thick] (6,0) -- (6,8.5);
\draw[very thick] (8,0) -- (8,8.5);

\draw[very thick] (0,0) -- (8.5,0);
\draw[very thick] (0,2) -- (8.5,2);
\draw[very thick] (0,4) -- (8.5,4);
\draw[very thick] (0,6) -- (8.5,6);
\draw[very thick] (0,8) -- (8.5,8);

\node[draw= none] at (0,-.5) {\small $y'_1$};
\node[draw= none] at (2,-.5) {\small $y'_1 + (\frac{1}{r})^{i-1}$};
\node[draw= none] at (4,-.5) {\small $y'_1 + 2(\frac{1}{r})^{i-1}$};
\node[draw= none] at (6,-.5) {\small $y'_1 + 3(\frac{1}{r})^{i-1}$};
\node[draw= none] at (8,-.5) {\small $y'_1 + 4(\frac{1}{r})^{i-1}$};

\node[draw= none] at (-.5,0) {\small $y'_2$};
\node[draw= none] at (-1,2) {\small $y'_2 + (\frac{1}{r})^{i-1}$};
\node[draw= none] at (-1,4) {\small $y'_2 + 2(\frac{1}{r})^{i-1}$};
\node[draw= none] at (-1,6) {\small $y'_2 + 3(\frac{1}{r})^{i-1}$};
\node[draw= none] at (-1,8) {\small $y'_2 + 4(\frac{1}{r})^{i-1}$};

\foreach \i in {0,...,34}{
\pgfmathsetmacro{\x}{0.25*\i}
\draw[red,opacity=.2] (\x,0) -- (\x,8.5);
\draw[red,opacity=.2] (0,\x) -- (8.5,\x);
}

\draw[thick] (4.1,2.1) rectangle (5.9,3.9);
\node[draw=none] at (5,3) {\small $B^i(y,(2,1))$};
\end{tikzpicture}
\caption{$\mathcal{H}^i(y)$ (black) and $\mathcal{H}^{i+1}(y)$ (red) in dimension 2, where $y'_j = y_j \sum_{k=i}^{k_0+1} (\frac{1}{r})^k$.}
\end{figure}

Note that, for fixed $i$ and $y$ as above, the collection of hyperplanes $\mathcal{H}^i(y)$ splits the space into boxes of size $(\frac{1}{r})^{i-1}$. We now consider these boxes. For a vector $m = (m_1, \ldots, m_d) \in \mathbb{Z}^d$, consider the box $B^i(y,m) = \{(x_1,\ldots,x_d) \in \mathbb{R}^d : m_j(\frac{1}{r})^{i-1}+y_j\sum_{k={i}}^{k_0+1}(\frac{1}{r})^{k} < x_j < (m_j+1)(\frac{1}{r})^{i-1}+y_j\sum_{k={i}}^{k_0+1} (\frac{1}{r})^{k}, \ \mbox{for every} \ 1 \leq j \leq d\}$.

\begin{claim}\label{subboxA} For fixed $y \in [\frac{r}{2}-1]^d$ and any $0 \leq i' < i \leq k_0$, $\mathcal{H}^{i'}(y) \subseteq \mathcal{H}^{i}(y)$. Moreover, for any $m \in \mathbb{Z}^d$, the box $B^{i}(y,m)$ is completely contained in a box of the form $B^{i'}(y,m')$ for exactly one vector $m' \in \mathbb{Z}^{d}$. 
\end{claim}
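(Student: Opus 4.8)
The plan is to reduce the statement to a one-dimensional, coordinate-by-coordinate claim about the families of parallel hyperplanes $\mathcal{H}^i_j(y)$, and then reassemble the $d$-dimensional conclusion. First I would record two elementary arithmetic facts used throughout: since $r=f(r_0)=2\lceil\cdots\rceil$ is a positive even integer, for $i'\le i$ the quantity $(\tfrac1r)^{i'-1}=r^{\,i-i'}(\tfrac1r)^{i-1}$ is an \emph{integer} multiple of $(\tfrac1r)^{i-1}$; and since $y\in[\tfrac r2-1]^d\subseteq\mathbb{Z}^d$, every coordinate $y_j$ is an integer.

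For the inclusion $\mathcal{H}^{i'}(y)\subseteq\mathcal{H}^{i}(y)$ it suffices, by definition of $\mathcal{H}^{i}(y)=\bigcup_j\mathcal{H}^i_j(y)$, to prove $\mathcal{H}^{i'}_j(y)\subseteq\mathcal{H}^{i}_j(y)$ for each $1\le j\le d$. Writing $c_i:=y_j\sum_{k=i}^{k_0+1}(\tfrac1r)^{k}$, the set $\mathcal{H}^i_j(y)$ is (the preimage under the $j$-th coordinate of) the coset $(\tfrac1r)^{i-1}\mathbb{Z}+c_i\subseteq\mathbb{R}$, so the inclusion amounts to $(\tfrac1r)^{i'-1}\mathbb{Z}+c_{i'}\subseteq(\tfrac1r)^{i-1}\mathbb{Z}+c_i$. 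The containment of lattices $(\tfrac1r)^{i'-1}\mathbb{Z}\subseteq(\tfrac1r)^{i-1}\mathbb{Z}$ is the first arithmetic fact above, so it remains to check that the offset difference lies in $(\tfrac1r)^{i-1}\mathbb{Z}$. I would compute $c_{i'}-c_i=y_j\sum_{k=i'}^{i-1}(\tfrac1r)^{k}$ and divide by $(\tfrac1r)^{i-1}$ to get $y_j\sum_{\ell=0}^{\,i-1-i'}r^{\ell}$, which is an integer since $y_j\in\mathbb{Z}$; this finishes the first part.

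For the second assertion I would use that the open boxes $\{B^{i}(y,m)\}_{m\in\mathbb{Z}^d}$ are precisely the connected components of $\mathbb{R}^d\setminus\mathcal{H}^{i}(y)$ (each defining interval being a component of $\mathbb{R}\setminus\mathcal{H}^i_j(y)$), and likewise at level $i'$. From $\mathcal{H}^{i'}(y)\subseteq\mathcal{H}^{i}(y)$ we get $\mathbb{R}^d\setminus\mathcal{H}^{i}(y)\subseteq\mathbb{R}^d\setminus\mathcal{H}^{i'}(y)$; since $B^{i}(y,m)$ is connected and disjoint from $\mathcal{H}^{i}(y)$, hence from $\mathcal{H}^{i'}(y)$, it is contained in a single connected component of $\mathbb{R}^d\setminus\mathcal{H}^{i'}(y)$, i.e.\ in one box $B^{i'}(y,m')$; uniqueness of $m'$ is immediate because distinct boxes $B^{i'}(y,m')$ are pairwise disjoint. (Equivalently, coordinate-wise: the $j$-th side of $B^{i}(y,m)$ is an open interval of length $(\tfrac1r)^{i-1}\le(\tfrac1r)^{i'-1}$ meeting no hyperplane of $\mathcal{H}^i_j(y)\supseteq\mathcal{H}^{i'}_j(y)$, so it lies in exactly one interval of the coarser partition induced by $\mathcal{H}^{i'}_j(y)$.)

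I do not anticipate a real obstacle: the only delicate point is the bookkeeping with the geometric-series offsets $c_i$, specifically verifying that the telescoping difference $c_{i'}-c_i$ is an integer multiple of the fine mesh size $(\tfrac1r)^{i-1}$, which is exactly where the integrality of $r$ and of the shift vector $y$ enters. Everything else is the standard observation that a product of open intervals is connected and that refining a grid only refines its complement.
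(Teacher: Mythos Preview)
Your proof is correct and follows essentially the same approach as the paper: you verify coordinate-by-coordinate that $\mathcal{H}^{i'}_j(y)\subseteq\mathcal{H}^{i}_j(y)$ by checking that both the lattice $(\tfrac1r)^{i'-1}\mathbb{Z}$ and the offset difference $c_{i'}-c_i$ land in $(\tfrac1r)^{i-1}\mathbb{Z}$, which is exactly the paper's computation repackaged in coset language, and then deduce the box statement from the refinement of grids via connected components, just as the paper does. The only stylistic difference is that the paper writes out the single explicit identity for $x_j$ whereas you separate it into the lattice inclusion plus the offset check; the arithmetic content is identical.
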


\begin{claimproof} Let $y = (y_1, \ldots, y_d)$. Let $x=(x_1,\ldots,x_d) \in \mathcal{H}^{i'}(y)$. There exists $1 \leq j \leq d$ such that $x \in \mathcal{H}^{i'}_j(y)$. Then, there exists $m_j \in \mathbb{Z}$ such that 
\begin{align*}
x_j &=  m_j\Big(\frac{1}{r}\Big)^{i'-1}+y_j\sum_{k={i'}}^{k_0+1}\Big(\frac{1}{r}\Big)^{k} = m_j\Big(\frac{1}{r}\Big)^{i'-i}\Big(\frac{1}{r}\Big)^{i-1} +y_j\sum_{k={i'}}^{i-1}\Big(\frac{1}{r}\Big)^{k} +y_j\sum_{k={i}}^{k_0+1}\Big(\frac{1}{r}\Big)^{k} \\
&= \Big(m_j\Big(\frac{1}{r}\Big)^{i'-i}+y_j\sum_{k={i'}}^{i-1}\Big(\frac{1}{r}\Big)^{k-(i-1)}\Big)\Big(\frac{1}{r}\Big)^{i-1} + y_j\sum_{k={i}}^{k_0+1}\Big(\frac{1}{r}\Big)^{k}.
\end{align*}
Since the coefficient of $(\frac{1}{r})^{i-1}$ is an integer, we conclude that $x \in \mathcal{H}^{i}_j(y)$ and so $\mathcal{H}^{i'}(y) \subseteq \mathcal{H}^{i}(y)$.

To prove the remaining statement simply recall that, for $m, m' \in \mathbb{Z}^d$, $B^{i}(y,m)$ is one of the boxes induced by $\mathbb{R}^d \setminus \mathcal{H}^{i}(y)$ and $B^{i'}(y,m')$ is one of the boxes induced by $\mathbb{R}^d \setminus \mathcal{H}^{i'}(y)$. Since $\mathcal{H}^{i}(y)$ is a refinement of $\mathcal{H}^{i'}(y)$, for any box of the form $B^{i}(y,m)$ there must be exactly one box of the form $B^{i'}(y,m)$ containing it.
\end{claimproof}

We now construct a tree decomposition of $G[C(y)]$, for each element $C(y)$ of the $(1-1/r_0)$-general cover $\mathcal{C}$ defined above. Therefore, fix $y \in [\frac{r}{2}-1]^d$. For each $0 \leq i \leq k_0$ and $m \in \mathbb{Z}^d$, let $A^i(y,m) = \{v \in C^i(y): O_v \cap B^i(y,m) \neq \varnothing\}$ and let $X_{t^i(y,m)} = \bigcup_{0\leq k \leq i}\{v \in C^k(y) : O_v \cap B^i(y,m) \neq \varnothing\}$. In words, $X_{t^i(y,m)}$ is the set of vertices corresponding to objects of rank at most $i$ in $C(y)$ and intersecting the box $B^i(y,m)$. For each pair $(i,m)$, build a node $t^i(y,m)$ if $A^i(y,m) \neq \varnothing$ and associate to it the set $ X_{t^i(y,m)}$, which will be the corresponding bag in the tree decomposition we are building. We say that $t^{i_1}(y,{m_1})$ is a \textit{parent} of $t^{i_2}(y,m_2)$ if the following conditions are satisfied: $i_1 < i_2$, $B^{i_1}(y,m_1) \supseteq B^{i_2}(y,m_2)$ and, among all pairs satisfying these two conditions, $(i_1,m_1)$ has largest value of the first entry. Observe that, by \Cref{subboxA}, each node $t^{i_2}(y,{m_2})$ has at most one parent. For each pair of nodes $t^{i_1}(y,{m_1})$, $t^{i_2}(y,m_2)$ such that $t^{i_1}(y,{m_1})$ is a parent of $t^{i_2}(y,m_2)$, we then add the edge $t^{i_1}(y,{m_1})t^{i_2}(y,m_2)$. We claim that the resulting graph $F(y)$ is acyclic. Suppose, to the contrary, that it contains a cycle with vertices $t^{i_0}(y,{m_0}), t^{i_1}(y,m_1), \ldots, t^{i_{\ell-1}}(y,m_{\ell-1})$ in cyclic order. Without loss of generality, $t^{i_0}(y,{m_0})$ is a parent of $t^{i_1}(y,m_1)$. This implies that, for each $k$, $t^{i_k}(y,{m_k})$ is a parent of $t^{i_{k+1}}(y,m_{k+1})$ (indices modulo $\ell$). Therefore, by definition of parent, $i_0 < i_1 < \cdots < i_{\ell-1}$ and $i_{\ell-1} < i_0$, a contradiction.     

We then glue the components of $F(y)$ into a tree by adding a node $t^{-1}$ and making $t^{-1}$ adjacent to an arbitrary node of each component of $F(y)$. Let the resulting tree be $T(y)$. Observe that $|V(T(y))| \leq |V(G)|+1$. Indeed, $t^i(y,m)$ is a node of $T(y)$ only if $A^i(y,m) \neq \varnothing$ and, for fixed $y$, $A^{i_1}(y,m_1) \subseteq V(G)$ is disjoint from $A^{i_2}(y,m_2) \subseteq V(G)$. Let $\mathcal{T} = (T(y),\{X_{t^i(y,m)}\}_{t^i(y,m)\in V(T(y))})$, where we assign the empty bag to the node $t^{-1}$. Clearly, $\mathcal{T}$ can be computed in linear time.

\begin{claim} $\mathcal{T} = (T(y),\{X_{t^i(y,m)}\}_{t^i(y,m)\in V(T(y))})$ is a tree decomposition of $G[C(y)]$. 
\end{claim}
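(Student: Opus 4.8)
The plan is to verify the three axioms (T1)--(T3) of a tree decomposition for $\mathcal{T}$, using throughout \Cref{subboxA} together with the elementary observation that if $v \in C^i(y)$ then $O_v$ is disjoint from $\mathcal{H}^i(y)$ and hence is contained in a single box $B^i(y,m_v)$ for a unique $m_v \in \mathbb{Z}^d$ (and that, by construction, every bag is a subset of $C(y) = V(G[C(y)])$).

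Axiom (T1) is immediate: given $v \in C(y)$ with $i = \rk(O_v)$, the box $B^i(y,m_v)$ witnesses $v \in A^i(y,m_v)$, so the node $t^i(y,m_v)$ exists and, since $v \in C^i(y)$ and $O_v \cap B^i(y,m_v) \neq \varnothing$, we get $v \in X_{t^i(y,m_v)}$. For (T2), take $uv \in E(G[C(y)])$ with $i := \rk(O_u) \leq \rk(O_v) =: i'$. Any point of $O_u \cap O_v$ lies in $B^{i'}(y,m_v)$, so $O_u$ meets $B^{i'}(y,m_v)$ as well; since $u,v \in C(y)$ and $\rk(O_u) = i \leq i'$, both $u$ and $v$ belong to $X_{t^{i'}(y,m_v)}$, and this node exists because $v \in A^{i'}(y,m_v)$.

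The substantive part, and the one I expect to be the main obstacle, is (T3). Fix $u \in C(y)$, write $i = \rk(O_u)$, and let $m_u$ be the unique index with $O_u \subseteq B^i(y,m_u)$. First I would describe the set $S_u := \{t \in V(T(y)) : u \in X_t\}$ explicitly: since $u \in C^k(y)$ only for $k = i$, a node $t^{i'}(y,m')$ lies in $S_u$ exactly when it exists, $i' \geq i$, and $O_u \cap B^{i'}(y,m') \neq \varnothing$; by \Cref{subboxA} the latter already forces $B^{i'}(y,m') \subseteq B^i(y,m_u)$, and in particular $t^i(y,m_u)$ is the unique rank-$i$ member of $S_u$, as the rank-$i$ boxes are pairwise disjoint. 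The key step is then to show that each $t^{i'}(y,m') \in S_u$ with $i' > i$ has its parent in $S_u$. Indeed, $t^i(y,m_u)$ is a node with $i < i'$ and $B^i(y,m_u) \supseteq B^{i'}(y,m')$, so the candidate set defining the parent of $t^{i'}(y,m')$ is nonempty and its parent $t^{i_1}(y,m_1)$ exists with $i \leq i_1 < i'$; since $B^{i_1}(y,m_1) \supseteq B^{i'}(y,m')$, this box meets $O_u$, and by \Cref{subboxA} it is contained in the unique rank-$i$ box meeting $O_u$, namely $B^i(y,m_u)$, whence $t^{i_1}(y,m_1) \in S_u$. Following parent pointers from an arbitrary node of $S_u$ therefore stays inside $S_u$ while strictly decreasing the rank and never dropping below $i$, so it terminates at $t^i(y,m_u)$; as the parent relation is a set of edges of $T(y)$ and $t^{-1} \notin S_u$ (its bag is empty), this produces a path inside $S_u$ from every node of $S_u$ to $t^i(y,m_u)$, so the subgraph of $T(y)$ induced by $S_u$ is connected, as (T3) requires.

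I expect (T1) and (T2) to be pure bookkeeping; the delicate point is the parent-closure claim in (T3), which relies on the nesting/refinement structure of the grids $\mathcal{H}^i(y)$ from \Cref{subboxA} both to guarantee that a parent exists at rank at least $i$ and to rule out that this parent ``escapes'' the box $B^i(y,m_u)$ carrying $O_u$.
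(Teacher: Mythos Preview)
Your proposal is correct and follows essentially the same approach as the paper: (T1) and (T2) are handled identically, and for (T3) both you and the paper argue that from any node containing $u$ one can follow parent edges, staying within bags that contain $u$, down to the unique rank-$i$ node $t^i(y,m_u)$. Your write-up is slightly more explicit (you characterize $S_u$ and phrase the inductive step as ``parent closure''), but the logic and the use of \Cref{subboxA} are the same.
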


\begin{claimproof} We first check that (T1) holds. Let $v \in C(y)$ and suppose that $\rk(O_v) = i$. Then $O_v$ intersects one of the boxes $B^{i}(y,m)$, for some $m \in \mathbb{Z}^d$, and so $v \in A^i(y,m)$. Therefore, $t^i(y,m) \in V(T(y))$ and $v \in X_{t^i(y,m)}$. 

We now check that (T2) holds. Let $u,v \in C(y)$ such that $uv \in E(G)$. Then, $O_u \cap O_v \neq \varnothing$ and let $x=(x_1,\ldots,x_d)$ be a point in this intersection. Without loss of generality, $\rk(O_u) \leq \rk(O_v)=i$. Since $v \in C^i(y)$, $x \not \in \mathcal{H}^i(y)$. This implies that $x$ is contained in a box $B^i(y,m)$, for some $m\in\mathbb{Z}^d$, and so $v \in A^i(y,m)$, $t^i(y,m) \in V(T(y))$ and $\{u,v\} \subseteq X_{t^i(y,m)}$.  

We finally check that (T3) holds. For $v \in C(y)$, let $T(y)_v$ be the subgraph of $T(y)$ induced by the set of nodes of $T(y)$ whose bag contains $v$. Let $v \in C(y)$ and suppose that $\rk(O_v) = i$. Observe first that there is a unique $m \in \mathbb{Z}^d$ such that $v \in X_{t^{i}(y,m)}$, or else $O_v \cap \mathcal{H}^i(y) \neq \varnothing$ and $v \not \in C(y)$. Observe now that, by definition, $v \not \in X_{{t^{i_1}}(y,m_1)}$ for any $i_1 < i$ and $m_1 \in \mathbb{Z}^d$. Suppose finally that $v \in X_{t^{i_1}(y,m_1)}$, for some $i_1 > i$ and $m_1\in \mathbb{Z}^d$. Then, $O_v$ intersects $B^{i_1}(y,m_1)$ and, by \Cref{subboxA}, there is a unique $m' \in \mathbb{Z}^d$ such that $B^{i_1}(y,m_1)$ is completely contained in $B^{i}(y,m')$ (it is easy to see that $m' = m$). Hence, ${t^{i_1}(y,m_1)}$ must have a parent, say ${t^{i_2}(y,m_2)}$ for some $i_2$ such that $i_1 > i_2 \geq i$ and $m_2 \in \mathbb{Z}^d$. This means that $B^{i_2}(y,m_2) \supseteq B^{i_1}(y,m_1)$, and so $v \in X_{t^{i_2}(y,m_2)}$. We then deduce inductively that there must be a path from $t^{i_1}(y,m_1)$ to $t^{i}(y,m)$ in $T(y)_v$. Therefore, $T(y)_v$ is connected.
\end{claimproof} 

\begin{claim} $\alpha(\mathcal{T}) \leq cr^{2d}$.
\end{claim}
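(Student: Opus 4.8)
Recall that a set of vertices is independent in $G$ precisely when the corresponding objects of $\mathcal{O}$ are pairwise non-intersecting, so it suffices to bound by $cr^{2d}$ the number of pairwise non-intersecting objects among $\{O_v : v \in X_{t^i(y,m)}\}$ for every node $t^i(y,m)$ of $T(y)$ (the node $t^{-1}$ carries the empty bag and is trivial). The plan is a covering argument combined with $c$-fatness.

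The first step is to record a \emph{uniform} lower bound on the sizes of the objects in the bag. By construction, $X_{t^i(y,m)}$ contains only vertices $v$ with $\rk(O_v) = k$ for some $0 \leq k \leq i$, and by definition of rank this gives $s(O_v) > \left(\frac{1}{r}\right)^{k+1} \geq \left(\frac{1}{r}\right)^{i+1} =: \rho$. Thus every object appearing in this bag has size greater than $\rho$, irrespective of its rank. This is exactly the place where it matters that the bag only collects objects of rank at most $i$, and it is what allows a single covering at scale $\rho$ to handle all objects of the bag simultaneously, despite there being many different ranks present.

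The second step is the covering itself. The box $B^i(y,m)$ has side length $\left(\frac{1}{r}\right)^{i-1} = r^2\rho$, and since $r$ is a positive integer its closure can be covered by $(r^2)^d = r^{2d}$ closed boxes of side length $\rho$. Let $I$ be an independent set of $G[X_{t^i(y,m)}]$. Every $v \in I$ satisfies $O_v \cap B^i(y,m) \neq \varnothing$, hence $O_v$ meets at least one box $R$ of this cover. Fixing such an $R$, the objects $\{O_v : v \in I,\ O_v \cap R \neq \varnothing\}$ are pairwise non-intersecting, each meets the size-$\rho$ box $R$, and each has size at least $\rho$; hence $c$-fatness of $\mathcal{O}$ provides $c$ points of which every one of these objects contains at least one, and since the objects are pairwise disjoint there are at most $c$ of them. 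Summing over the $r^{2d}$ boxes of the cover yields $|I| \leq cr^{2d}$, as claimed.

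I do not expect a genuine obstacle here beyond the bookkeeping with ranks in the first step: the point to get right is that the size bound $\left(\frac{1}{r}\right)^{i+1}$ holds for \emph{all} objects of the bag, not merely those of rank $i$, so that a single scale $\rho$ suffices. (Alternatively one could first invoke \Cref{subboxA} to note that $\mathcal{H}^k(y) \subseteq \mathcal{H}^i(y)$ for $k \leq i$, so that every object of the bag in fact lies inside $B^i(y,m)$ rather than merely meeting it; but this refinement is not needed for the bound.)
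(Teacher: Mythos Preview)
Your proof is correct and follows essentially the same approach as the paper: bound the size of every object in the bag from below by $(1/r)^{i+1}$ via the rank constraint, cover $B^i(y,m)$ by $r^{2d}$ closed boxes of that side length, and apply $c$-fatness to each covering box to get at most $c$ pairwise non-intersecting objects per box.

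One small correction to your parenthetical aside: the containment $\mathcal{H}^k(y)\subseteq\mathcal{H}^i(y)$ for $k\le i$ does \emph{not} imply that an object of rank $k$ lies inside $B^i(y,m)$. Such an object only avoids the coarser grid $\mathcal{H}^k(y)$, not the finer $\mathcal{H}^i(y)$, and indeed for $k\le i-2$ its size can be as large as $(1/r)^k>(1/r)^{i-1}$, exceeding the side length of $B^i(y,m)$. So objects in the bag genuinely only \emph{meet} $B^i(y,m)$; as you rightly note, this does not affect the argument.
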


\begin{claimproof} Fix an arbitrary node $t^i(y,m)$ of $T(y)$. We bound the independence number of the subgraph of $G[C(y)]$ induced by the bag $X_{t^i(y,m)}$. Observe first that, for any $v \in X_{t^i(y,m)}$, $O_v$ intersects $B^i(y,m)$, which is a box of side length $(\frac{1}{r})^{i-1}$. Consider a collection $\mathcal{B}$ of $r^{2d}$ generic closed boxes in $\mathbb{R}^d$ of side length $(\frac{1}{r})^{i+1}$ and such that their union is exactly $B^i(y,m)$. Let $P \subseteq X_{t^i(y,m)}$ be an independent set of $G[C(y)]$ and let $\mathcal{P} = \{O_v : v \in P\}$ be the corresponding sub-collection of $\mathcal{O}$ of pairwise non-intersecting objects. For each $v \in P$, $\rk(O_v) \leq i$ and so $s(O_v) \geq (\frac{1}{r})^{i+1}$. Moreover, $O_v$ intersects at least one box from $\mathcal{B}$. Therefore, since the collection $\mathcal{O}$ is $c$-fat, we can choose $c$ points for each box in $\mathcal{B}$ in such a way that every object in $\mathcal{P}$ intersecting this box contains at least one of the chosen points. Let $C$ be the set of the $cr^{2d}$ chosen points. Since no two objects from $\mathcal{P}$ intersect, no two of them can contain the same point from $C$. Therefore, $|P| \leq cr^{2d}$, as claimed.
\end{claimproof}

This concludes the proof.
\end{proof}

\begin{corollary} There exist fractionally $\tin$-fragile classes of unbounded local tree-independence number. 
\end{corollary}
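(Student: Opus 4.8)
The plan is to take $\mathcal{G}$ to be the class of disk graphs, i.e. intersection graphs of finite collections of closed disks in $\mathbb{R}^2$, and to establish two things: that $\mathcal{G}$ is fractionally $\tin$-fragile, and that it has unbounded local tree-independence number. For the first part I would simply observe that disks in $\mathbb{R}^2$ are balls in $\mathbb{R}^2$, hence by the discussion following \Cref{cfatA} any collection of disks is $c$-fat with $c = 3^2\cdot 2! = 18$; thus \Cref{treealphafatA} applies to every disk graph (together with a realization) and shows that $\mathcal{G}$ is (efficiently) fractionally $\tin$-fragile.

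For the second part I would reuse the graphs appearing in the proof of \Cref{equivlayeredA}: for $n\in\mathbb{N}$, let $G_n$ be obtained from the $n\times n$-grid graph by adding a dominating vertex $v_n$. First, each $G_n$ is a disk graph: the $n\times n$-grid graph is planar (in fact it is realizable by disks of radius $1/2$ centred at the integer lattice points), so it is the intersection graph of a finite family $\mathcal{D}$ of disks all contained in some axis-aligned square $S$ of the plane; adding to $\mathcal{D}$ one disk large enough to contain $S$ (which therefore meets every disk of $\mathcal{D}$) yields a disk realization of $G_n$, with the large disk representing $v_n$. Second, since $G_n$ contains the $n\times n$-grid graph as an induced subgraph and tree-independence number does not increase under taking induced subgraphs (restrict the bags of a tree decomposition), we have $\tin(G_n) \ge \tin(n\times n\text{-grid})$, and the right-hand side is unbounded because the class of grids has unbounded tree-independence number (see \cite[Lemma~3.2]{DaMS22}). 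As $v_n$ is dominating, $G_n[N^1[v_n]] = G_n$, so $\sup_n \tin(G_n[N^1[v_n]]) = \infty$, which means no function $f$ witnesses bounded local tree-independence number for $\mathcal{G}$.

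All steps are routine; the only point needing genuine care is that in the realization of $G_n$ one must allow disks of arbitrarily different radii. This is in fact necessary, since by \Cref{unitdisklayeredA} and \Cref{blayeredblocalA} the class of unit disk graphs has bounded local tree-independence number. For the same reason, this corollary also separates fractional $\tin$-fragility from boundedness of layered tree-independence number.
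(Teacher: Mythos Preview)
Your proposal is correct and follows essentially the same approach as the paper: take the class of disk graphs, invoke \Cref{treealphafatA} for fractional $\tin$-fragility, and use the graphs $G_n$ from the proof of \Cref{equivlayeredA} to witness unbounded local tree-independence number. You spell out two points that the paper leaves implicit---the explicit fatness constant for disks and the disk realization of $G_n$ (grid realized by small disks, dominating vertex by a large enclosing disk)---and you add the useful side remark that this example also separates fractional $\tin$-fragility from bounded layered tree-independence number, but the skeleton of the argument is the same.
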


\begin{proof} Consider the class $\mathcal{G}$ of intersection graphs of disks in $\mathbb{R}^2$. By \Cref{treealphafatA}, $\mathcal{G}$ is fractionally $\tin$-fragile. Let now $G_n$ be the graph obtained from the $n\times n$-grid graph by adding a dominating vertex. By the proof of \Cref{equivlayeredA}, the class $\mathcal{G}' = \{G_n : n \in \mathbb{N}\}$ has unbounded local tree-independence number. Since $\mathcal{G'} \subseteq \mathcal{G}$, the result follows.
\end{proof}

%%%%%%%%%%%%%%%%%%%%%%%%%%%%%%%%%%%%%%%%%%%%%%%%%%%%%%

\section{PTASes}\label{sec:ptasesA}

Let us begin by defining \textsc{Max Weight Independent Packing}. Given a graph $G$ and a finite family $\mathcal{H} = \{H_j\}_{j\in J}$ of connected non-null subgraphs of $G$, an \textit{independent $\mathcal{H}$-packing} in $G$ is a subfamily $\mathcal{H}' = \{H_i\}_{i\in I}$ of subgraphs from $\mathcal{H}$ (that is, $I \subseteq J$) that are at pairwise distance at least $1$, that is, they are vertex-disjoint and there is no edge between any two of them. If the subgraphs in $\mathcal{H}$ are equipped with a weight function $w\colon J \rightarrow \mathbb{Q}^{+}$ assigning weight $w_j$ to each subgraph $H_j$, the \textit{weight} of an independent $\mathcal{H}$-packing $\mathcal{H}' = \{H_i\}_{i\in I}$ in $G$ is $\sum_{i\in I}w_i$. Given a graph $G$, a finite family $\mathcal{H} = \{H_j\}_{j\in J}$ of connected non-null subgraphs of $G$, and a weight function $w\colon J \rightarrow \mathbb{Q}^{+}$ on the subgraphs in $\mathcal{H}$, the problem \textsc{Max Weight Independent Packing} asks to find an independent $\mathcal{H}$-packing in $G$ of maximum weight. In the special case when $\mathcal{F}$ is a \textit{fixed} finite family of connected non-null graphs and $\mathcal{H}$ is the set of all subgraphs of $G$ isomorphic to a member of $\mathcal{F}$, the problem is called \textsc{Max Weight Independent $\mathcal{F}$-Packing} and is a common generalization of several problems, among which: 
\textsc{Independent $\mathcal{F}$-Packing} \cite{CH06}, \textsc{Max Weight Independent Set} ($\mathcal{F} = \{K_1\}$), \textsc{Max Weight Induced Matching} ($\mathcal{F} = \{K_2\}$), \textsc{Dissociation Set} ($\mathcal{F} = \{K_1, K_2\}$ and the weight function assigns to each subgraph $H_j$ the weight $|V (H_j)|$) \cite{ODF11,Yan81}. 
%AM:more examples?

\subsection{Packing subgraphs at distance at least $1$ in efficiently fractionally $\tin$-fragile classes}

Our PTAS relies on the following result.

\begin{theorem}[Dallard et el.\cite{DaMS22}]\label{MWIPA} Let $k$ and $h$ be two positive integers. Given a graph $G$ and a finite family $\mathcal{H} = \{H_j\}_{j\in J}$ of connected non-null subgraphs of $G$ such that $|V(H_j)| \leq h$ for every $j \in J$, \textsc{Max Weight Independent Packing} can be solved in time $O(|V(G)|^{h(k+1)} \cdot |V(T)|)$ if $G$ is given together with a tree decomposition $\mathcal{T} = (T, \{X_t\}_{t\in V(T)})$ with $\alpha(\mathcal{T}) \leq k$.
\end{theorem}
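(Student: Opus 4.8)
The plan is to reduce \textsc{Max Weight Independent Packing} to \textsc{Max Weight Independent Set} and to solve the latter by dynamic programming over a tree decomposition of bounded \emph{independence} number. Observe that two distinct indices $i,j \in J$ are non-adjacent in the graph $G(\mathcal{H})$ exactly when $H_i$ and $H_j$ are at distance at least $1$ in $G$; hence a subfamily $\{H_i\}_{i \in I}$ is an independent $\mathcal{H}$-packing in $G$ if and only if $I$ is an independent set of $G(\mathcal{H})$, and the two notions of weight coincide. First I would preprocess so that $|J| = O(|V(G)|^h)$: since the packing constraint depends only on the vertex sets $V(H_j)$, of which there are $O(|V(G)|^h)$ many (each of size at most $h$), one may group the $H_j$ by their vertex set and keep only a maximum-weight representative of each group without changing the optimum. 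Write $N := |J| = O(|V(G)|^h)$ from now on.

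Next I would lift $\mathcal{T}$ to a tree decomposition of $G(\mathcal{H})$ by setting $\mathcal{T}' = (T,\{X'_t\}_{t \in V(T)})$ with $X'_t = \{j \in J : V(H_j) \cap X_t \neq \varnothing\}$; checking (T1)--(T3) is routine using that each $H_j$ is connected and non-null (for (T3), $\{t : j \in X'_t\} = \bigcup_{u \in V(H_j)}\{t : u \in X_t\}$ is a union of pairwise-intersecting subtrees of $T$ indexed by the connected graph $H_j$, hence connected), and indeed $\mathcal{T}'$ is essentially the decomposition of \cite[Lemma~6.1]{DaMS22}. The step I expect to be the crux of the argument is the bound $\alpha(\mathcal{T}') \leq k$: if $I' \subseteq X'_t$ induces an independent set of $G(\mathcal{H})$, then the graphs $\{H_j\}_{j \in I'}$ are pairwise at distance at least $1$ and each meets $X_t$, so choosing one vertex $v_j \in V(H_j) \cap X_t$ for each $j \in I'$ yields $|I'|$ pairwise-distinct, pairwise-non-adjacent vertices of $G[X_t]$ (distinct since $v_i = v_j$ would put a common vertex in $H_i$ and $H_j$; non-adjacent since an edge $v_iv_j$ would connect $H_i$ and $H_j$), whence $|I'| \leq \alpha(G[X_t]) \leq k$. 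Since $\mathcal{T}'$ keeps the same tree $T$, it still has $|V(T)|$ nodes, and it can clearly be built in polynomial time.

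It then remains to solve \textsc{Max Weight Independent Set} on the $N$-vertex graph $G(\mathcal{H})$ given the tree decomposition $\mathcal{T}'$ with $\alpha(\mathcal{T}') \leq k$. Here I would run the textbook dynamic program: make $\mathcal{T}'$ nice (which multiplies the number of nodes by at most $\max_t |X'_t| \leq N$, so $O(N \cdot |V(T)|)$ nodes in total), and for each node $t$ and each \emph{independent} $I \subseteq X'_t$ with $|I| \leq k$ --- the only traces a solution can leave on a bag, precisely because $\alpha(\mathcal{T}') \leq k$ --- record the maximum weight of an independent set of $G(\mathcal{H})$ contained in the bags below $t$ whose intersection with $X'_t$ equals $I$. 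There are at most $\sum_{i=0}^{k}\binom{N}{i} = O(N^k)$ such states per node, each updated in constant time by the standard introduce/forget/join transitions, so the whole computation runs in time $O(N^{k+1}\cdot|V(T)|) = O(|V(G)|^{h(k+1)}\cdot|V(T)|)$, and the optimum is read off at the root. The only point that needs care beyond the $\alpha(\mathcal{T}') \leq k$ step is keeping the running time expressed in $|V(G)|$ rather than in the possibly much larger $|J|$, which is exactly what the initial deduplication achieves.
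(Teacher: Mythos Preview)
The paper does not prove this theorem; it is quoted from \cite{DaMS22} and used as a black box. Your proposal is a correct reconstruction of that argument and follows essentially the same route: reduce to \textsc{Max Weight Independent Set} on $G(\mathcal{H})$, lift $\mathcal{T}$ to the tree decomposition $\mathcal{T}'$ of $G(\mathcal{H})$ (this is exactly \cite[Lemma~6.1]{DaMS22}, which the present paper also invokes in the proof of \Cref{layeredtreealgoA}), prove $\alpha(\mathcal{T}')\leq k$ via the ``pick one vertex per $H_j$'' argument, and then run the standard dynamic program restricted to independent subsets of the bags of size at most $k$. One cosmetic inaccuracy: in your verification of (T3) for $\mathcal{T}'$, the subtrees $T_u$ for $u\in V(H_j)$ are not pairwise intersecting in general; what you need (and what you clearly intend) is that $T_u\cap T_v\neq\varnothing$ whenever $uv\in E(H_j)$, so connectedness of $H_j$ gives connectedness of $\bigcup_{u\in V(H_j)}T_u$.
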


\begin{theorem}\label{PTASffA} Let $h \in \mathbb{N}$ and let $f \colon \mathbb{N} \rightarrow \mathbb{N}$ be a function. There exists an algorithm that, given 
\begin{itemize}
\item $r \in \mathbb{N}$, 
\item an $n$-vertex graph $G$ equipped with a $(1-1/r)$-general cover $\mathcal{C} = \{C_1,C_2,\ldots\}$ and, for each $i$, a tree decomposition $\mathcal{T}_i=(T_i,\{X_t\}_{t\in V(T_i)})$ of $G[C_i]$ with $\alpha(\mathcal{T}_i) \leq f(r)$, 
\item a finite family $\mathcal{H}=\{H_j\}_{j\in J}$ of connected non-null subgraphs of $G$ such that $|V(H_j)| \leq h$ for every $j \in J$, 
\item and a weight function $w\colon J \rightarrow \mathbb{Q}_{+}$ on the subgraphs in $\mathcal{H}$, 
\end{itemize}
returns in time $|\mathcal{C}|\cdot O(n^{h(f(r)+1)} \cdot t)$, where $t = \max_i |V(T_i)|$, an independent $\mathcal{H}$-packing in $G$ of weight at least a factor $(1-h/r)$ of the optimal. 
\end{theorem}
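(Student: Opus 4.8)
The plan is to run the exact algorithm of \Cref{MWIPA} separately on each piece $G[C_i]$ of the general cover, return the heaviest packing found, and then verify the approximation ratio by a Baker-style averaging argument over the cover.

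First I would define, for each $C_i \in \mathcal{C}$, the subfamily $\mathcal{H}_i = \{H_j \in \mathcal{H} : V(H_j) \subseteq C_i\}$. Since each $H_j$ is connected and $G[C_i]$ retains every edge of $G$ with both ends in $C_i$, every member of $\mathcal{H}_i$ is a genuine subgraph of $G[C_i]$; and because all the vertices in question lie in $C_i$, two such subgraphs are vertex-disjoint with no edge between them in $G[C_i]$ if and only if the same holds in $G$. Hence an independent $\mathcal{H}_i$-packing of $G[C_i]$ is precisely an independent $\mathcal{H}$-packing of $G$ whose members all belong to $\mathcal{H}_i$. I would then invoke \Cref{MWIPA} on $G[C_i]$ with family $\mathcal{H}_i$, the inherited weights, and the supplied tree decomposition $\mathcal{T}_i$ (which has $\alpha(\mathcal{T}_i)\le f(r)$ and $|V(T_i)|\le t$), at a cost of $O(n^{h(f(r)+1)}\cdot t)$ per index, and output the heaviest of the $|\mathcal{C}|$ resulting packings; the total running time is then $|\mathcal{C}|\cdot O(n^{h(f(r)+1)}\cdot t)$ as required, and by the preceding remark the output is a valid independent $\mathcal{H}$-packing of $G$.

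For the ratio, I would fix an optimal independent $\mathcal{H}$-packing $\mathcal{H}^\star=\{H_j:j\in I^\star\}$ of $G$, with weight $\mathrm{OPT}=\sum_{j\in I^\star}w_j$. Since $\mathcal{C}$ is a $(1-1/r)$-general cover, each vertex of $G$ lies outside at most $|\mathcal{C}|/r$ of the sets $C_i$, so (as $|V(H_j)|\le h$) a union bound gives $V(H_j)\not\subseteq C_i$ for at most $h|\mathcal{C}|/r$ indices, i.e.\ $H_j\in\mathcal{H}_i$ for at least $(1-h/r)|\mathcal{C}|$ indices $i$. Summing this over $j\in I^\star$ and exchanging the order of summation yields
\[
\sum_i \ \sum_{j\in I^\star:\,V(H_j)\subseteq C_i} w_j \;=\; \sum_{j\in I^\star} w_j\cdot\bigl|\{i:V(H_j)\subseteq C_i\}\bigr| \;\ge\; (1-h/r)\,|\mathcal{C}|\cdot\mathrm{OPT},
\]
so some index $i^\star$ satisfies $\sum_{j\in I^\star:\,V(H_j)\subseteq C_{i^\star}} w_j \ge (1-h/r)\,\mathrm{OPT}$; the subfamily $\{H_j:j\in I^\star,\ V(H_j)\subseteq C_{i^\star}\}$ is an independent $\mathcal{H}_{i^\star}$-packing of $G[C_{i^\star}]$ of at least this weight, so the packing returned by \Cref{MWIPA} for $i^\star$, and hence the one we output, has weight at least $(1-h/r)\,\mathrm{OPT}$.

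I do not foresee a real obstacle: the two points needing attention are the equivalence ``distance $\ge 1$ in $G[C_i]$ iff distance $\ge 1$ in $G$'' for packed subgraphs contained in $C_i$ (immediate, since all their vertices lie in $C_i$), and the counting step, where bounding the number of cover elements missing \emph{some} vertex of an $h$-vertex subgraph, rather than a single fixed vertex, is exactly what degrades the guarantee from $1-1/r$ to $1-h/r$.
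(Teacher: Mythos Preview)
Your proposal is correct and follows essentially the same approach as the paper's proof: run \Cref{MWIPA} on each $G[C_i]$, return the best packing, and justify the $(1-h/r)$ ratio via a union bound over the at most $h$ vertices of each $H_j$ followed by averaging over the cover. You are slightly more explicit than the paper in checking that an independent $\mathcal{H}_i$-packing in $G[C_i]$ is also an independent $\mathcal{H}$-packing in $G$, but the argument is otherwise identical.
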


\begin{proof} For each $i \geq 1$, we proceed as follows. Using the algorithm from \Cref{MWIPA}, we simply compute a maximum-weight independent $\mathcal{H}$-packing $\mathcal{P}_i$ in $G[C_i]$ in time $O(n^{h(f(r)+1)} \cdot t)$. The total running time is then $|\mathcal{C}|\cdot O(n^{h(f(r)+1)} \cdot t)$. For a collection $\mathcal{A}$ of subgraphs of $G$, each isomorphic to a member of $\mathcal{H}$, and a subset $C \subseteq V(G)$, let $w(\mathcal{A}) = \sum_{A\in \mathcal{A}}w(A)$ and let $\mathcal{A} \cap C = \{A \in \mathcal{A} : A \subseteq C\}$. Observe that, given a subgraph $H$ of $G$, each vertex $v \in V(H)$ is not contained in at most $|\mathcal{C}|/r$ elements of the $(1-1/r)$-general cover $\mathcal{C}$. Hence, $V(H)$ is contained in at least $(1-|V(H)|/r)|\mathcal{C}|$ elements of $\mathcal{C}$. Let $\mathcal{P}=\{P_1,P_2,\ldots\}$ be an independent $\mathcal{H}$-packing in $G$ of maximum weight. Then, 

\begin{align*}
\sum_{C_i \in \mathcal{C}}w(\mathcal{P}\cap C_i)
& = \sum_{C_i \in \mathcal{C}} \sum_{P_j \in \mathcal{P}}  w(P_j) \mathbbm{1}_{\{P_j \subseteq C_i\}} \\
& =  \sum_{P_j \in \mathcal{P}}  w(P_j) \sum_{C_i \in \mathcal{C}}\mathbbm{1}_{\{P_j \subseteq C_i\}} \\
& \geq  \sum_{P_j \in \mathcal{P}}  w(P_j) (1-|V(P_j)|/r)|\mathcal{C}| \\
& \geq  \sum_{P_j \in \mathcal{P}}  w(P_j) (1-h/r)|\mathcal{C}| \\
& = |\mathcal{C}|(1-h/r)w(\mathcal{P}).
\end{align*}

By the pigeonhole principle, there exists $C_i \in \mathcal{C}$ such that $w(\mathcal{P}\cap C_i) \geq (1-h/r)w(\mathcal{P})$. We then return the maximum-weight independent $\mathcal{H}$-packing $\mathcal{P}_i$ in $G[C_i]$ computed above. Since $\mathcal{P} \cap C_i$ is an independent $\mathcal{H}$-packing in $G[C_i]$, we have that $w(\mathcal{P}_i) \geq w(\mathcal{P}\cap C_i) \geq (1-h/r)w(\mathcal{P})$.
\end{proof}

\Cref{PTASffA} immediately implies that \textsc{Max Weight Independent Packing} admits a $\mathsf{PTAS}$ in any efficiently fractionally $\tin$-fragile class. A special case is the following.

\begin{corollary} There exists an algorithm that, given $r \in \mathbb{N}$, a $c$-fat collection $\mathcal{O}$ of $n$ objects in $\mathbb{R}^d$ and its intersection graph $G$, returns in time $(f(r)/2-1)^d\cdot O(n^{(cf(r)^{2d}+2)})$, where $f(r) = 2\Big\lceil \frac{1}{1-\big(1-\frac{1}{r}\big)^{\frac{1}{d}}}\Big\rceil$, an independent set in $G$ of weight at least a factor $(1-1/r)$ of the optimal. 
\end{corollary}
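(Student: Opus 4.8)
The plan is to simply combine the two main ingredients already established, namely \Cref{treealphafatA}, which witnesses that the class of intersection graphs of $c$-fat collections is efficiently fractionally $\tin$-fragile, and \Cref{PTASffA}, which turns such a witness into a PTAS for \textsc{Max Weight Independent Packing}. Since an independent set of $G$ is exactly an independent $\mathcal{H}$-packing for the family $\mathcal{H} = \{H_v\}_{v \in V(G)}$ in which each $H_v$ is the one-vertex subgraph on $v$, I would invoke \Cref{PTASffA} with $h = 1$ and this $\mathcal{H}$.

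First I would run the algorithm of \Cref{treealphafatA} with $r_0 := r$ on the input $c$-fat collection $\mathcal{O}$ and its intersection graph $G$. This produces, in linear time, a $(1-1/r)$-general cover $\mathcal{C}$ of $G$ of size at most $(f(r)/2-1)^d$ together with, for each $C \in \mathcal{C}$, a tree decomposition $\mathcal{T}_C = (T_C,\{X_t\}_{t\in V(T_C)})$ of $G[C]$ with $|V(T_C)| \leq |V(G)| + 1 = O(n)$ and $\alpha(\mathcal{T}_C) \leq cf(r)^{2d}$. In the notation of \Cref{PTASffA}, this is precisely a witness with the concrete bound $r \mapsto cf(r)^{2d}$ playing the role of the generic function $f$, and with $t = \max_{C} |V(T_C)| = O(n)$.

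Next I would feed this witness to the algorithm of \Cref{PTASffA}, together with the family $\mathcal{H}$ above (so $h = 1$) and the given weight function on the vertices. The algorithm returns an independent $\mathcal{H}$-packing — that is, an independent set of $G$ — of weight at least a factor $(1 - h/r) = (1-1/r)$ of the optimum, as required. For the running time, \Cref{PTASffA} gives $|\mathcal{C}| \cdot O(n^{h(f(r)+1)} \cdot t)$; substituting $|\mathcal{C}| \leq (f(r)/2-1)^d$, $h = 1$, the bound $cf(r)^{2d}$ for the tree-independence number, and $t = O(n)$, this becomes $(f(r)/2-1)^d \cdot O(n^{cf(r)^{2d}+1} \cdot n) = (f(r)/2-1)^d\cdot O(n^{cf(r)^{2d}+2})$, matching the claimed bound.

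There is essentially no substantive obstacle: the statement is a direct corollary of the two preceding theorems. The only points requiring care are purely bookkeeping — verifying that the output format of \Cref{treealphafatA} is exactly the input format required by \Cref{PTASffA}, instantiating the overloaded symbol $f$ of \Cref{PTASffA} by the explicit bound $cf(r)^{2d}$ of \Cref{treealphafatA}, and absorbing the factor $t = O(n)$ coming from $|V(T_C)| \leq |V(G)| + 1$ into the exponent of $n$, which is precisely what upgrades the $+1$ to the $+2$ in the final running time.
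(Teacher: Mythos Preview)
Your proposal is correct and follows essentially the same argument as the paper's own proof: apply \Cref{treealphafatA} to obtain the $(1-1/r)$-general cover with the stated bounds, then invoke \Cref{PTASffA} with $h=1$ and track the running time. Your additional remarks about instantiating $\mathcal{H}$ as the family of one-vertex subgraphs and absorbing the $t=O(n)$ factor into the exponent are exactly the bookkeeping details implicit in the paper's shorter write-up.
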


\begin{proof} Given $r \in \mathbb{N}$, we use \Cref{treealphafatA} to compute in $O(n)$ time a $(1-1/r)$-general cover $\mathcal{C}$ of $G$ of size at most $(f(r)/2-1)^d$. Moreover, for each $C \in \mathcal{C}$, we compute in $O(n)$ time a tree decomposition $\mathcal{T} = (T,\{X_t\}_{t\in V(T)})$ of $G[C]$, with $|V(T)| \leq n+1$, such that $\alpha(\mathcal{T}) \leq cf(r)^{2d}$. We finally apply the algorithm from \Cref{PTASffA} (with $h=1$). The total running time is $(f(r)/2-1)^d\cdot O(n^{(cf(r)^{2d}+2)})$. 
\end{proof}

%%%%%%%%%%%%%%%%%%%%%%%%%%%%%%%%%%%%%%%%%%%%%%%%%%%%%%

\subsection{Packing subgraphs at distance at least $d$ in graphs with bounded layered tree-independence number}\label{sec:distanceA}

\textsc{Max Weight Independent Packing} has a natural generalization. For a fixed positive integer $d$, given a graph $G$ and a finite family $\mathcal{H} = \{H_j\}_{j\in J}$ of connected non-null subgraphs of $G$, a \textit{distance-$d$ $\mathcal{H}$-packing} in $G$ is a subfamily $\mathcal{H}' = \{H_i\}_{i\in I}$ of subgraphs from $\mathcal{H}$ that are at pairwise distance at least $d$. If we are also given a weight function $w\colon J \rightarrow \mathbb{Q}_{+}$, \textsc{Max Weight Distance-$d$ Packing} is the problem of finding a distance-$d$ $\mathcal{H}$-packing in $G$ of maximum weight. The case $d = 2$ coincides with \textsc{Max Weight Independent Packing}.

\begin{theorem}\label{PTASltA} Let $h, \ell \in \mathbb{N}$. Let $d$ be an even positive integer. There exists an algorithm that, given 
\begin{itemize}
\item $r \in \mathbb{N}$, 
\item an $n$-vertex graph $G$ equipped with a tree decomposition $\mathcal{T} = (T,\{X_t\}_{t\in V(T)}\})$ and a layering $(V_1, V_2, \ldots)$ of $G$ such that, for each bag $X_t$ and layer $V_i$, $\alpha(G[X_t \cap V_i]) \leq \ell$, 
\item a finite family $\mathcal{H}=\{H_j\}_{j\in J}$ of connected non-null subgraphs of $G$ such that $|V(H_j)| \leq h$ for every $j \in J$, 
\item and a weight function $w\colon J \rightarrow \mathbb{Q}_{+}$, 
\end{itemize}
returns in time $r \cdot |V(T)| \cdot n^{O(r)}$ a distance-$d$ $\mathcal{H}$-packing in $G$ of weight at least a factor $(1-h/r)$ of the optimal. 
\end{theorem}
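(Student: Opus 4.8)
The plan is to reduce \textsc{Max Weight Distance-$d$ Packing} in $G$ to \textsc{Max Weight Independent Packing} in the power $G^{d-1}$ and then invoke \Cref{PTASffA}. First, I would define a new family $\mathcal{H}' = \{H'_j\}_{j \in J}$ over the same index set by setting $H'_j = G^{d-1}[V(H_j)]$, and keep the weight function $w$. Each $H'_j$ is a subgraph of $G^{d-1}$ that is non-null and connected, because it contains the spanning connected subgraph $H_j$ of $G$, and $|V(H'_j)| = |V(H_j)| \le h$; taking induced powers of the $H_j$'s rather than, say, their balls is precisely what keeps the subgraphs small, as \Cref{MWIPA} requires. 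I would then verify that, since $d \ge 2$, a subfamily $\{H_i\}_{i \in I}$ is a distance-$d$ $\mathcal{H}$-packing in $G$ if and only if $\{H'_i\}_{i \in I}$ is an independent $\mathcal{H}'$-packing in $G^{d-1}$: for $i \ne j$, every pair in $V(H_i) \times V(H_j)$ is at distance at least $d$ in $G$ exactly when no edge of $G^{d-1}$ joins $V(H_i)$ to $V(H_j)$, and distance at least $d \ge 2$ already forces $V(H_i) \cap V(H_j) = \varnothing$. This correspondence preserves the weight of a packing, so it suffices to solve the independent packing instance on $G^{d-1}$ and output the corresponding subfamily of $\mathcal{H}$.

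Second, I would derive from the input a tree decomposition and layering of $G^{d-1}$ with bounded layered independence number. As $d$ is even, $d - 1 = 1 + 2s$ with $s = d/2 - 1$. If $s \ge 1$, apply \Cref{layeredtreealgoA} with its parameter set to $s$ to $\mathcal{T}$ and $(V_1, V_2, \ldots)$, obtaining in $O(|V(T)| \cdot (n + |E(G)|))$ time a tree decomposition $\mathcal{T}'$ on the same tree $T$ and a layering of $G^{d-1} = G^{1 + 2s}$ such that, for every bag and layer, the induced subgraph has independence number at most $(1 + 4s)\ell = (2d - 3)\ell$. If $s = 0$, that is $d = 2$, then no power is taken and $\mathcal{T}$ together with the given layering already satisfies this bound, now with constant $(2d - 3)\ell = \ell$. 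This is the step where the parity of $d$ is essential: \Cref{layeredtreealgoA} concerns odd powers, and its conclusion provably does not extend to even powers.

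Third, $G^{d-1}$ now carries a tree decomposition and layering of layered independence number at most $\ell' := (2d - 3)\ell$, so \Cref{layeredtofragileA} applied to it with the given $r$ yields in linear time a $(1 - 1/r)$-general cover $\mathcal{C}$ of $G^{d-1}$ with $|\mathcal{C}| = r$, together with, for each $C \in \mathcal{C}$, a tree decomposition of $G^{d-1}[C]$ of tree-independence number at most $\ell'(r - 1)$ on a tree with $O(n \cdot |V(T)|)$ nodes. Applying \Cref{PTASffA} to $G^{d-1}$, this cover, these tree decompositions, the family $\mathcal{H}'$ (with the same bound $h$), and $w$ then returns an independent $\mathcal{H}'$-packing in $G^{d-1}$ of weight at least a factor $(1 - h/r)$ of the optimum; by the first paragraph this is a distance-$d$ $\mathcal{H}$-packing in $G$ of the same weight, hence within a factor $(1 - h/r)$ of the optimal distance-$d$ packing weight. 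For the running time, $|\mathcal{C}| = r$ and $f(r) = \ell'(r - 1) = O(r)$ since $h$, $d$, $\ell$ are fixed, so \Cref{PTASffA} runs in time $r \cdot O\big(n^{h(\ell'(r - 1) + 1)} \cdot n \cdot |V(T)|\big) = r \cdot |V(T)| \cdot n^{O(r)}$; computing $G^{d-1}$ explicitly and the preprocessing for \Cref{layeredtreealgoA} take time polynomial in $n$ and $|V(T)|$ and are absorbed into this bound.

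The main obstacle is not a single hard computation but getting the reduction of the first paragraph completely right: checking that the induced-power family $\mathcal{H}'$ encodes the distance-$d$ constraint faithfully — introducing no spurious conflicts and dropping no genuine ones — and that its members remain small, connected and non-null so that \Cref{MWIPA} and \Cref{PTASffA} apply verbatim, while also tracking the layered independence bound through \Cref{layeredtreealgoA} and \Cref{layeredtofragileA} so that it stays a constant depending only on $d$ and $\ell$.
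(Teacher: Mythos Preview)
Your proposal is correct and follows essentially the same route as the paper: reduce to \textsc{Max Weight Independent Packing} in $G^{d-1}$, apply \Cref{layeredtreealgoA} to transport the layered independence bound to $G^{d-1}$, then \Cref{layeredtofragileA} followed by \Cref{PTASffA}. The only differences are cosmetic: the paper keeps the original family $\mathcal{H}$ (its members are already subgraphs of $G^{d-1}$, and the independent-packing condition depends only on their vertex sets) rather than passing to the induced $H'_j$'s, and it cites \cite{MR22} for the packing equivalence instead of re-deriving it; conversely, you are more careful in separating out the $d=2$ case where \Cref{layeredtreealgoA} (stated for a positive parameter) is not needed, and in bounding the tree sizes coming out of \Cref{layeredtofragileA}.
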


\begin{proof} Let $d = 2k$. As observed in \cite[Observation~3.9]{MR22}, for $I \subseteq J$, the subfamily $\mathcal{H}' = \{H_i\}_{i\in I}$ is a distance-$d$ $\mathcal{H}$-packing in $G$ if and only if $\mathcal{H}'$ is an independent $\mathcal{H}$-packing in the graph $G^{d-1}$. Therefore, using BFS, we first compute in $O(n^3)$ time the graph $G^{2k-1}$. Using the algorithm from \Cref{layeredtreealgoA}, we compute in $O(|V(T)| \cdot n^2)$ time a tree decomposition $\mathcal{T}' = (T,\{X'_t\}_{t\in V(T)})$ of $G^{2k-1}$ and a layering $(V'_1,\ldots,V'_{\lceil \frac{m}{2k-1} \rceil})$ of $G^{2k-1}$ such that, for each bag $X'_t$ and layer $V'_i$, $\alpha(G^{2k-1}[X'_t \cap V'_i]) \leq (4k-3)\ell$. Using the algorithm from \Cref{layeredtofragileA}, we compute in linear time a $(1 - 1/r)$-general cover $\mathcal{C}$ of $G^{2k-1}$ of size $r$ and, for each $C \in \mathcal{C}$, a tree decomposition of $G^{2k-1}[C]$ with tree-independence number at most $\ell (4k-3)(r-1)$. Finally, we apply the approximation algorithm from \Cref{PTASffA} to obtain, in time $r \cdot O(n^{h(\ell (4k-3)(r-1)+1)} \cdot |V(T)|)$, an independent $\mathcal{H}$-packing in $G^{2k-1}$ of weight at least a factor $(1-h/r)$ of the optimal.
\end{proof}

Combining \Cref{PTASltA} with \Cref{unitdisklayeredA}, we obtain the following:  

\begin{corollary} Let $d \in \mathbb{N}$ be even. \textsc{Max Weight Distance-$d$ Packing} admits a $\mathsf{PTAS}$ for unit disk graphs.
\end{corollary}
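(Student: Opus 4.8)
The plan is to combine the two theorems already established in this section and subsection: \Cref{PTASltA}, which gives a PTAS for \textsc{Max Weight Distance-$d$ Packing} (for even $d$) on any graph supplied with a tree decomposition and a layering witnessing bounded layered tree-independence number, and \Cref{unitdisklayeredA}, which shows that such a witnessing pair can be computed in linear time for the intersection graph of a family of unit disks, with the bound $\alpha(G[X_t \cap V_i]) \leq 8$. So, given a unit disk graph $G$ on $n$ vertices (with its geometric realization), the first step is to invoke \Cref{unitdisklayeredA} to obtain in $O(n)$ time a tree decomposition $\mathcal{T} = (T, \{X_t\}_{t \in V(T)})$ with $|V(T)| = O(n)$ and a layering $(V_1, V_2, \ldots)$ of $G$ such that $\alpha(G[X_t \cap V_i]) \leq 8$ for every bag and layer.

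The second step is to feed $G$ together with this tree decomposition and layering (so $\ell = 8$) into the algorithm of \Cref{PTASltA}, taking $\mathcal{F} = \{K_1\}$ (so $h = 1$) if one wants the plain \textsc{Max Weight Distance-$d$ Independent Set}, or more generally any fixed finite family $\mathcal{F}$ of connected non-null graphs with $h = \max_{F \in \mathcal{F}} |V(F)|$ for the full \textsc{Max Weight Distance-$d$ Packing}. For any desired approximation ratio $1 - \varepsilon$, one chooses the integer $r$ in \Cref{PTASltA} with $r > h/\varepsilon$; then \Cref{PTASltA} returns, in time $r \cdot |V(T)| \cdot n^{O(r)} = n^{O(1/\varepsilon)}$, a distance-$d$ $\mathcal{H}$-packing of weight at least $(1 - h/r) \geq (1 - \varepsilon)$ times the optimum. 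Since $d$ is a fixed even integer and $\ell = 8$ is an absolute constant, the exponent depends only on $r$, i.e.\ only on $\varepsilon$, so this is genuinely a PTAS.

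There is essentially no obstacle here: the corollary is a direct composition of the two black boxes. The only thing to check is that the hypotheses of \Cref{PTASltA} are exactly met by the output of \Cref{unitdisklayeredA} — namely that the latter produces a tree decomposition and a layering with a uniform constant bound on $\alpha(G[X_t \cap V_i])$, which it does with the constant $8$ — and that $d$ being even is needed precisely because \Cref{PTASltA} (via \Cref{layeredtreealgoA} and \Cref{sametreeA}) reduces distance-$d$ packing to distance-$1$ packing in the odd power $G^{d-1}$, whose layered tree-independence number is controlled only for odd exponents $d - 1$. The mild subtlety worth spelling out, if one wants to be careful, is that \Cref{unitdisklayeredA} assumes $G$ is connected (and its disks lie in the positive quadrant); for disconnected $G$ one applies it to each component and merges the resulting path decompositions and layerings, which does not affect the constant bound. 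Everything else is immediate.
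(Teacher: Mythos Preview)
Your proposal is correct and follows exactly the approach the paper takes: the corollary is stated there as an immediate combination of \Cref{PTASltA} with \Cref{unitdisklayeredA}, and you have spelled out precisely that composition (with some extra care about the choice of $r$ and the connectedness assumption that the paper leaves implicit).
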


Observe that \Cref{PTASltA} cannot be extended to odd values of $d$, unless $\mathsf{P} = \mathsf{NP}$. Indeed, Eto et al. \cite{EGM14} showed that, for each $\varepsilon > 0$ and fixed odd $d \geq 3$, it is $\mathsf{NP}$-hard to approximate \textsc{Distance-$d$ Independent Set} to within a factor of $n^{1/2 - \varepsilon}$ for chordal graphs.

Since unit disk graphs have $O(\sqrt{n})$ tree-independence number (\Cref{unitdisklayeredA,sqrttreealphaA}) and since \textsc{Max Weight Distance-$d$ Packing} is solvable in time $n^{O(k)}$, where $k$ is the tree-independence number of the input graph \cite{MR22}, we immediately obtain a subexponential-time algorithm on unit disk graphs.  

\begin{lemma} For any fixed even $d \in \mathbb{N}$, \textsc{Max Weight Distance-$d$ Packing} can be solved in $2^{O(\sqrt{n}\log{n})}$ time on unit disk graphs.
\end{lemma}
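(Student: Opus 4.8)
The plan is to derive the lemma as a short consequence of the structural facts already at hand: unit disk graphs have tree-independence number $O(\sqrt n)$, witnessed by an efficiently computable tree decomposition, and \textsc{Max Weight Distance-$d$ Packing} for even $d$ can be solved in time $n^{O(\tau)}$ on a graph of tree-independence number $\tau$ \cite{MR22}. Substituting $\tau=O(\sqrt n)$ then gives running time $n^{O(\sqrt n)}=2^{O(\sqrt n\log n)}$.

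Concretely, let $G$ be the input unit disk graph on $n$ vertices, supplied with a geometric realization. First I would apply \Cref{unitdisklayeredA} to compute, in $O(n)$ time, a tree decomposition $\mathcal{T}_0$ of $G$ and a layering $(V_1,V_2,\ldots)$ with $\alpha(G[X_t\cap V_i])\le 8$ for every bag $X_t$ and layer $V_i$. I would then feed this into the constructive form of \Cref{sqrttreealphaA}: the proof of \cite[Lemma~10]{DMW17} merely manipulates the given layered decomposition (grouping layers into blocks of $\lceil\sqrt{n/8}\,\rceil$ consecutive layers, deleting the sparsest periodic family of boundary layers, restricting the bags of $\mathcal{T}_0$ within each block, and reinserting the deleted vertices into all bags), so it runs in polynomial time and outputs a tree decomposition $\mathcal{T}$ of $G$ with $\alpha(\mathcal{T})\le 2\sqrt{8n}=O(\sqrt n)$.

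Now fix even $d$ and write $d=2s$, so $d-1=2s-1$ is odd. By \cite[Observation~3.9]{MR22}, a subfamily of the given family $\mathcal{H}=\{H_j\}_{j\in J}$ of connected subgraphs of $G$ is a distance-$d$ $\mathcal{H}$-packing in $G$ if and only if it is an independent $\mathcal{H}$-packing in $G^{2s-1}$, and this reduction preserves weights and the connectedness of the $H_j$. I would compute $G^{2s-1}$ by BFS in polynomial time, and then convert $\mathcal{T}$ into a tree decomposition $\mathcal{T}'$ of $G^{2s-1}$ of small independence number using \Cref{sametreeA} (with $G^1$ and the family $\mathcal{H}'=\{H_v\}_{v\in V(G)}$ of balls of radius $s-1$, so that $G^{2s-1}=G^1(\mathcal{H}')$) together with \cite[Lemma~6.1]{DaMS22}: setting $X'_t=\{v:V(H_v)\cap X_t\neq\varnothing\}$ yields a tree decomposition of $G^{2s-1}$, and picking for each vertex of an independent set of $G^{2s-1}[X'_t]$ a representative of its ball inside $X_t$ gives pairwise distinct, pairwise non-adjacent vertices of $G[X_t]$, so $\alpha(\mathcal{T}')\le\alpha(\mathcal{T})=O(\sqrt n)$. (For $d\ge 4$ one could instead apply \Cref{layeredtreealgoA} to $\mathcal{T}_0$ with its parameter equal to $s-1$ and then rerun the constructive \Cref{sqrttreealphaA}; for $d=2$ one simply uses $\mathcal{T}$ on $G$ directly.) Finally, running the algorithm of \Cref{MWIPA} on $G^{2s-1}$, $\mathcal{H}$, $w$, and $\mathcal{T}'$ --- equivalently, the $n^{O(\tau)}$-time algorithm of \cite{MR22} with $\tau=O(\sqrt n)$ --- returns an optimal distance-$d$ $\mathcal{H}$-packing in $G$ in time $n^{O(\sqrt n)}=2^{O(\sqrt n\log n)}$.

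I do not expect a genuine obstacle, since the statement is essentially a corollary; the only point deserving care is checking that the $O(\sqrt n)$ bound on $\tin(G)$ comes with a polynomial-time routine that actually outputs a tree decomposition of that independence number (so the dynamic programming can be executed), which holds because the argument of \cite[Lemma~10]{DMW17} behind \Cref{sqrttreealphaA} is an elementary, algorithmic surgery on the layered decomposition produced by \Cref{unitdisklayeredA}. The evenness of $d$ is what makes the whole chain go through: it forces $d-1$ to be odd, so that passing to the $(d-1)$-st power preserves bounded (layered) tree-independence number, keeping it $O(\sqrt n)$; no analogue holds for odd $d\ge 3$, where even \textsc{Distance-$d$ Independent Set} is hard to approximate on chordal graphs \cite{EGM14}.
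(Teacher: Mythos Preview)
Your proposal is correct and follows essentially the same approach as the paper: the paper simply observes that unit disk graphs have tree-independence number $O(\sqrt{n})$ (via \Cref{unitdisklayeredA} and \Cref{sqrttreealphaA}) and invokes the $n^{O(k)}$ algorithm of \cite{MR22} for \textsc{Max Weight Distance-$d$ Packing} on graphs of tree-independence number $k$. You spell out more of the internals (the constructive nature of \Cref{sqrttreealphaA}, the passage to $G^{d-1}$ and the transfer of the tree decomposition via \cite[Lemma~6.1]{DaMS22}), but this is exactly what the cited black box does, so the arguments coincide.
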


A subexponential-time algorithm for \textsc{Independent Set} on unit disk graphs was first given in \cite{AF04} and later extended in \cite{BBK20} to intersection graphs of fat objects.

%%%%%%%%%%%%%%%%%%%%%%%%%%%%%%%%%%%%%%%%%%%%%%%%%%%%%%

\subsection{Packing independent unit disks, unit-width rectangles and paths with bounded horizontal part on a grid}\label{sec:improvedA}

The next three results bound the tree-independence number of graphs whose geometric realizations are contained in an axis-aligned rectangle with bounded width and will be later applied to obtain PTASes for the classes mentioned in the title. It is worth noticing that, similarly to \Cref{sec:layeredlemmas}, all tree decompositions considered in this section are in fact path decompositions.

\begin{lemma}\label{pathA}
Let $G$ be an $n$-vertex graph together with a grid representation $\mathcal{R} = (\mathcal{G}, \mathcal{P},x)$ such that $\mathcal{G}$ contains at most $\ell$ columns, for some integer $\ell \geq 1$, and each path in $\mathcal{P}$ has number of bends constant. It is possible to compute in $O(n^2)$ time a tree decomposition $\mathcal{T} = (T, \{X_t\}_{t\in V(T)})$ of $G$ such that $|V(T)| = O(n^2)$ and:    
\begin{itemize}
\item $\alpha(\mathcal{T}) \leq \ell$, if $x = v$;
\item $\alpha(\mathcal{T}) \leq 3\ell - 1$, if $x = e$.  
\end{itemize}
\end{lemma}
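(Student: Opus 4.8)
The plan is to mimic the path-decomposition constructions used in \Cref{unitdisklayeredA} and \Cref{layeredVPGA}, but now exploiting that the whole grid has only $\ell$ columns, so that a single "horizontal sweep" over the rows of $\mathcal{G}$ already yields bounded independence number without needing a layering. Concretely, for each ordered pair $u,v$ with $uv \in E(G)$ I would fix in $O(1)$ time a grid-point (grid-edge, if $x = e$) shared by $P_u$ and $P_v$, take its $y$-coordinate $p_{uv}$, and let $T$ be the vertical path whose nodes are these values $p_{uv}$ ordered by decreasing $y$-coordinate; this gives $|V(T)| = O(n^2)$. As in \Cref{layeredVPGA}, I would first set $X'_t = \{u : p_{uv} = t \text{ for some } v\}$ and then "fill in" each path $P_v$: for the smallest node $t_v^{\min}$ and largest node $t_v^{\max}$ whose current bag contains $v$, add $v$ to every bag strictly between them. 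Verifying (T1)--(T3) is then verbatim the argument of \Cref{layeredVPGA}: (T1) because $G$ has no isolated vertex (we may assume $G$ connected, dealing with components separately), (T2) because an edge's chosen grid-point/edge projects to some node $t = p_{uv}$, and (T3) because the filling-in makes $\{t : v \in X_t\}$ an interval of $T$. The construction is clearly $O(n^2)$.

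The heart of the proof is bounding $\alpha(\mathcal{T})$, and here the two cases $x = v$ and $x = e$ differ. Fix a bag $X_t$ and let $I \subseteq X_t$ be an independent set of $G$. Every $v \in I$ was placed in $X_t$ either because $p_{uv} = t$ for some neighbour $u$ (so $P_v$ meets the row indexed by $t$), or because $t$ lies between $t_v^{\min}$ and $t_v^{\max}$; in the latter case $P_v$ passes through grid-points whose $y$-coordinates straddle $t$, and since $P_v$ is a (connected) path it must actually contain a grid-point on the row indexed by $t$. So in the VPG case ($x = v$), every path $P_v$ with $v \in I$ occupies at least one of the $\le \ell$ grid-points on row $t$; because $I$ is independent, these paths are pairwise non-intersecting and hence occupy distinct grid-points of that row, giving $|I| \le \ell$. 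In the EPG case ($x = e$), the shared objects are grid-edges, so the right statement is that every $P_v$ with $v \in I$ contains a grid-edge incident to the row indexed by $t$ -- that is, either one of the $\le \ell - 1$ horizontal edges on row $t$, or one of the $\le \ell$ vertical edges having an endpoint on row $t$ going "up", or one of the $\le \ell$ vertical edges having an endpoint on row $t$ going "down" (equivalently, the horizontal edges of row $t$ and the vertical edges of the two half-columns adjacent to row $t$). Two independent vertices cannot have paths sharing a grid-edge, so the paths of $I$ use distinct such edges, yielding $|I| \le (\ell - 1) + \ell + \ell = 3\ell - 1$.

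One subtlety to handle carefully is the EPG edge-count: I should double-check that "$P_v$ contains a grid-edge touching row $t$" is the correct replacement for the VPG "grid-point on row $t$" -- the reason it holds is that in an EPG representation two adjacent vertices share a grid-edge, so $p_{uv}$ should be redefined as (the $y$-coordinate of) a shared grid-edge, and then the filling-in argument forces $P_v$, being connected, to traverse an edge incident to row $t$ whenever $t$ is strictly between two of its "active" rows or is itself active. The other routine point is the reduction to connected $G$: if $G$ is disconnected we build a path decomposition for each component and concatenate them into one path, which only increases $|V(T)|$ to $O(n^2)$ and does not change $\alpha(\mathcal{T})$. I expect the edge-counting in the EPG case to be the only place requiring genuine care; everything else is a direct transcription of the proof of \Cref{layeredVPGA} with the layering stripped away and the single-column-count $\ell$ playing the role previously played by $4\ell - 1$.
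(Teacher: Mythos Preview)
Your proposal is correct and follows essentially the same approach as the paper's proof: the same path decomposition via projections $p_{uv}$ onto the $y$-axis with the subsequent filling-in, and the same counting argument over row $t$ (grid-points for VPG, incident grid-edges for EPG). One simplification worth noting: there is no need to redefine $p_{uv}$ in the EPG case, since two paths sharing a grid-edge also share a grid-point, so the exact same construction works for both cases; the EPG bound then follows because any path containing a grid-point on row $t$ must, being a union of grid-edges, contain a grid-edge incident to that row.
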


\begin{proof}We assume without loss of generality that $G$ is connected. For each $u \in V(G)$ and $v \in N_G(u)$, we find an arbitrary grid-point that belongs to both $P_u$ and $P_v$ (since $uv \in E(G)$, such a grid-point exists). Let $p_{uv}$ be the projection of this grid-point onto the $y$-axis. Let $T$ be the vertical path whose nodes are the grid-points of the form $p_{uv}$. Clearly, $|V(T)| = O(n^2)$. Order the nodes of $T$ by decreasing $y$-coordinates of the corresponding grid-points. For each $t \in V(T)$, let $X_{t}' = \{u \in V(G) : p_{uv} = t \ \mbox{for some}\ v \in V(G)\}$. We finally update $\{X_t'\}_{t\in V(T)}$ as follows. For each $v \in V(G)$, let $t_{v}^{\min}$ and $t_{v}^{\max}$ be the smallest and largest node of $T$ such that the corresponding bag contains $v$, respectively. We then add $v$ to all the bags corresponding to nodes between $t_{v}^{\min}$ and $t_{v}^{\max}$. Let $\{X_t\}_{t\in V(T)}$ be the family of bags thus obtained and let $\mathcal{T} = (T, \{X_t\}_{t\in V(T)})$. As in the proof of \Cref{layeredVPGA}, $\mathcal{T}$ is a tree decomposition of $G$ that can be constructed in $O(n^2)$ time. 

We now show that, for each $t \in V(T)$, $\alpha(G[X_t]) \leq \ell$ if $x = v$, whereas $\alpha(G[X_t]) \leq 3\ell-1$ if $x = e$. This would imply that $\alpha(\mathcal{T}) \leq \ell$ and $\alpha(\mathcal{T}) \leq 3\ell - 1$, respectively, thus concluding the proof. Let $I$ be an independent set of $G[X_t]$. Observe that, by construction, for each $v \in I \subseteq X_t$, the path $P_v$ contains a grid-point on the row of $\mathcal{G}$ indexed by $t$. Suppose first that $x = v$ i.e., $G$ is a VPG graph. Then, each grid-point on the row indexed by $t$ belongs to at most one $P_v$ with $v \in I$. Since $\mathcal{G}$ has at most $\ell$ columns, $|I| \leq \ell$. Suppose finally that $x = e$ i.e., $G$ is an EPG graph. Then, since there are $3\ell - 1$ grid-edges containing a grid-point on the row indexed by $t$, we have that $|I| \leq 3\ell - 1$, or else two paths $P_u$ and $P_v$ with $u, v \in I$ share a grid-edge.                   
\end{proof}

The proofs of the following two results are similar to that of \Cref{pathA}.

\begin{lemma}\label{rectangleA}
Let $G$ be the intersection graph of a family of $n$ rectangles together with a grid representation $(\mathcal{G}, \mathcal{R})$ such that $\mathcal{G}$ contains at most $\ell$ columns, for some integer $\ell \geq 1$. It is possible to compute in $O(n^2)$ time a tree decomposition $\mathcal{T} = (T, \{X_t\}_{t\in V(T)})$ of $G$ such that $|V(T)| = O(n^2)$ and $\alpha(\mathcal{T}) \leq \lfloor\frac{\ell}{2}\rfloor$. 
\end{lemma}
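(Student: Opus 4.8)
The plan is to mimic, essentially verbatim, the construction in the proof of \Cref{pathA}, using the rows of the grid $\mathcal{G}$ as the skeleton of a path decomposition, and then to replace the ``bends on a row'' argument by an interval-packing argument adapted to rectangles. As in \Cref{pathA}, I would first assume without loss of generality that $G$ is connected (handling each connected component separately otherwise). For each edge $uv \in E(G)$, the rectangles $R_u$ and $R_v$ intersect, and since their corners lie on $\mathcal{G}$ their intersection is a nonempty axis-aligned rectangle with corners on $\mathcal{G}$; fix such a common grid-point and let $p_{uv}$ be its projection onto the $y$-axis. Let $T$ be the vertical path whose nodes are the grid-points of the form $p_{uv}$, ordered by decreasing $y$-coordinate, so that $|V(T)| = O(n^2)$. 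For each node $t$ set $X'_t = \{u \in V(G) : p_{uv} = t \text{ for some } v \in V(G)\}$, and then, for every $v \in V(G)$, add $v$ to all bags between $t^{\min}_v$ and $t^{\max}_v$ (the smallest and largest nodes of $T$ whose original bag contains $v$). Let $\mathcal{T} = (T, \{X_t\}_{t\in V(T)})$ be the result.

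Exactly as argued in \Cref{pathA}, $\mathcal{T}$ is a tree decomposition of $G$ that can be constructed in $O(n^2)$ time: (T1) holds because $G$ has no isolated vertex, (T2) holds because adjacent rectangles share a grid-point and hence a common node $p_{uv}$, and (T3) holds because by construction the set of nodes whose bag contains a fixed vertex forms a subpath of $T$. This part is routine and I would simply point to \Cref{pathA}.

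It remains to bound $\alpha(\mathcal{T})$. Fix a node $t$ and an independent set $I$ of $G[X_t]$. The key observation is that every $v \in X_t$ has $R_v$ meeting the row of $\mathcal{G}$ at height $t$: the value $t$ lies between $t^{\min}_v$ and $t^{\max}_v$, which are $y$-coordinates of grid-points contained in $R_v$, and $R_v$ is convex, so its intersection with this row is a nondegenerate horizontal segment whose endpoints have $x$-coordinates $a_v < b_v$, each being the $x$-coordinate of a column of $\mathcal{G}$ (as the corners of $R_v$ lie on $\mathcal{G}$). For distinct $u, v \in I$ we have $R_u \cap R_v = \varnothing$, hence the corresponding row-$t$ segments are pairwise disjoint; ordering them from left to right as $[a_1,b_1], \ldots, [a_k,b_k]$ with $k = |I|$, disjointness of closed intervals forces $a_1 < b_1 < a_2 < b_2 < \cdots < a_k < b_k$, which exhibits $2k$ distinct column coordinates of $\mathcal{G}$. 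Since $\mathcal{G}$ has at most $\ell$ columns, $2k \le \ell$, i.e.\ $|I| \le \lfloor \ell/2 \rfloor$. Therefore $\alpha(\mathcal{T}) \le \lfloor \ell/2 \rfloor$, as claimed.

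I do not anticipate a genuine obstacle. The construction and its correctness are identical to \Cref{pathA}; the only new ingredient is the interval-packing bound, and there the sole point requiring care is that rectangle intersections are \emph{closed} regions, so that two rectangles whose row-$t$ segments merely touch at a point are already adjacent. This forces the row-$t$ segments of an independent set to be \emph{strictly} separated, which is exactly what buys the factor $2$ improvement over the naive bound of $\ell-1$ and gives the sharp value $\lfloor \ell/2 \rfloor$.
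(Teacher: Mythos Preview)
Your proposal is correct and follows essentially the same approach as the paper: the construction of the path decomposition is identical to that of \Cref{pathA}, and your interval-packing argument (each rectangle in $X_t$ contributes a closed segment on row $t$ with two distinct column endpoints, and these segments are pairwise disjoint for an independent set) is just a slightly more explicit phrasing of the paper's observation that each such rectangle contains at least two grid-points on row $t$ and no grid-point is shared. Your added justification that $R_v$ actually meets row $t$ via convexity is a nice detail the paper leaves implicit.
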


\begin{proof} We assume without loss of generality that $G$ is connected. For each $u \in V(G)$ and $v \in N_G(u)$, we find an arbitrary grid-point that belongs to both rectangles $R_u$ and $R_v$ (since $uv \in E(G)$, such a grid-point exists). Let $p_{uv}$ be the projection of this grid-point onto the $y$-axis and let $T$ be the vertical path whose nodes are the grid-points of the form $p_{uv}$. We then build $\mathcal{T} = (T, \{X_t\}_{t\in V(T)})$ precisely as in the proof of \Cref{pathA}. It is again easy to see that $\mathcal{T}$ is a tree decomposition of $G$ that can be constructed in $O(n^2)$ time.  

We now show that, for each $t \in V(T)$, $\alpha(G[X_t]) \leq \lfloor\frac{\ell}{2}\rfloor$. This would imply that $\alpha(\mathcal{T}) \leq \lfloor\frac{\ell}{2}\rfloor$, thus concluding the proof. Let $I$ be an independent set of $G[X_t]$. Observe that, by construction, for each $v \in I \subseteq X_t$, the rectangle $R_v$ contains at least two grid-points on the row of $\mathcal{G}$ indexed by $t$, and each grid-point on the row indexed by $t$ belongs to at most one $R_v$ with $v \in I$. Since $\mathcal{G}$ has at most $\ell$ columns, $I \leq \lfloor\frac{\ell}{2}\rfloor$. 
\end{proof}

\begin{lemma}\label{diskA}
Let $G$ be the intersection graph of a family $\mathcal{D}$ of $n$ unit disks of common radius $c \geq 1$ such that its geometric realization is contained in an axis-aligned rectangle with integral vertices and width at most $\ell-1$, for some integer $\ell \geq 1$. It is possible to compute in $O(n)$ time a tree decomposition $\mathcal{T} = (T, \{X_t\}_{t\in V(T)})$ of $G$ such that $|V(T)| = O(n)$ and $\alpha(\mathcal{T}) \leq 2\lceil\frac{\ell}{c}\rceil$. 
\end{lemma}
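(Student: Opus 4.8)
The plan is to adapt the horizontal strip-slicing argument of \Cref{unitdisklayeredA} to the present, bounded-width setting. As in the earlier proofs we may assume $G$ is connected (otherwise we build a path decomposition of each connected component and glue them together through a shared empty bag, exactly as in the proof of \Cref{layeredtofragileA}); under this assumption the union of the disks is connected and hence has vertical extent $O(cn)$, so after translating we may take the realization to lie inside $[0,\ell-1]\times[0,H]$ with $H=O(cn)$.

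First I would define, for $j\ge 1$, the $j$-th horizontal strip $S_j=\mathbb{R}\times[(j-1)c,\,jc]$ (so strips have height exactly $c$), let $T$ be the path on nodes $t_1,\dots,t_{\lceil H/c\rceil}$ ordered by $j$, and set $X_{t_j}=\{v\in V(G):D_v\cap S_j\neq\varnothing\}$. Since each disk has vertical extent $2c$ it meets at most three consecutive strips, so every vertex is placed into $O(1)$ bags in $O(1)$ time from its center coordinates, $|V(T)|=O(n)$, and the whole construction runs in $O(n)$ time. That $\mathcal{T}=(T,\{X_{t_j}\})$ is a (path) decomposition of $G$ is verified verbatim as in \Cref{unitdisklayeredA}: every disk meets some strip, so (T1) holds; any point of $D_u\cap D_v$ with $uv\in E(G)$ lies in some strip, giving (T2); and the strips met by a fixed disk form an interval of integers since a disk is convex, hence has an interval as vertical projection, giving (T3).

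The remaining point is the bound on $\alpha(\mathcal{T})$. Fix $j$ and an independent set $I$ of $G[X_{t_j}]$, so $\{D_v:v\in I\}$ is a family of pairwise disjoint closed disks of radius $c$, each meeting $S_j$ and each contained in $[0,\ell-1]\times\mathbb{R}$. Consequently every such disk has its center at $x$-coordinate in the interval $[c,\ell-1-c]$ of length $\ell-1-2c$ (in particular $\ell-1\ge 2c$ whenever $I\neq\varnothing$), and at $y$-coordinate within distance $c$ of $S_j$, hence inside one of the three strips $S_{j-1},S_j,S_{j+1}$. I would partition $I$ by which of these three strips contains the relevant center. Two disks in the same part have centers with vertical gap at most $c$, while disjoint closed disks of radius $c$ have centers at distance strictly more than $2c$; therefore their centers have horizontal gap strictly more than $\sqrt{(2c)^2-c^2}=\sqrt{3}\,c$. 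Hence each part has at most $\lfloor(\ell-1-2c)/(\sqrt{3}\,c)\rfloor+1$ elements, so $|I|\le 3\big(\lfloor(\ell-1-2c)/(\sqrt{3}\,c)\rfloor+1\big)$, and a short estimate (using that the coefficient $\sqrt{3}-2$ is negative) shows this quantity is at most $2\lceil\ell/c\rceil$.

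The one genuinely delicate step is this last packing estimate: a crude area argument is off by a constant factor, and squeezing the bound down to $2\lceil\ell/c\rceil$ requires using \emph{both} that centers of disjoint radius-$c$ disks are more than $2c$ apart \emph{and} that these centers are confined to the narrow interval $[c,\ell-1-c]$ rather than to the full width $\ell-1$; the rest is routine and mirrors \Cref{pathA} and \Cref{unitdisklayeredA}.
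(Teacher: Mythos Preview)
Your construction of the path decomposition via horizontal strips of height $c$ is exactly the paper's construction, and the verification of (T1)--(T3) and the $O(n)$ running time are identical. The proof is correct.

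The only difference lies in the packing estimate for $\alpha(G[X_{t_j}])$. You partition $I$ according to which of the three rows $S_{j-1},S_j,S_{j+1}$ contains the center and use that centers within one row are horizontally $>\sqrt{3}\,c$ apart, together with the fact that centers are confined to the interval $[c,\ell-1-c]$; the inequality $3\big(\lfloor(\ell-1-2c)/(\sqrt{3}c)\rfloor+1\big)\le 2\lceil \ell/c\rceil$ then follows since $\sqrt{3}<2$. The paper instead partitions $I$ according to which of the $\lceil \ell/c\rceil$ vertical strips of width $c$ contains the center, observes that each $c\times c$ cell holds at most one center (diagonal $<2c$), and then shows by a short distance argument that at most two of the three relevant cells in a fixed column can be occupied. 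Both arguments are elementary and yield the same bound $2\lceil \ell/c\rceil$; the paper's column-count avoids the final numerical inequality, while yours avoids the ``three-in-a-column'' case analysis.
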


\begin{proof}
We assume without loss of generality that $G$ is connected and that the collection of disks is contained in the positive quadrant. For each integer $j > 0$, let $R_j = \{(x,y) \in \mathbb{R}^2: c(j-1) \leq y \leq cj\}$ and $C_j = \{(x,y) \in \mathbb{R}^2: c(j-1) \leq x \leq cj\}$ be the \textit{$j$-th horizontal strip} and the \textit{$j$-th vertical strip}, respectively. Note that for every $v \in V(G)$, the disk $D_v$ intersects at most three horizontal strips and at most three vertical strips. In particular, $m = |\{j \in \mathbb{N}: R_j \cap \{D_v:v \in V(G)\} \neq \varnothing\}| \leq 3n$. Without loss of generality, we may assume that the first $m$ horizontal strips are non-empty, that is, $R_j \cap \{D_v:v \in V(G)\} \neq \varnothing$ for every $j \in [m]$. Now by assumption $|\{j \in \mathbb{N}: C_j \cap \{D_v:v \in V(G)\} \neq \varnothing\}| \leq \lceil \frac{\ell}{c} \rceil$ and we may similarly assume that the non-empty vertical strips are the first $\lceil \frac{\ell}{c}\rceil$ ones. We now construct a tree decomposition of $G$ as follows. Let $T$ be a path on $m$ vertices $t_1,\ldots,t_m$ and let $X_{t_j} = \{v \in V(G): D_v \cap R_j \neq \varnothing\}$ for every $j \in [m]$. Clearly, for each $v \in V(G)$, there exists $j \in [m]$ such that $v \in X_{t_j}$. Consider now $uv \in E(G)$. Then, there exists a point $(x,y) \in  \mathbb{R}^2$ contained in both $D_v$ and $D_u$ and so $\{u,v\} \subseteq X_{t_{\lceil \frac{y}{c} \rceil}}$. Finally, for every $v \in V(G)$, the horizontal strips intersecting $D_v$ are consecutive and so the set of nodes of $T$ whose bag contains $v$ induces a subpath. Therefore, $\mathcal{T} = (T,\{X_{t_j}\}_{1 \leq j \leq m})$ is a tree decomposition of $G$ and it is easy to see that this tree decomposition can be computed in linear time. 

We now claim that $\alpha(\mathcal{T}) \leq 2 \lceil \frac{\ell}{c}\rceil$. Indeed, consider $j \in [m]$ and let $I \subseteq X_{t_j}$ be an independent set of $G[X_{t_j}]$. We show that, for each $i \in [\lceil \frac{\ell}{c} \rceil]$, there are at most two vertices $v \in I$ such that $C_i$ contains the center of $D_v$. To this end, for every $p,q \in \mathbb{N}$, denote by $B(p,q) = \{(x,y) \in \mathbb{R}^2: c(p-1) \leq x \leq cp \text{ and } c(q-1) \leq y \leq cq\}$ the box at the intersection of the $p$-th vertical strip and $q$-th horizontal strip. Observe that, for any $p,q \in \mathbb{N}$, $B(p,q)$ contains the center of at most one disk in $\{D_v: v \in I\}$, as any two points in $B(p,q)$ are at distance at most $\sqrt{2} c$. Consider now $i \in \lceil \frac{\ell}{c} \rceil$. Since for any disk $D \in \{D_v: v \in I\}$ whose center $c$ is contained in $C_i$, there exists $p \in \{j-1,j,j+1\}$ such that $c \in B(i,p)$, it follows that $C_i$ contains the center of at most three disks in $\{D_v: v \in I\}$. Now suppose, to the contrary, that $C_i$ contains the center of three disks in $\{D_v: v \in I\}$, say $c_{j-1} = (x_{j-1},y_{j-1}), c_j = (x_j,y_j)$ and $c_{j+1}=(x_{j+1},y_{j+1})$, where $c_p \in B(i,p)$ for each $p \in \{j-1,j,j+1\}$. Then, either $y_j - y_{j-1} \leq 3c/2$ or $y_{j+1} - y_j \leq 3c/2$ and, assuming without loss of generality that the latter holds, we conclude that $(x_{j+1} - x_j)^2 + (y_{j+1} - y_j)^2 \leq c^2(1 + 9/4) < 4c^2$, a contradiction to the fact that $I$ is an independent set. Therefore $|I| \leq 2 \lceil \frac{\ell}{c}\rceil$ and so $\alpha(\mathcal{T}) \leq 2 \lceil \frac{\ell}{c}\rceil$ as claimed.
\end{proof}

\begin{theorem}\label{indepPTASA} Let $c \geq 1$ be an integer constant. \textsc{Max Weight Independent Set} admits a PTAS when restricted to $n$-vertex graphs with a grid representation $\mathcal{R} = (\mathcal{G}, \mathcal{P},x)$ such that: 
\begin{enumerate}
\item each path in $\mathcal{P}$ has number of bends constant;
\item\label{3rdA} the horizontal part of each path in $\mathcal{P}$ has length at most $c$.
\end{enumerate}
If $x = v$, the running time is $O(c\lceil\frac{1}{\varepsilon}\rceil \cdot n^{\lceil\frac{1}{\varepsilon}\rceil c + 4})$. If $x = e$, the running time is $O(c\lceil\frac{1}{\varepsilon}\rceil \cdot n^{3(\lceil\frac{1}{\varepsilon}\rceil c + 1)})$. 
\end{theorem}

\begin{proof}
Let $G$ be a graph with a grid representation $\mathcal{R} = (\mathcal{G}, \mathcal{P},x)$ satisfying the three conditions above. Without loss of generality, we may assume that all the paths in $\mathcal{P}$ contain only grid-points with non-negative coordinates. Moreover, we may assume that $G$ is connected. Therefore, no column in $\mathcal{G}$ is unused and so $\mathcal{G}$ has at most $(c+1)n$ columns. Further note that since any path $P \in \mathcal{P}$ has number of bends constant, we can compute the horizontal part $h(P)$ of $P$ in $O(1)$ time. Given $0 < \varepsilon < 1$, we fix $k = \lceil 1/\varepsilon \rceil$. 

For any $i \geq 0$, we denote by $X_i$ the set of vertices whose corresponding path contains a grid-edge $[(i,j),(i+1,j)]$ for some $j \geq 0$ (here and in the following $[(i,j),(i+1,j)]$ denotes the grid-edge with endpoints $(i,j)$ and $(i+1,j)$). Note that we can compute the at most $(c+1)n -1$ non-empty sets $X_{i}'s$ in $O(n)$ time. In view of applying a shifting technique, we now partition $G$ into slices via the following. For any $d \in \{0, \ldots, kc-1\}$, let $V_d = \bigcup_{\ell \in \mathbb{N}_0} X_{d + \ell kc}$ be the set of vertices whose corresponding path contains a grid-edge $[(d +\ell kc,j),(d+\ell kc +1,j)]$ for some $\ell,j \in \mathbb{N}$. We claim that, for any $d \in \{0,\ldots,kc-1\}$, $G - V_d$ is disconnected. Indeed, after deleting $V_d$, no vertex whose horizontal part is contained in the interval $[0, d + \ell kc]$ can be adjacent to a vertex whose horizontal part is contained in the interval $[d + \ell kc+1, (c+1)n]$. Similarly, every component of $G - V_d$ admits a grid representation in which the number of columns is bounded by $kc$. By \Cref{pathA} and \Cref{MWIPA}, for each component of $G - V_d$, we compute a maximum-weight independent set in $O(n^{kc+3})$ time, if $x = v$, or in $O(n^{3kc+2})$ time, if $x = e$. The union $U_d$ of these independent sets over the components of $G - V_d$ is then an independent set of $G$ and, after repeating the procedure above for each $d \in \{0,\ldots, kc-1\}$, we return the maximum-weight set $U$ among the $U_d$'s. The total running time is then $O(kc \cdot n^{kc+4})$, if $x = v$, or $O(kc \cdot n^{3kc+3})$, if $x = e$.  
 
It remains to show that $w(U) \geq (1 - \varepsilon)w(\mathsf{OPT})$, where $\mathsf{OPT}$ denotes an optimal solution of \textsc{Max Weight Independent Set} with instance $G$. Note that, for any $d \in \{0,\ldots , kc-1\}$, $\mathsf{OPT} \cap V_d$ is the set of vertices in $\mathsf{OPT}$ whose corresponding path contains a grid-edge $[(d +\ell kc,j),(d+\ell kc +1,j)]$ for some $\ell,j \in \mathbb{N}_0$. Since the horizontal part of each path has length at most $c$, we have that every vertex in $\mathsf{OPT}$ belongs to at most $c$ distinct $V_d$'s. Therefore, denoting by  $d_0$ the index attaining $\min_{d \in \{0, \ldots, kc-1\}} w(\mathsf{OPT} \cap V_d)$, we have 
\[
kc\cdot w(\mathsf{OPT} \cap V_{d_0}) \leq \sum_{d=0}^{kc-1} w(\mathsf{OPT} \cap V_d) \leq c\cdot w(\mathsf{OPT})
\] and so 
\[
w(\mathsf{OPT}) = w(\mathsf{OPT}\setminus V_{d_0}) + w(\mathsf{OPT} \cap V_{d_0}) \leq w(U) + \varepsilon \cdot w(\mathsf{OPT}),
\]
thus concluding the proof.
\end{proof}

\begin{theorem}\label{rectanglesPTASA} \textsc{Max Weight Independent Set} admits a PTAS when restricted to:
\begin{itemize}
\item Intersection graphs of a family of $n$ width-$c$ rectangles together with a grid representation $(\mathcal{G}, \mathcal{R})$. The running time is $O(c\lceil \frac{1}{\varepsilon} \rceil \cdot n^{\lceil \frac{1}{\varepsilon} \rceil\cdot\frac{c}{2}+4})$.
\item Intersection graphs of a family $\mathcal{D}$ of $n$ unit disks of common radius $c \geq 1$. The running time is $O(c\lceil \frac{2}{\varepsilon} \rceil \cdot n^{2\lceil \frac{2}{\varepsilon} \rceil+3})$. 
\end{itemize}
\end{theorem}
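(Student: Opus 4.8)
The plan is to mimic the shifting argument in the proof of \Cref{indepPTASA}, using \Cref{rectangleA} in place of \Cref{pathA} for the first item and \Cref{diskA} for the second. Fix $0<\varepsilon<1$ and put $k=\lceil 1/\varepsilon\rceil$ in the rectangle case and $k=\lceil 2/\varepsilon\rceil$ in the disk case. As usual we may assume that $G$ is connected (and, for the disk case, that $c$ is a positive integer, rescaling the whole realization if necessary). Then in the first case $\mathcal{G}$ has $O(cn)$ columns, and in the second the realization lies in the positive quadrant and has $x$-extent $O(cn)$; note that a width-$c$ rectangle spans exactly $c$ columns and a radius-$c$ disk has $x$-extent exactly $2c$.

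Next I would cut the plane with a family of equally spaced vertical lines, spaced $ck$ apart, and delete the objects meeting them. For $d\in\{0,\dots,ck-1\}$: in the rectangle case let $V_d$ be the set of rectangles whose $x$-range contains a point $i+\tfrac12$ with $i\equiv d\pmod{ck}$; in the disk case fix a generic offset $\delta\in(0,1)$ and let $V_d$ be the set of disks whose interior meets a line $x=i+\delta$ with $i\equiv d\pmod{ck}$. Genericity of $\delta$, together with a measure-zero perturbation of the disk centres, guarantees that no disk is tangent to a cut line and that two disks on opposite sides of a cut line with disjoint $x$-ranges are non-adjacent; for rectangles no such care is needed, since the cut lines avoid integer coordinates. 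Consequently, for each fixed $d$, every connected component of $G-V_d$ has a realization lying strictly between two consecutive cut lines of $V_d$, hence inside an axis-aligned strip of width $ck$ (here we use $ck\ge 2c$, which holds because $k\ge 2$ in the disk case and trivially otherwise). Applying \Cref{rectangleA} (resp.\ \Cref{diskA}, after translating the strip to have integral vertices) to each component, we obtain in $O(n^2)$ (resp.\ $O(n)$) total time a path decomposition of every component with independence number $O(ck)$ (resp.\ $O(k)$); we then run \Cref{MWIPA} with $h=1$ on each component, let $U_d$ be the union of the returned maximum-weight independent sets, and finally output the best of $U_0,\dots,U_{ck-1}$.

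For the approximation guarantee I would reuse the counting in \Cref{indepPTASA}: a width-$c$ rectangle crosses exactly $c$ lines $x=i+\tfrac12$, with consecutive indices, hence lies in at most $c$ of the sets $V_0,\dots,V_{ck-1}$; similarly a radius-$c$ disk, having $x$-extent $2c$, lies in at most $2c$ of them. Thus, if $\mathsf{OPT}$ is a maximum-weight independent set and $d_0$ minimises $w(\mathsf{OPT}\cap V_d)$, then $ck\cdot w(\mathsf{OPT}\cap V_{d_0})\le\sum_{d=0}^{ck-1}w(\mathsf{OPT}\cap V_d)\le c\cdot w(\mathsf{OPT})$ in the rectangle case and $\le 2c\cdot w(\mathsf{OPT})$ in the disk case, so that $w(\mathsf{OPT}\cap V_{d_0})\le\tfrac1k w(\mathsf{OPT})\le\varepsilon\,w(\mathsf{OPT})$, respectively $\le\tfrac2k w(\mathsf{OPT})\le\varepsilon\,w(\mathsf{OPT})$, by the choice of $k$. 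Since $\mathsf{OPT}\setminus V_{d_0}$ is an independent set of $G-V_{d_0}$ and $U_{d_0}$ is computed optimally on each component, $w(U_{d_0})\ge w(\mathsf{OPT})-w(\mathsf{OPT}\cap V_{d_0})\ge(1-\varepsilon)w(\mathsf{OPT})$. The stated running times then follow by multiplying the $O(ck)$ choices of $d$ by the per-component cost of \Cref{MWIPA} and computing the resulting exponent of $n$.

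I expect the slicing-and-merging bookkeeping to be routine. The two delicate points are: the general-position assumptions needed in the disk case so that deleting $V_d$ genuinely disconnects $G$ (this cannot be arranged purely combinatorially, because the disk centres are arbitrary reals), and pinning down the exact width of the strips fed into \Cref{rectangleA}/\Cref{diskA} so that the independence numbers obtained there — and hence the exponent of $n$ — match the statement; it is precisely this trade-off between strip width (running time) and the number of slices an object meets (approximation ratio) that dictates the choices $k=\lceil 1/\varepsilon\rceil$ and $k=\lceil 2/\varepsilon\rceil$.
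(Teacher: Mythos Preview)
Your proposal is correct and follows the same shifting strategy as the paper: partition the horizontal axis into strips of width $ck$, delete objects crossing strip boundaries, solve each strip optimally via \Cref{rectangleA}/\Cref{diskA} and \Cref{MWIPA}, and return the best over all $ck$ shifts.

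The one difference worth noting is how the cuts are made. You slice along vertical \emph{lines} $x=i+\tfrac12$ (or $x=i+\delta$) and therefore need, in the disk case, a genericity-plus-perturbation argument to guarantee that removing $V_d$ really separates the strips. The paper instead lets $X_i$ be the set of objects whose horizontal projection meets the unit \emph{interval} $[i,i+1]$ and sets $V_d=\bigcup_{\ell}X_{d+\ell kc}$. With this definition, any two surviving objects assigned to different strips have $x$-projections separated by a gap of length at least $1$, hence are automatically non-adjacent; no general-position hypothesis is needed, and the perturbation of disk centres (which could in principle alter the combinatorial structure of $G$) is avoided altogether. The counting for the approximation bound is also immediate from this definition. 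Apart from this cosmetic simplification, the two arguments are the same, and your running-time accounting matches the paper's.
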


\begin{proof} Since the PTASes are similar, we introduce some common notation. Let $G$ be the intersection graph of the family $\mathcal{O}$, where $\mathcal{O}$ is either $\mathcal{R}$ or $\mathcal{D}$. Without loss of generality, we may assume that all objects in $\mathcal{O}$ are contained in the positive quadrant. Moreover, we may assume that $G$ is connected. Therefore, $\mathcal{O}$ is contained in a grid $\mathcal{G}$ with $O(n)$ columns. For each $O \in \mathcal{O}$, we compute the horizontal part $h(O)$ of $O$ (i.e., the projection of $O$ onto the $x$-axis) in $O(1)$ time. Given $0 < \varepsilon < 1$, we fix $k = \lceil 1/\varepsilon \rceil$, in the case of rectangles, and $k = \lceil 2/\varepsilon \rceil$, in the case of disks. Let $X_i$ be the set of vertices whose corresponding objects have horizontal part intersecting the grid-edge $[(i,0),(i+1,0)]$. We can compute the $X_i$'s in $O(n)$ time. We now partition $G$ into slices as follows. For any $d \in \{0, \ldots, kc-1\}$, let $V_d = \bigcup_{\ell \in \mathbb{N}_0} X_{d + \ell kc}$. As in the proof of \Cref{indepPTASA}, it is easy to see that, for any $d \in \{0,\ldots,kc-1\}$, $G - V_d$ is disconnected and that every component of $G - V_d$ admits a geometric realization which is contained in an axis-aligned rectangle with integral vertices and width at most $kc - 1$. If the family consists of rectangles, by \Cref{rectangleA} and \Cref{MWIPA}, for each component of $G - V_d$, we compute a maximum-weight independent set in $O(n^{\frac{kc}{2} + 3})$ time. If the family consists of disks, by \Cref{diskA} and \Cref{MWIPA}, for each component of $G - V_d$, we compute a maximum-weight independent set in $O(n^{2k + 2})$ time. In either case, the union $U_d$ of these independent sets over the components of $G - V_d$ is an independent set of $G$ and, after repeating the procedure above for each $d \in \{0,\ldots, kc-1\}$, we return the maximum-weight set $U$ among the $U_d$'s. The total running time is then $O(kc \cdot n^{\frac{kc}{2}+4})$, in the case of rectangles, and $O(kc \cdot n^{2k + 3})$, in the case of disks. Similarly to \Cref{indepPTASA}, it is easy to see that $w(U) \geq (1 - \varepsilon)w(\mathsf{OPT})$, where $\mathsf{OPT}$ denotes an optimal solution of \textsc{Max Weight Independent Set} with instance $G$. 
\end{proof}

\bibliography{references}

\end{document}